\theoremstyle{plain}
\newtheorem{theorem}{Theorem}[section]
\newtheorem{lemma}[theorem]{Lemma}
\newtheorem{prop}[theorem]{Proposition}
\theoremstyle{definition}
\journal{Finite Fields and Their Applications}
\begin{document}

\begin{frontmatter}



\title{Some binary BCH codes with length $n=2^m+1$}

\author[a,b]{Yang Liu}
\author[a]{Ruihu Li$^*$ }
\ead{liu\_yang10@163.com, llzsy110@163.com}
\author[a]{ Qiang Fu}
\author[a]{Liangdong Lu}
\author[a]{Yi Rao}
\address[a]{\small Department of Basic, Air Force Engineering University\\
\small Xi'an, Shaanxi 710051, P. R. China}
\address[b]{\small The first Aviation College, Air Force Engineering University\\
\small Xinyang, Henan 464000, P. R. China}
\begin{abstract}
Under research for near sixty years,
Bose-$\!$Ray-$\!$Chaudhuri-$\!$Hocquenghem(BCH) codes have played
increasingly important roles in many applications such as
communication systems, data storage and information security.
However, the dimension and minimum distance of BCH codes are seldom
solved  until now because of their intractable characteristics. The
objective of this paper is to study the dimensions of some  BCH
codes of length $n=2^m+1$ with $m=2t+1, 4t+2, 8t+4$ and $m\geq 10$.
Some new techniques   are employed to investigate coset leaders
modulo $n$. For each type of $m$ above, the first five largest coset
leaders modulo $n$ are determined, the dimension  of some BCH codes
of length $n$ with designed distance $\delta>2^{\lceil \frac{m}{2}
\rceil}$ is presented. These new techniques and results may be
helpful to study other families of cyclic codes over finite fields.
\end{abstract}

\begin{keyword}
 BCH code \sep cyclotomic coset\sep cyclic code \sep
LCD code
\end{keyword}

\end{frontmatter}


\section{Introduction}
\label{sec1} Let $q$ be a prime power,   $\mathbb{F}_{q}$ be the
finite field with $q$ elements.  A  linear $[n,k,d]$ code
$\mathcal{C}$ over $\mathbb{F}_{q}$  is a $k$-dimensional subspace
of $\mathbb{F}^{n}_{q}$ with minimum  (Hamming) distance $d$. A
linear code $\mathcal{C}$ is cyclic if
  $(c_{0},c_{1},\cdots,c_{n-1}) \in \mathcal{C}$  implies
  $(c_{n-1},c_{0},c_{1},\cdots,c_{n-2})$ $\in$ $\mathcal{C}$.
There is a bijective correspondence between  vectors $
(c_{0},c_{1},\cdots,c_{n-1})$ $\in \mathbb{F}^{n}_{q}$ and
  polynomials $ c_{0}+c_{1}x+\cdots+c_{n-1}x^{n-1}$ $\in \mathbb{F}_{q}[x]/(x^{n}-1),$
  then a cyclic code $\mathcal{C}$ is identified with an ideal of
$\mathbb{F}_{q}[x]/(x^{n}-1)$. Since each ideal of
$\mathbb{F}_{q}[x]/(x^{n}-1)$ is  principal, a cyclic  code
$\mathcal{C}$ $=(g(x))$, where $g(x)$ is monic and has the smallest
degree among all the generators of $\mathcal{C}$, and $g(x)$ is
called the {\it generator polynomial} of  $\mathcal{C}$.

If $\gcd(q,n)=1$, a {\it $q$-cyclotomic coset} modulo $n$ containing
$x$ is defined by
$$C_{x}=\{{x,xq,xq^{2},...,xq^{e-1}}\} \bmod  n,$$
where $e$ is the smallest positive integer such that $q^{e}x\equiv x
 \bmod   n$,  the cardinality of $C_{x}$ is denoted by $|C_{x}|=e$.
The smallest integer in $C_{x}$ is called  the {\it coset leader} of
$C_{x}$ modulo $n$.

Let $\gcd(q,n)=1$. If $\xi$ is a primitive $n$-th root of unity in
some finite field containing $\mathbb{F}_{q}$, $T$$=\{i \mid
g(\xi^{i})=0\}$ is called the {\it defining set} of  $\mathcal{C}$
$=(g(x))$. It is well known that $T$  is  the union of some
$q$-cyclotomic   cosets modulo $n$, the dimension $k$ of
$\mathcal{C}$ is determined by $k=n-|T|$, and the minimum distance
$d$ is also determined by $T$. If $T=C_{b}\cup C_{b+1}\cup \cdots
\cup C_{b+\delta-2}$, then $\mathcal{C}$ is called a BCH code of
designed distance $\delta$, and it can be denoted by
$\mathcal{C}(n,q,\delta,b)$ as \cite{Ding5}-\cite{Ding8}. If $b=1$,
$\mathcal{C}$ is called   {\it narrow-sense};  otherwise,   {\it
non-narrow-sense}. If $n=q^{m}-1$, $\mathcal{C}$ is called  {\it
primitive};
  otherwise {\it nonprimitive}.
Particularly, if $n=q^{m}+1$,  it is called an  {\it antiprimitive}
BCH code by  Ding in \cite{Ding5}.

 Since   binary BCH codes were
discovered  independently around 1960 by Hocquenghem
\cite{Hocquenghem},  Bose and Ray-Chaudhuri
\cite{Bose1},\cite{Bose2}, they have been studied and employed
widely in practice \cite{Ding5}.  However,  limited knowledge about
BCH codes is still  limited. For example, for the  given finite
field, the dimension and minimum distance of a BCH code are known
only for special code lengths and designed distances, while open in
general. As pointed out by Charpin in \cite{Char} and Ding in
\cite{Ding5}, it is notoriously hard  to generally determine
 the dimension  and minimum distance  of BCH codes. For details on
  advancement about BCH codes, please see \cite{Mann}-\cite{Ding8}
   and the references therein.

 For general $n$ and $\delta$,  little is known about the parameters
of $\mathcal{C}(n,q,\delta,b)$, only lower bounds on the dimension
and minimum distance have been developed, see
\cite{Mann}-\cite{Ding8} and the next section. Primitive BCH codes
are the most studied among all types of BCH codes. Ding {\it et al.}
deeply investigated parameters of primitive BCH codes, determined
their dimension   and gave very good bounds on the minimum distance
of   codes for
  relatively small $\delta$ and  special types of $\delta$
\cite{Ding1}-\cite{Ding8}. They also determined the first, the
second  and third largest $q$-cyclotomic coset leaders $\delta_{1}$,
$\delta_{2}$, $\delta_{3}$ modulo $n=q^{m}-1$, and completely
presented the dimensions, minimum distances and weight distributions
of $\mathcal{C}(n,q,\delta,1)$ and $\mathcal{C}(n,q,\delta+1,0)$ for
$\delta=\delta_{i}$ with $i=1,2,3$. For more details, please see
\cite{Ding4} and the references therein.

 Recently, scholars have paid  much attention to antiprimitive  BCH
codes $\mathcal{C}(n,q,\delta,b)$ for $n=q^m+1$. As pointed by
\cite{Ding7} \cite{Ding8}, antiprimitive BCH codes  are all linear
codes with complementary dual (LCD) and contain many good codes. LCD
codes  can be used against side-channel attacks and fault
noninvasive attacks \cite{Carlet}.
 Ding, Li and  Liu {\it et al.} acquired some achievements about the
parameters of    BCH codes of length $n=q^{m}+1$ for designed
distance $\delta \leq q^{\lfloor \frac{m-1}{2} \rfloor}+3$ in
\cite{Ding7} and  $\delta \leq q^{\lceil \frac{m}{2} \rceil}$ in
\cite{Ding8}, respectively. Ding pointed out that it is very
significative  to find the second and third largest coset leaders
modulo $n=q^m+1$, which is helpful to deduce parameters of BCH codes
\cite{Ding5}.

The main objective of this paper is to determine the first five
largest 2-cyclotomic coset leaders  modulo $n=2^m+1$   and calculate
the dimensions of $\mathcal{C}(2^m+1,2,\delta,1)$ and
$\mathcal{C}(2^m+1,2,\delta+1,0)$ for a larger range of $\delta$
when $m=2t+1, 4t+2, 8t+4$ and $m\geq 10$. In Section 2, basic
concepts on LCD codes, some known results on BCH codes  are
reviewed. In Sections 3, 4 and 5, the parameters of BCH codes of
length $n=2^m+1$ with designed distance $\delta>2^{\lceil
\frac{m}{2} \rceil}$ are determined for $m=2t+1$, $m=4t+2$ and
$m=8t+4$, respectively. The final conclusion is drawn in Section 6.

\section{Preparation }
\label{sec2}
 In this section, we recall the  basic knowledge on  BCH codes  and LCD codes.
  For more details, one can refer to Refs. \cite{mac} and \cite{huf}.

   If $\mathcal{C}$ is a linear code of length $n$ over
$\mathbb{F}_{q}$, its Euclidean dual code is defined by
$$\mathcal{C}^{\perp }=\{ X \in
\mathbb{F}_{q}^n \mid(X, Y)=X Y^{T}=0 \ \hbox{for all}\ \ Y \in
\mathcal{C} \},$$  where $Y^{T}$ denotes  the transposition of the
vector $Y=(y_{1},y_{2},..., y_{n})$.
  A linear code $\mathcal{C}$ is called an {\it LCD} code  if $\mathcal{C}^{\perp
}\bigcap\mathcal{C}=\{\bf 0\}$, which is equivalent to
$\mathcal{C}^{\perp }\bigoplus\mathcal{C}=\mathbb{F}_q^n$.

A linear code $\mathcal{C}$ is called  {\it reversible} if
$(c_{0},c_{1},\cdots,c_{n-1}) \in \mathcal{C}$ implies that
$(c_{n-1},c_{n-2},\cdots,c_{0}) \in \mathcal{C}$. Hence, a cyclic
code  $\mathcal{C}$ with generator polynomial $g(x)$ is LCD if and
only if $g(x)$ is self-reciprocal or $\mathcal{C}$ is reversible.

BCH codes have a long history  and are   the best linear codes known
when their lengths are moderate. However, it is challenging to
determine their parameters in general. The following lower bounds on
the dimension of  BCH codes are well known, see \cite{mac},
\cite{huf}.
\begin{prop}\label{prop2.1}
  Let $k$ be the dimension of
$\mathcal{C}(n,q,\delta,b)$. Then

(1) $k \geq n-ord_{n}(q)(\delta-1)$.

(2) If $q=2$, $b=1$ and $\delta$ is odd, then $k \geq
n-ord_{n}(q)(\delta-1)/2$.
\end{prop}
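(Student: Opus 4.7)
Both bounds reduce to upper-bounding the size of the defining set $T=C_{b}\cup C_{b+1}\cup\cdots\cup C_{b+\delta-2}$, since as noted earlier in Section \ref{sec1} one has $k=n-|T|$. So I would frame the entire proof as a counting argument on $T$.

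For part (1), the plan is to use the elementary fact that every $q$-cyclotomic coset $C_{x}$ satisfies $|C_{x}|\leq \mathrm{ord}_{n}(q)$; indeed $|C_{x}|$ divides $\mathrm{ord}_{n}(q)$, because $q^{\mathrm{ord}_{n}(q)}x\equiv x\pmod n$ forces the minimal exponent $e=|C_{x}|$ to divide $\mathrm{ord}_{n}(q)$. Since $T$ is a union of exactly $\delta-1$ cosets (possibly with coincidences), subadditivity of cardinality gives $|T|\leq (\delta-1)\,\mathrm{ord}_{n}(q)$, which immediately yields (1). There is no real obstacle here beyond the standard facts about cyclotomic cosets.

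For part (2), the plan is to exploit the identity $C_{2i}=C_{i}$ specific to $q=2$, which holds because $2i$ is obtained from $i$ by a single multiplication by $q=2$ and therefore lies in the orbit of $i$. With $b=1$ and $\delta$ odd, I would partition the index set $\{1,2,\ldots,\delta-1\}$ into the $(\delta-1)/2$ odd elements $\{1,3,\ldots,\delta-2\}$ and the $(\delta-1)/2$ even elements $\{2,4,\ldots,\delta-1\}$. For each even index $2j$, iterate $C_{2j}=C_{j}$: since halving strictly decreases a positive integer, the recursion terminates at an odd integer $m$ with $m\leq 2j\leq \delta-1$, so $m\in\{1,3,\ldots,\delta-2\}$. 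Hence $T$ is already the union of the $(\delta-1)/2$ cosets indexed by odd elements, and applying the per-coset bound used in (1) to this smaller union gives $|T|\leq \mathrm{ord}_{n}(q)(\delta-1)/2$, proving (2).

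Neither step requires deep machinery. The only place where I would check carefully is the terminal step of the halving recursion in part (2): one must confirm that the odd reduction $m$ of a starting even index $2j\in\{2,\ldots,\delta-1\}$ actually lands in $\{1,3,\ldots,\delta-2\}$ rather than escaping the indexing range. The strict decrease under halving makes this immediate, and this is precisely where the parity assumption on $\delta$ enters, guaranteeing that $(\delta-1)/2$ is an integer and that the partition of $\{1,\ldots,\delta-1\}$ into odd and even parts is balanced.
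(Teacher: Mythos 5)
Your proposal is correct: part (1) follows from $|C_x|$ dividing $\mathrm{ord}_n(q)$ together with subadditivity over the $\delta-1$ cosets, and part (2) correctly uses $C_{2j}=C_j$ for $q=2$ to reduce $T$ to the $(\delta-1)/2$ odd-indexed cosets, with the parity of $\delta$ ensuring the odd reduction of each even index stays in $\{1,3,\ldots,\delta-2\}$. The paper itself states Proposition~\ref{prop2.1} without proof, citing it as well known from \cite{mac} and \cite{huf}, and your argument is precisely the standard one given there, so there is nothing to compare beyond noting agreement.
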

The above bounds are tight only when $\delta$ is very small, while
they will become  negative when $\delta$ gets larger. In \cite{Aly},
a formula on $k$ $=\dim(\mathcal{C}(n,q,\delta,1))$ was obtained,
which generalized that of \cite{Yue}.
\begin{prop}\label{prop2.2}
  Let  $\gcd(q,n)=1$, $q^{\lfloor\frac{m}{2}\rfloor}\leq n\leq
q^{m}-1$ and $m=ord_{n}(q)$. If $2\leq \delta\leq
\min\{nq^{\lceil\frac{m}{2}\rceil}/(q^{m}-1),n \}$, then the
dimension $k$ of $\mathcal{C}(n,q,\delta,1)$ is
$k=n-m\lceil(\delta-1)(1-1/q)\rceil.$
\end{prop}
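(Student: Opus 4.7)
The plan is to compute the size of the defining set $T = \bigcup_{i=1}^{\delta-1} C_i$, since $k = n - |T|$. I will show $|T| = m \cdot N$, where $N$ is the number of distinct $q$-cyclotomic cosets meeting $\{1, 2, \ldots, \delta - 1\}$, by first establishing that each such coset has the maximal size $m$, and then proving $N = \lceil (\delta-1)(1-1/q) \rceil$.

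For the size claim, suppose $|C_i| = e < m$ for some $1 \le i \le \delta - 1$. Since $e$ divides $m = \mathrm{ord}_n(q)$ properly, $e \le \lfloor m/2 \rfloor$; and since $e = \mathrm{ord}_{n/\gcd(n,i)}(q)$, the divisor $n/\gcd(n,i)$ of $q^e - 1$ forces $\gcd(n,i) \ge n/(q^{\lfloor m/2 \rfloor} - 1)$. The identity $q^m - 1 = q^{\lceil m/2 \rceil}(q^{\lfloor m/2 \rfloor} - 1) + (q^{\lceil m/2 \rceil} - 1)$ then yields $n/(q^{\lfloor m/2 \rfloor} - 1) > nq^{\lceil m/2 \rceil}/(q^m - 1) \ge \delta - 1 \ge i$, contradicting $\gcd(n,i) \le i$.

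For the counting, set $A = \{i : 1 \le i \le \delta-1,\ q \nmid i\}$, and claim that $i \mapsto C_i$ from $A$ to the distinct cosets meeting $\{1, \ldots, \delta-1\}$ is a bijection. Surjectivity: if $i$ is the smallest element of $C \cap \{1, \ldots, \delta-1\}$ and $q \mid i$, then $qi' \equiv i \pmod n$ for $i' = i/q$ (since $i < n$), so $i' \in C \cap \{1, \ldots, \delta-1\}$ is strictly smaller, contradiction. For injectivity, suppose $i, j \in A$ with $C_i = C_j$, so $j \equiv q^s i \pmod n$ for some $s$; by swapping $i \leftrightarrow j$ and replacing $s$ by $m - s$ if needed, I may assume $0 \le s \le \lfloor m/2 \rfloor$. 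Then
\[
q^s i \le q^{\lfloor m/2 \rfloor}(\delta - 1) < \frac{n q^m}{q^m - 1} \le n + 1,
\]
so $q^s i \le n$. If $s \ge 1$ and $q^s i < n$, then $j = q^s i$ forces $q \mid j$; if $q^s i = n$, then $q \mid n$: both contradict the hypotheses. Hence $s = 0$ and $i = j$.

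A direct count gives $|A| = (\delta - 1) - \lfloor (\delta - 1)/q \rfloor = \lceil (\delta - 1)(1 - 1/q) \rceil$, so $|T| = m \lceil (\delta-1)(1-1/q) \rceil$ and $k = n - m \lceil (\delta-1)(1-1/q) \rceil$. The main obstacle is the injectivity argument: the upper bound $nq^{\lceil m/2 \rceil}/(q^m - 1)$ on $\delta$ is calibrated precisely so that $q^{\lfloor m/2 \rfloor}(\delta - 1) \le n$ holds with essentially no slack, which is what closes off the case $s \ge 1$.
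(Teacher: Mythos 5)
Your proof is correct and complete: the coset-size argument, the bijection between $\{i\le\delta-1: q\nmid i\}$ and the cosets meeting $[1,\delta-1]$, and the final count all check out. The paper itself states Proposition \ref{prop2.2} without proof, quoting it from \cite{Aly}; your argument is essentially the standard one given there, so there is nothing to compare beyond noting the match.
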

 This result is useful in some
cases, such as primitive BCH codes, while useless for BCH codes with
length $n=q^{m}+1$. Cyclic codes with length $n=q^{m}+1$ are LCD
codes and  contain many good linear codes, see \cite{Ding7},
\cite{Ding8} and \cite{Rao}.  Ref. \cite{Rao} discussed binary LCD
cyclic codes of length $n \leq 1025$.  Known results on the
dimensions  of BCH codes  $\mathcal{C}(q^{m}+1,q,\delta,1)$ and $
\mathcal{C}(q^{m}+1,q,\delta+1,0)$   in \cite{Ding7} and
\cite{Ding8} are as follows.

\begin{lemma}\label{lemm2.3}( Theorem 18 of \cite{Ding7} )
For any integer $\delta$ with $3 \leq \delta \leq  q^{\lfloor
\frac{l-1}{2}\rfloor }+3$, the reversible code
$\mathcal{C}(q,n,\delta,0)$ has parameters
$[q^l+1,q^l-2l(\delta-2-\lfloor \frac{\delta-2}{q}\rfloor), d\geq
2(\delta-1)]$.
 \end{lemma}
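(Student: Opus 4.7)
The plan is to analyze the defining set $T=C_{0}\cup C_{1}\cup\cdots\cup C_{\delta-2}$ of $\mathcal{C}(q,n,\delta,0)$ directly: the dimension will come from $k=n-|T|$, and the distance bound from the BCH bound sharpened by reversibility. The central structural input is the congruence $q^{l}\equiv -1\pmod{n}$.

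From $q^{l}\equiv -1\pmod n$ one gets $\mathrm{ord}_{n}(q)=2l$ and $-i\equiv iq^{l}\in C_{i}$ for every $i$, so every $q$-cyclotomic coset modulo $n$ is closed under negation. Hence $T$ is symmetric under $i\mapsto -i$, which both exhibits $\mathcal{C}(q,n,\delta,0)$ as reversible and forces $T$ to contain the $2\delta-3$ consecutive residues $-(\delta-2),\ldots,-1,0,1,\ldots,\delta-2 \pmod n$. Applying the BCH bound to this window yields $d\ge 2\delta-2=2(\delta-1)$, settling the minimum-distance claim.

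For the dimension I must count $|T|$ exactly. Clearly $C_{0}=\{0\}$ contributes $1$. Since $q^{2l}\equiv 1\pmod n$, the element $iq^{2l-1}\equiv iq^{-1}\pmod n$ lies in $C_{i}$; if $q\mid i$ this gives $i/q\in C_{i}$, hence $C_{i}=C_{i/q}$. Iterating, the distinct cosets among $C_{1},\ldots,C_{\delta-2}$ are indexed by the $i\in\{1,\ldots,\delta-2\}$ with $q\nmid i$, a set of size $\delta-2-\lfloor(\delta-2)/q\rfloor$. For each such $i$ I would verify $|C_{i}|=2l$: from $|C_{i}|=\mathrm{ord}_{n/\gcd(i,n)}(q)$, any proper divisor of $2l$ would force $n/\gcd(i,n)\le 2$, which is impossible because $\gcd(i,n)\le i\le q^{\lfloor(l-1)/2\rfloor}+1$ is far below $n/2$. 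Multiplying then gives $|T|=1+2l(\delta-2-\lfloor(\delta-2)/q\rfloor)$, as required.

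The main obstacle is pairwise disjointness of the counted cosets: if $C_{i_{1}}=C_{i_{2}}$ with $1\le i_{1}<i_{2}\le\delta-2$ and $q\nmid i_{1}i_{2}$, then $i_{2}\equiv\pm i_{1}q^{k}\pmod n$ for some $0\le k<l$. The case $k=0$, $+$ forces $i_{1}=i_{2}$, and $k=0$, $-$ gives $i_{2}=n-i_{1}$, far outside the range. For $1\le k\le\lfloor l/2\rfloor$ one has $i_{1}q^{k}<n$, so the residue is either $i_{1}q^{k}$, divisible by $q$ and so ruled out, or $n-i_{1}q^{k}$, too large to lie in $[1,\delta-2]$; the remaining values of $k$ are dealt with symmetrically via $q^{2l-k}\equiv -q^{-k}$. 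Each sub-case exploits the precise bound $\delta-2\le q^{\lfloor(l-1)/2\rfloor}+1$, and pinning down all cases rigorously is where I expect the real technical labor.
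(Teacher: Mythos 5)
This lemma is imported verbatim from Theorem 18 of \cite{Ding7}; the present paper gives no proof of it, so your attempt can only be compared with the standard technique for antiprimitive BCH codes --- which is exactly what you reproduce. Using $q^{l}\equiv -1\pmod n$ to make every coset negation-closed (hence reversibility, a run of $2\delta-3$ consecutive roots, and $d\ge 2(\delta-1)$ by the BCH bound), reducing via $C_{i}=C_{i/q}$ to the indices $i\in[1,\delta-2]$ with $q\nmid i$, and then establishing cardinality $2l$ and pairwise distinctness is precisely how such results are proved in \cite{Ding7}. Your distance argument and the reduction to non-multiples of $q$ are correct, and the distinctness sketch (splitting $i_{2}\equiv\pm i_{1}q^{k}$ according to $k\le\lfloor l/2\rfloor$ or not, using $\lfloor(l-1)/2\rfloor+\lfloor l/2\rfloor=l-1$) is the right argument, though you leave its completion to ``technical labor.''

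The one step that would fail as written is the cardinality claim. It is not true that a proper divisor $e$ of $2l$ forces $n/\gcd(i,n)\le 2$: that implication holds only when $e\mid l$, for then $n'=n/\gcd(i,n)$ divides $\gcd(q^{l}-1,q^{l}+1)\mid 2$. If instead $e=2f$ with $f\mid l$, $f<l$ and $l/f$ odd, one gets only $q^{f}\equiv (q^{f})^{l/f}=q^{l}\equiv-1\pmod{n'}$, i.e.\ $n'\mid q^{f}+1$ with $f\le l/3$, which can be far larger than $2$. Concretely, for $q=2$, $l=3$, $i=3$ one has $|C_{3}|=2$ while $n/\gcd(3,9)=3$; indeed the lemma's dimension formula breaks down there for $\delta=5$, so the quoted statement implicitly requires $l$ not too small. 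The repair is to run your own final inequality in this extra case as well: $\gcd(i,n)=n/n'\ge (q^{l}+1)/(q^{l/3}+1)=q^{2l/3}-q^{l/3}+1$, which exceeds the admissible range $i\le q^{\lfloor(l-1)/2\rfloor}+1$ for all $l\ge 4$, yielding the desired contradiction. With that correction, and the distinctness cases written out in full, the proof goes through.
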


\begin{lemma}\label{lemm2.4} (Theorems 38 and
39 of \cite{Ding8}) Let $m\geq 3$ be an integer  and $h=\lfloor
\frac{m-1}{2}\rfloor$.  Then  the following hold for $2 \leq \delta
\leq q^{h+1}$:

(1) If $m\geq 4$ is an even integer, then
$\mathcal{C}(n,q,\delta,1)$ has parameters
$[q^m+1,q^m+1-2m(\delta-1-\lfloor \frac{\delta-1}{q}\rfloor),d\geq
\delta]$ and $ \mathcal{C}(n,q,\delta+1,0)$ has parameters
$[q^m+1,q^m-2m(\delta-\lfloor \frac{\delta}{q}\rfloor),d\geq
2\delta]$.

(2) If $m\geq 3$ is an odd integer, $2 \leq \delta \leq q^{h+1}$,
$k=\dim(\mathcal{C}(n,q,\delta,1))$, then
$$k=\left\{
\begin{array}{lll}
 q^m+1-2m(\delta-1-\lfloor \frac{\delta-1}{q}\rfloor)  &\mbox {if $\delta\leq q^{h+1}-q$;}\\
q^m+1-2m(q^{h+1}-q-\lfloor \frac{\delta-1}{q}\rfloor) &\mbox {if $
q^{h+1}-q+1 \leq \delta\leq q^{h+1}$.}
 \end{array}
 \right.$$
 \end{lemma}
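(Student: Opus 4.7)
The plan is to compute the dimensions by determining the defining sets $T_1 = \bigcup_{i=1}^{\delta-1} C_i$ of $\mathcal{C}(n,q,\delta,1)$ and $T_0 = \{0\} \cup T_1 \cup C_\delta$ of $\mathcal{C}(n,q,\delta+1,0)$, then applying $k = n - |T|$. The distance bounds come from the BCH bound for part~(1); for the $b=0$ code, the identity $-i \equiv iq^m \pmod n$ shows that $T_0$ is closed under negation, so $T_0$ contains a long run $\{-\delta,\dots,\delta\}$ of consecutive residues and the BCH bound gives $d \geq 2\delta$. The crux is counting coset leaders in $[1,\delta-1]$ together with their sizes.

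First I would record $q^m \equiv -1 \pmod n$, so $\mathrm{ord}_n(q) = 2m$ and every $|C_i|$ divides $2m$. Next I would prove the key characterization: for $1 \leq i \leq q^{h+1}-1$, the integer $i$ is a coset leader if and only if $q \nmid i$. The ``only if'' direction is immediate, since $i = qj$ forces $q^{2m-1} i \equiv j \pmod n$ into $C_i$ below $i$. The ``if'' direction requires checking $iq^s \bmod n \geq i$ for every $s$: when $iq^s < n$ this is trivial as $iq^s \geq qi > i$, and when $iq^s \geq n$ one writes $iq^s = an + r$ and bounds $r$ from below using $i \leq q^{h+1}-1$ to pin down $a$. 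This produces $(\delta-1) - \lfloor(\delta-1)/q\rfloor$ leaders in the range. For the even-$m$ case of part~(1), I would then argue each such coset has full size $2m$ by checking $\gcd(i,n) = 1$ whenever $q \nmid i$ and $i \leq q^{m/2}$, which forces $|C_i| = \mathrm{ord}_n(q) = 2m$; multiplying and subtracting yields the stated dimension, with the $b = 0$ variant picking up an extra unit from $\{0\}$ plus $2m$ if $q \nmid \delta$.

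The main obstacle is the odd-$m$ case of part~(2), where $q+1 \mid n$ produces cosets of size strictly less than $2m$ and causes the leader count to saturate once $\delta > q^{h+1}-q$. I would show that every $i$ with $q^{h+1}-q+1 \leq i \leq q^{h+1}-1$ and $q \nmid i$ fails to be a coset leader: a direct computation of $iq^{h+1} \bmod n$, using $q^m \equiv -1 \pmod n$, produces an element of $C_i$ strictly less than $i$ that already lies in the previously-counted range $[1,q^{h+1}-q]$. Consequently $|T_1|$ stalls as $\delta$ sweeps through the final window, yielding the second branch of the formula. The step I expect to absorb most of the work is matching each such defective $i$ with its smaller representative and verifying that no double-counting occurs at the transition $\delta = q^{h+1}-q$, so that both branches splice together into a single consistent count.
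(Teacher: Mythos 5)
First, note that the paper does not prove this lemma at all: it is quoted verbatim as Theorems 38 and 39 of \cite{Ding8} and used as imported background, so there is no in-paper proof to compare against. Your overall strategy --- count the coset leaders in $[1,\delta-1]$ as the non-multiples of $q$, show the relevant cosets have full size $2m$, and handle the odd-$m$ saturation by exhibiting a smaller element of $C_i$ for $i$ in the top window --- is indeed the route taken in \cite{Ding8} and mirrors what this paper does for itself in Theorem 3.1, Lemma 3.6 and Theorem 3.7. So the architecture is right, but two of your load-bearing steps are stated incorrectly.

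The first is your ``key characterization'': as written (``for $1\le i\le q^{h+1}-1$, $i$ is a coset leader iff $q\nmid i$'') it is false for odd $m$ and you contradict it yourself two sentences later when you argue that the $i$ with $q^{h+1}-q+1\le i\le q^{h+1}-1$ are \emph{not} coset leaders (e.g.\ for $q=2$, $m=2t+1$, $i=2^{t+1}-1$ is odd but $n-i2^t=2^t+1<i$, which is exactly the paper's Theorem 3.1(4)). The characterization must be split by the parity of $m$: up to $q^{m/2}-1$ for even $m$, but only up to $q^{h+1}-q-1$ for odd $m$, with the top window treated separately; also beware that for $m=3$ the element $q^2-q+1=n/(q+1)$ lies in that window and \emph{is} a coset leader with $|C_i|=2$, so your ``every $i$ in the window fails'' claim needs $m\ge 5$. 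The second gap is the coset-size argument: $\gcd(i,n)=1$ does \emph{not} hold for all $i\le q^{m/2}$ with $q\nmid i$ (take $q=2$, $m=10$, $n=1025=5^2\cdot 41$ and $i=5$), so you cannot conclude $|C_i|=\mathrm{ord}_n(q)$ that way. The conclusion $|C_i|=2m$ is still true, but it must be proved as in the paper's Lemma 3.6: if $|C_i|=k<2m$ then $k\mid 2m$ forces $k\in\{m,\,2m/3,\dots\}$, and in each case $n\nmid i(q^k-1)$ because $\gcd(q^k-1,\,q^m+1)$ is too small relative to the bound on $i$. With those two repairs (and fixing the exponent in the saturation computation, which should be $h$ rather than $h+1$: for $i=q^{h+1}-j$, $1\le j\le q-1$, one gets $n-iq^h=1+jq^h<i$ for $m\ge 5$), your outline would go through.
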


 In the sequel,  we  only consider  BCH codes of length $n=2^m+1(m\geq10)$
 with   defining sets $T_{\delta}=C_{1}\cup C_{2}\cup \cdots \cup C_{\delta-1}$ and
$T_{\delta}'=C_{0}\cup C_{1}\cup  \cdots \cup C_{\delta-1}$ for odd
$\delta$, which can be denoted by $\mathcal{C}(n,2,\delta,1)$ and
$\mathcal{C}(n,2,\delta+1,0)$ as Refs. \cite{Ding7} and
\cite{Ding8}. Hence, $\mathcal{C}(n,2,\delta,1)$ $=[n,k,d]=[n,
n-\mid T_{\delta}\mid,d\geq \delta]$ and
$\mathcal{C}(n,2,\delta+1,0)$ $=[n,k_{0},d_{0}]=[n, 2^m-\mid
T_{\delta}'\mid,d_{0}\geq 2\delta]$.

Our approach to   the dimension  of   BCH codes is to find out all
coset leaders in $T_{\delta}$ and calculate their cardinalities.
While we find out  coset leaders in $T_{\delta}$ or $T_{\delta}'$,
some new skills are developed,   which are not employed in the
literature. For convenience, we first present three notations below.

{\bf Notation 1.} Let $x\equiv y$ denote $x\equiv y \bmod n$ unless
otherwise stated. We omit the words ``modulo $n$"  when we mention
``cyclotomic coset", ``coset leader" and so on. ``$x$ is a coset
leader" means that $x$ is a coset leader of $C_x$ modulo $n$.

 {\bf Notation 2.} Assume that $a,b,c$ are all nonnegative integers and $a\leq b$,
 denote $[a,b]=\{a,a+1,\cdots,b\}$ and   $[a,b]+c=[a+c,b+c]$.

 {\bf Notation 3.} Suppose that $C_{x}$  is a  $2$-cyclotomic coset containing
 $x$.
  $C_{x}$ can be given by $C_{x}=\{y_{_{x,0}},y_{_{x,1}},y_{_{x,2}},...,y_{_{x,2m-1}}\},$
  where $$y_{_{x,k}}\equiv 2^k x~\hbox{and}~y_{_{x,k}}\in \mathbb{Z}_n=\{0,1,2,\cdots,n-1.\}$$
 When $m\leq k\leq 2m-1$,  $ y_{_{x,k}}\equiv2^k x=2^{k-m}(2^{m}+1-1)x \equiv -2^{k-m}x \equiv
 n-y_{_{x,k-m}}$. Hence $C_{x}$ can be also denoted by $$\{y_{_{x,k}},n-y_{_{x,k}}| 0\leq k\leq m-1\}.$$

 It is easy to
see that $x$ is a coset leader of $C_{x}$ if and only if
$y_{_{x,k}}-x \geq 0$ and $n-y_{_{x,k}}-x \geq 0$ for $0\leq k \leq
m-1$. Obviously, if  $x$ is even, then $\frac{x}{2}\in C_x$ and $x$
is not a coset leader. Hence, to find out coset leaders in
$T_{\delta}$ or $T_{\delta}'$, it suffices to consider an odd $x$.

  For example, if $n=2^3+1=9$, we have then

  $C_{1}=\{y_{_{1,0}},y_{_{1,1}},y_{_{1,2}},y_{_{1,3}},y_{_{1,4}},y_{_{1,5}}\}=\{1,2,4,8,7,5\}$,

  $C_{3}=\{y_{_{3,0}},y_{_{3,1}},y_{_{3,2}},y_{_{3,3}},y_{_{3,4}},y_{_{3,5}}\}=\{3,6,3,6,3,6\}=\{3,6\}$,

  $C_{5}=\{y_{_{5,0}},y_{_{5,1}},y_{_{5,2}},y_{_{5,3}},y_{_{5,4}},y_{_{5,5}}\}=\{5,1,2,4,8,7\}$,\\
which imply that 1 and 3 are coset leaders, while 5 and 7  not.

Inspired by these latter work, by a detailed analysis of binary
cyclotomic cosets modulo $n=2^{m}+1$ with $m\not \equiv 0 \bmod 8$,
we deduce the first five largest coset leaders modulo $n$ and
determine the coset leader of $C_x$ for each
$$x\leq \left\{
\begin{array}{lll}
2^{t+2}+7
   &\mbox {if $m=2t+1$};\\
 2^{2t+2}+2^{2t+1}+3 &\mbox {if $m=4t+2$};\\
 2^{4t+3}+2^{4t+2}+2^{4t+1}+1 &\mbox {if $m=8t+4$,}
 \end{array}
 \right.$$ along with  their cardinalities.
  Thus, parameters of some binary antiprimitive  BCH codes with designed distance
$\delta>2^{\lceil \frac{m}{2} \rceil}$ are given, which widely
extend the range of $\delta$  in  \cite{Ding7} and \cite{Ding8} for
$q=2$.

\section{ BCH Codes of length $n=2^{m}+1$ for $m=2t+1$}
\label{sec3} Throughout this section,  we fix $n=2^{m}+1$ with
$m=2t+1\geq 11$.

It will be first determined when $x$ is a coset leader  for  $1\leq
x \leq 2^{t+2}+7$ in Theorem 3.1. Next, we show each $\delta_{i}$ is
a coset leader  for $1\leq i \leq 5$ in Lemma  3.2.  For verifying
the first five largest coset leaders in Theorem 3.5, we will
continue to give  Lemmas 3.3-3.4 on basis of Lemma 3.2. The
cardinalities of relevant cyclotomic cosets will be calculated in
Lemma 3.6. At last, Theorem 3.7 can be further given, which lists
dimensions of some BCH codes of length $n$ with designed distance
$\delta$ for $2^{t+1}+3\leq \delta \leq 2^{t+2}+9$ and
$\delta_5+2=\frac{n-3}{6}-8 \leq \delta \leq n$.

For clarity,  the proofs of Theorem 3.1,
 Lemma  3.2 and Theorem 3.7 will be  presented in {\bf Appendixes  A,   B  \hbox{and} C},
 respectively.

 \begin{theorem}\label{ther3.1}Let  $x$ be
odd. Then  we have the following:

(1) If $1\leq x\leq 2^{t+1}-3$, then $x$ is a coset leader, see
Refs. \cite{Ding7} and \cite{Ding8}.

(2) If  $2^{t+1}+3\leq x\leq 2^{t+1}+2^{t}-3$, then $x$ is a coset
leader.

(3) If $2^{t+1}+2^{t}+3\leq x\leq 2^{t+2}-9$, then $x$ is a coset
leader.

(4) If $x=2^{t+1}-1, 2^{t+1}+1$, $2^{t+1}+2^{t}-1, 2^{t+1}+2^{t}+1$
or  $ 2^{t+2}-7\leq x\leq 2^{t+2}+7$, then  $x$ is not a coset
leader.

\end{theorem}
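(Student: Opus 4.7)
By Notation~3, an odd $x$ is the coset leader of $C_x$ if and only if $x\leq y_{x,k}\leq n-x$ for every $0\leq k\leq m-1$, where $y_{x,k}$ denotes the residue of $2^k x$ modulo $n$. Equivalently, no element of $\{y_{x,k},\,n-y_{x,k}\}$ drops below $x$. Because $2^{2t+1}\equiv -1\pmod{n}$, the doubling orbit of $x$ is completely determined by the first $m$ powers, and ``small'' residues in the orbit arise only when $2^k x$ lies just above a multiple of $n$. Part~(1) is already known from Refs.~\cite{Ding7,Ding8}; my task reduces to Parts~(2)--(4).

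For Parts~(2) and~(3), I would write $x$ in its canonical form for each subrange: $x=2^{t+1}+u$ with $u$ odd and $3\leq u\leq 2^t-3$ in case~(2), and $x=2^{t+1}+2^t+v$ with $v$ odd and $3\leq v\leq 2^t-9$ in case~(3). I then partition $k\in[0,m-1]$ into segments bounded by the wraparound exponents. On the initial segment $0\leq k\leq t-1$ there is no wraparound, so $y_{x,k}=2^k x$ and only the upper bound $y_{x,k}\leq n-x$ requires checking. At $k=t$ exactly one wraparound occurs and one obtains $y_{x,t}=2^t u-1$ in case~(2), respectively $y_{x,t}=2^{2t}+2^t v-1$ in case~(3). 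The recursion $y_{x,k+1}=2y_{x,k}\bmod n$ then produces closed-form linear expressions in the parameter on each subsequent segment, such as $y_{x,t+j}=2^{t+j}u-2^j$ until the next wraparound. Both bounds $y_{x,k}\geq x$ and $y_{x,k}\leq n-x$ collapse to elementary linear inequalities in $u$ (or $v$), e.g.\ $(2^t-1)u\geq 2^{t+1}+1$ and $(2^t+1)u\leq n-2^{t+1}+2$, which I verify at the endpoints of the parameter range and extend by monotonicity.

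For Part~(4), I would exhibit one explicit $k$ violating the inequality for each exceptional $x$. For instance, $2^t(2^{t+1}-1)\equiv -2^t-1\pmod{n}$ forces $n-y_{2^{t+1}-1,\,t}=2^t+1<2^{t+1}-1$, and $2^t(2^{t+1}+1)=(n-1)+2^t$ gives $y_{2^{t+1}+1,\,t}=2^t-1<2^{t+1}+1$. Analogous one-line reductions (using $k=t+1$) handle $x=2^{t+1}+2^t\pm 1$, and each $x\in[2^{t+2}-7,\,2^{t+2}+7]$ is disqualified by a similar computation with $k$ near $t$. In every case the witness $k$ is exactly where the linear bounds from Parts~(2)--(3) would have been tight.

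The main obstacle will be the bookkeeping of the orbit $\{y_{x,k}\}$ through successive wraparounds over the whole range $k\in[0,m-1]$: after each wraparound the additive correction in an identity like $y_{x,t+j}=2^{t+j}u-2^j$ changes, and one must identify precisely when the next wraparound occurs before the formula loses validity. This is what dictates the specific endpoint conventions ``$-3$'' and ``$-9$'' in Parts~(2) and~(3), and makes the boundary cases collected in Part~(4) unavoidable; getting the wraparound indices and constant offsets $-1,-2,-4,\dots$ exactly right across segments is where the proof is genuinely delicate.
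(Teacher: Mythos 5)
Your plan is correct and follows essentially the same route as the paper's Appendix~A: reduce to the criterion $x\leq y_{x,k}\leq n-x$ for all $k\in[0,m-1]$, compute $y_{x,k}=2^kx-cn$ with a wraparound count $c$ that must be tracked carefully (the paper organizes this by fixing $k$ and partitioning the parameter range of $l$ into subintervals $I_{\lambda,k}$ on which $c$ is constant, which is the dual of your segmentation in $k$), and verify the resulting linear inequalities at the endpoints by monotonicity; your explicit witnesses for Part~(4) coincide with the paper's congruence computations. The delicate bookkeeping you flag is exactly what the paper's subinterval partition and its Lemma~0.1 are designed to handle.
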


 \begin{lemma}\label{lemm3.2}
Suppose $\delta_{1}=\frac{n}{3}$, $\delta_{2}=\frac{n-3}{6}$,
$\delta_{3}=\delta_{2}-2$, $\delta_{4}=\delta_{2}-8$,
$\delta_{5}=\delta_{2}-10$. Then $\delta_{1}, \delta_{2},
\delta_{3}$, $\delta_{4}$ and $\delta_{5}$  are all coset leaders.
\end{lemma}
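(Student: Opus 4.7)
By Notation~3, an odd $x$ is a coset leader of $C_x$ iff $x \leq 2^k x \bmod n \leq n-x$ for every $0 \leq k \leq m-1$. I would verify this criterion for each $\delta_i$ in turn.

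The case $\delta_1 = n/3$ is immediate: since $m = 2t+1$ is odd, $3 \mid n$ and $4\cdot(n/3) \equiv n/3 \pmod{n}$, so $C_{\delta_1} = \{n/3,\, 2n/3\}$ and $\delta_1 = n/3 < 2n/3$ is its smallest element.

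For $i\in\{2,3,4,5\}$ the unifying identity is $6\delta_i = n - c_i$ with $(c_2,c_3,c_4,c_5) = (3,15,51,63)$. Writing $M = n/3$ and $d_i = c_i/3 \in\{1,5,17,21\}$, one has $2\delta_i = M - d_i$ and hence $2^k\delta_i \equiv -d_i\, 2^{k-1} \pmod M$ for $k\geq 1$. Because $2^{k-1}\bmod 3$ equals $1$ or $2$ according as $k$ is odd or even, iterating the doubling map in $[0,n)$ yields, for every $k$ in the ``safe'' range where $d_i\,2^{k-1} < M$,
$$2^k\delta_i \bmod n \;=\; \begin{cases} M - d_i\,2^{k-1}, & k \text{ odd},\\ 2M - d_i\,2^{k-1}, & k \text{ even}, \ k \geq 2.\end{cases}$$
Plugging these expressions into the desired inequalities $\delta_i \leq 2^k\delta_i\bmod n \leq n-\delta_i$ reduces them to comparisons of $d_i$ against powers of $2$, all of which follow trivially from $m \geq 11$.

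The remaining values of $k$ lie in a short segment near $m-1$, where $d_i\,2^{k-1}$ has exceeded $M$ and (for the larger $d_i$) possibly several multiples of $n$. For each such $k$ I would explicitly reduce $2^k\delta_i$ modulo $n$ by adding the correct multiple of $n$, producing a closed form of shape $aM - d_i\,2^{k-1}$ with a small integer $a \geq 2$, and then verify the two inequalities directly. The main obstacle is precisely this boundary analysis: for $\delta_4$ and especially $\delta_5$ the wrap-around happens well before $k = m-1$, so several boundary $k$'s must be treated individually, and it is exactly the hypothesis $m \geq 11$ that makes all the resulting elementary inequalities hold uniformly in $t$.
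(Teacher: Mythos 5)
Your proposal follows essentially the same route as the paper's Appendix~B: both arguments verify the coset-leader criterion by writing $2^k\delta_i \bmod n$ in closed form for each $k\in[0,m-1]$, splitting on the parity of $k$, and treating a handful of boundary values of $k$ near $m-1$ individually. Your parametrization $2\delta_i=M-d_i$ with $M=n/3$ and $d_i\in\{1,5,17,21\}$ is a cleaner unification of the paper's four separate case analyses (one per $\delta_i$), and your closed forms $M-d_i2^{k-1}$ for odd $k$ and $2M-d_i2^{k-1}$ for even $k$ agree exactly with the paper's expressions $2\delta_2-\cdots+1$ and $4\delta_2-\cdots+2$.

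One step is overclaimed, and it is worth flagging because it is the only place the argument could break. Inside your ``safe'' range $d_i2^{k-1}<M$ the required inequalities do \emph{not} all follow trivially: for odd $k$ the condition $M-d_i2^{k-1}\geq\delta_i=(M-d_i)/2$ is equivalent to $d_i(2^k-1)\leq M$, i.e.\ roughly $d_i2^{k-1}\leq M/2$, which is a factor of $2$ stronger than the safe-range hypothesis. Any $k$ with $d_i2^{k-1}\in\bigl((M+d_i)/2,\,M\bigr)$ satisfies your closed-form formula yet violates the coset-leader inequality. The lemma survives because, for each $d_i\in\{1,5,17,21\}$, the unique $k$ whose value of $d_i2^{k-1}$ lands in that window turns out to be \emph{even} (where the relevant bound is the much weaker $d_i2^{k-1}\leq 3M/2$), and the largest odd $k$ in the safe range clears its bound only barely; e.g.\ for $d_5=21$ and $k=2t-5$ the condition $63(2^{2t-5}-1)\leq 2^{2t+1}+1$ comes down to $63<64$. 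This parity coincidence must be checked for each $d_i$ rather than asserted --- it is the real content of the ``generic'' part of the argument. With that check added, and the boundary $k$'s reduced explicitly as you describe, the proof is complete and substantively identical to the paper's.
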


Next, we will show $\delta_i( i=1,2,3,4,5)$  are the first five
largest coset leaders modulo $n$, respectively. It is necessary to
verify
 $x$ is not a coset leader if $\delta_{5}<x<n$ and $x\neq
 \delta_{i}$ for $1\leq i \leq 4$.
To achieve this, we first introduce an iterative algorithm to
partition  $ I^{(t)}=[1, 2^{2t-5}]$ into $2^{t-3}$ subintervals $$
I^{(t)}=[1, 2^{2t-5}] =I_1\bigcup I_2 \cdots \bigcup I_{2^{t-3}},
\hbox{for}~t\geq 5.$$
 \noindent{\bf Iterative Algorithm 1 (IA 1):}

Suppose  $I_i=[a_i,b_i]$, a  partition of $I^{(t)}=[1, 2^{2t-5}]$ is
obtained as
 $$I^{(t)}=I^{(t-1)}\bigcup (\bigcup \limits_{u=2^{t-4}+1}^{2^{t-3}} I_{u}), $$
$$\hbox{where} ~I_{u}=\left\{
\begin{array}{lll}
I_{j}+2^{2\times (t-4)+1}  &\mbox {if $u=j+2^{t-4}, j \in[1, 2^{t-4}-1]$;}\\
\mbox{[}a_{2^{t-4}}+2^{2t-7},2^{2t-5}\mbox{]} &\mbox {if $
u=2^{t-3}$,}
 \end{array}
 \right.$$
with the initial conditions: $I_1=[1,2],
I_2=[a_1+2^1,2^{2\times1+1}]=[3,8]. $

 For clarity, we give the partition in detail as follow.

1) If $t=5$, then $ I^{(5)}=[1, 2^{5}]$ can be partitioned into
$2^{2}$ subintervals as follows:

 $I_1=I_{2^0}=[1,2^1]=[1,2]=[a_1,b_1]$,

$I_2=I_{2^1}=[a_1+2^1,2^{2\times1+1}]=[3,8]=[a_2,b_2]$,

$I_3=I_{1+2^1}=I_1+2^{2\times1+1}=I_1+2^3=[9,10]=[a_3,b_3]$,

 $I_4=I_{2^2}=[a_{2^{1}}+2^{2\times 1+1},
2^{2\times 2+1}]=[a_2+2^3,2^5]=[11,32]=[a_4,b_4]$.

2) If $t=6$, then $ I^{(6)}=[1, 2^{7}]$ can be partitioned into
$2^{3}$ subintervals as follows:

 Let $I_{s}=[a_s,b_s]$ for $1\leq
s\leq 4$ as be given in 1),
 as for $2^2<s=j+2^2\leq 2^{2+1}$ with $1\leq j\leq 2^2-1$,
 define

 $I_s=I_j+2^{2\times 2+1}=I_j+2^{5}=[a_s,b_s]$ for $5\leq s\leq7$, and define

  $I_8=I_{2^3}=[a_{2^{2}}+2^{2\times 2+1}, 2^{2\times 3+1}]=[a_4+2^5, 2^7]=[a_8,b_8]$.

3) Let $t\geq 7$.  When a partition of $ I^{(t-1)}$ is given by

 $I^{(t-1)}=[1, 2^{2t-7}]$ $=I_1\bigcup I_2 \cdots \bigcup
I_{2^{t-4}}=\bigcup \limits_{j=1}^{2^{t-4}} I_j$, where
$I_{j}=[a_j,b_j]$.

  For $u=j+2^{t-4}$ with $1\leq j\leq 2^{t-4}-1$, define

$I_{u}=I_{j}+2^{2\times (t-4)+1}$
$=I_{j}+2^{2t-7}=[a_j+2^{2t-7},b_j+2^{2t-7}].$

For $u=2^{t-3}$,   define

 $I_{2^{t-3}}=[a_{2^{t-4}}+2^{2\times
(t-4)+1}, 2^{2\times (t-3)+1}]=[a_{2^{t-4}}+2^{2t-7},2^{2t-5}].$

Then, a partition of $I^{(t)}=[1, 2^{2t-5}]$ is  iteratively obtained as\\
 $I^{(t)}$ $=I^{(t-1)} \bigcup (\bigcup
\limits_{u=2^{t-4}+1}^{2^{t-3}} I_{u})$
  $=(I_{1}\cup I_{2} \cdots
\cup I_{2^{t-4}})\cup I_{2^{t-4}+1} \cdots \cup I_{2^{t-3}}.$\\

\noindent {\bf Remark 1.} From the \textit{IA 1} above, we can
derive that:

(1)  For $0\leq i \leq t-3$, $I_{2^i}$ has the form

$I_{2^i}= \left\{
\begin{array}{lll}
[1,2]=\hbox{[}1,2^{2i+1}\hbox{]}   &\mbox {if $i=0$};\\
\hbox{[}1+(2+2^{3}+\cdots +2^{2i-1}),2^{2i+1}\hbox{]} &\mbox {if
$1\leq i \leq t-3$}.
 \end{array}
 \right.$

(2)  Generally, for each $s\in [1, 2^{t-3}-1]$, let the 2-adic
expansion of $s$ be
$$s=a_02^0+a_12^1+a_2 2^2+\cdots+a_{t-4}2^{t-4}=(a_0a_1a_2 \cdots
a_{t-4})_{2}.$$

Define $i=i_{s}=\min\{j|a_j=1,0\leq j\leq t-4\}$. We have then
\begin{eqnarray*}
 I_s&=&I_{2^i}+a_{i+1}2^{2i+3}+a_{i+2}2^{2i+5}+\cdots +a_{t-4}2^{2t-7}\\
&=&I_{2^i}+2^{2i+3}(a_{i+1}+a_{i+2}2^2+\cdots+a_{t-4}2^{2(t-i-5)})\\
&=&I_{2^i}+2^{2i+3}\lambda,
 \end{eqnarray*}
$~\hbox{where}~\lambda=\lambda_{s}
=a_{i+1}+a_{i+2}2^2+\cdots+a_{t-4}2^{2(t-i-5)}.$

Notice
 $0 \leq \lambda
   \leq 1+2^2+\cdots+ 2^{2(t-i-5)} <2^{2(t-i-4)}$ for $s\leq 2^{t-3}-1$,
   and $I_{2^{t-3}}=I_{2^{t-3}}+2^{2(t-3)+3}\times \lambda$ with
   $\lambda=0$.  It then follows from (1) and  (2) that for any $s\in [1, 2^{t-3}]$,
     there exists $0 \leq \lambda <2^{2(t-i-4)}$   such that
$$I_s=I_{i,\lambda}=I_{2^i}+2^{2i+3}\lambda,
 ~\hbox{where}~0\leq i \leq t-3.$$

\noindent {\bf Example 1}:  When $t=8$,   $I_{s}$ can be given as
follows:\\
 If $s=8=(00010)_{2}$, then $i=3$ and $\lambda=0$,  $I_{s}=I_{2^3}=I_{2^3}+2^9 \cdot
 0$.\\
If  $s=10=(01010)_{2}$, then $i=1$ and $\lambda=4$,
$I_{s}=I_{2^1}+2^7=I_{2^1}+2^5\cdot 4$.
If  $s=15=(11110)_{2}$, then  $i=0$ and $\lambda=21$,
$I_{s}=I_{2^0}+2^3+2^5+2^7=I_{2^0}+2^3\cdot 21$.

\begin{lemma}\label{lemm3.3}
 Let $\delta_2$ be given as
above. If $x\in [\delta_2+2, \delta_2+2^{2t-4}]$ is odd, then $x$ is
not a coset leader.
 \end{lemma}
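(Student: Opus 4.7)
The plan is to show, for each odd $x \in [\delta_2+2,\, \delta_2+2^{2t-4}]$, that there exists $k \in [0, m-1]$ for which $y_{x,k} < x$ or $y_{x,k} > n-x$, which by Notation 3 certifies $x$ is not a coset leader. Since $\delta_2 = 1 + 4 + \cdots + 4^{t-1}$ is odd, each such $x$ admits the unique parameterization $x = \delta_2 + 2s$ with $s \in [1, 2^{2t-5}] = I^{(t)}$. I would then invoke Iterative Algorithm 1 to write $s \in I_{s} = I_{i,\lambda} = I_{2^i} + 2^{2i+3}\lambda$ with $s = s_0 + 2^{2i+3}\lambda$, $s_0 \in I_{2^i}$, $0 \leq i \leq t-3$ and $0 \leq \lambda < 2^{2(t-i-4)}$, stratifying the problem by the type $(i, \lambda)$.

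For a fixed type, the candidate exponent is $k(i) = 2(t-1-i) \in [4, 2t-2] \subset [0, m-1]$. Using the identities $6\delta_2 = n - 3$ and $2^{2t+1} \equiv -1 \pmod n$ (the latter because $n = 2^{2t+1}+1$), one verifies by induction on $j$ that $4^j\delta_2 \equiv 4\delta_2 - 2(4^{j-1}-1) \pmod n$ for $j \geq 1$, the inductive step reducing to $12\delta_2 + 6 = 2n \equiv 0$. Combining this with $2^{k(i)+2i+4} = 2^{2t+2} \equiv -2 \pmod n$, which collapses the $\lambda$-contribution, one derives
\[
2^{k(i)} x \equiv 4\delta_2 + 2 - 2\cdot 4^{t-2-i} + 2^{2t-1-2i}s_0 - 2\lambda \pmod{n}.
\]
Call this right-hand side $E$; the crude estimate $2\lambda < 2^{2t-2i-7} \ll 4\delta_2$ shows $E > 0$ throughout.

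The finish is a case split. Whether $E < n$ or $E \geq n$ is governed by the comparison of $2^{2t-1-2i}s_0$ against the threshold $(n+6)/3 - 2 + 2\cdot 4^{t-2-i} + 2\lambda = 2\delta_2 + 1 + 2\cdot 4^{t-2-i} + 2\lambda$. In the first case, $y_{x,k(i)} = E$, and the identity $4\delta_2 = (2n-6)/3$ together with the lower bound on $s_0$ in $I_{2^i}$ (namely $s_0 \geq 1 + 2 + 2^3 + \cdots + 2^{2i-1}$) forces $E > n - x$. In the second case, $y_{x,k(i)} = E - n$, and the upper bound $s_0 \leq 2^{2i+1}$ together with $\lambda < 2^{2(t-i-4)}$ forces $E - n < x$. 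Either way, $x$ fails the coset-leader test at step $k(i)$.

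The principal obstacle is the uniform verification of these two inequalities across all $s_0 \in I_{2^i}$ and $\lambda$ in its range, for every $i \in [0, t-3]$. The constants appearing in the partition from IA 1---in particular the lower endpoint of $I_{2^i}$ and the shift $2^{2i+3}$---are precisely calibrated so that the two sub-bounds meet cleanly at the threshold $E = n$, and this matching is essentially what the iterative construction of $I^{(t)}$ encodes. Checking this matching, rather than the algebraic manipulation itself, will constitute the bulk of the work, and it naturally breaks along the indices $i$ supplied by IA 1 together with Remark 1.
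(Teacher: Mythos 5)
Your proposal is correct and follows essentially the same route as the paper: the same parameterization $x=\delta_2+2s$, the same stratification by $(i,\lambda)$ via Iterative Algorithm 1, and an exponent $k(i)=2(t-1-i)$ that is just one more than the paper's choice $k=2t-2i-3$, so that your quantity $E$ is exactly twice the paper's $y_{_{x,k}}$ and your two target inequalities ($E>n-x$ when $E<n$, and $E-n<x$ when $E\geq n$) are precisely the paper's bounds $\frac{n-\delta_2}{2}<y_{_{x,k}}<\frac{n+\delta_2}{2}$ in disguise. The remaining uniform verification over $s_0\in I_{2^i}$ and $\lambda$ that you flag as the bulk of the work is carried out in the paper exactly along the case split $i=0$ versus $1\leq i\leq t-3$ that you anticipate.
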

\begin{proof}
  Note that $x$ can be given by $x=\delta_2+2l$ with $l \in I^{(t)}=[1,
2^{2t-5}]$.  Let  $I^{(t)}=\bigcup \limits_{s=1}^{2^{t-3}} I_s$ be a
partition of $I^{(t)}$ defined as above. Then $l \in I_s$ for some
$s \in [1, 2^{t-3}]$.
 From    Remark 1, $ I_s=I_{2^i}+
 2^{2i+3}\lambda$,  where  $0\leq i \leq t-3$ is determined by
 $s$  and $0 \leq \lambda<2^{2(t-i-4)}$. Thus, one shall denote
  $x=\delta_2+2l=\delta_2+2(l_0+2^{2i+3}\lambda)$ with
$$l_0\in
I_{2^i}= \left\{
\begin{array}{lll}
[1,2]=[1,2^{2i+1}]   &\mbox {if $i=0$};\\
\hbox{[}1+(2+2^{3}+\cdots +2^{2i-1}),2^{2i+1}\hbox{]} &\mbox {if
$1\leq i \leq t-3$}.
 \end{array}
 \right.$$
 Choose $k=2t-2i-3$. We then derive that
\begin{eqnarray*}
2^kx~ \equiv~ y_{_{x,k}}
&=& 2^k(\delta_2+2l_0+2^{2i+4}\lambda)-(\frac{2^{k-1}-1}{3}+\lambda)n\\
&=& 2^k(\delta_2+2l_0)-(\frac{2^{k-1}-1}{3})(6\delta_2+3)+(2^{2t+1}-n)\lambda\\
&=&  2\delta_2+2^{2(t-i-1)}l_0-2^{2(t-i-2)}+1-\lambda.
\end{eqnarray*}
Now, according to  the value ranges of $i, l_0$ and $\lambda$ above,
when  $k=2t-2i-3$,  one  can show
$\frac{n-\delta_2}{2}<y_{_{x,k}}<\frac{n+\delta_2}{2}$ as follows.

First, we study an upper  bound of $y_{_{x,k}}$:
\begin{eqnarray*} y_{_{x,k}}&=&2\delta_2+2^{2(t-i-1)}l_0-2^{2(t-i-2)}+1-\lambda\\
&\leq& 2\delta_2+2^{2(t-i-1)}l_0-2^{2(t-i-2)}+1\\
 &\leq &2\delta_2+2^{2(t-i-1)}\cdot 2^{2i+1}-2^{2(t-i-2)}+1\\
&= &2\delta_2+2^{2t-1}-2^{2(t-i-2)}+1.
\end{eqnarray*}
Next, we will investigate the lower bounds of $y_{_{x,k}}$:

 If $i=0$, then $l_0\in[1,2]$, we get that \begin{eqnarray*}
y_{_{x,k}}&=&  2\delta_2+2^{2(t-1)}l_0-2^{2(t-2)}+1-\lambda\\
 &>& 2\delta_2+2^{2(t-1)}l_0-2^{2(t-2)}+1-2^{2(t-4)}\\
 &\geq &2\delta_2+2^{2(t-1)}\cdot 1-2^{2(t-2)}+1-2^{2(t-4)}.
\end{eqnarray*}
If $1\leq i \leq t-3$, then $l_0\in[1+(2+2^{3}+\cdots
+2^{2i-1}),2^{2i+1}]$,  we have
\begin{eqnarray*}
 y_{_{x,k}} &=&  2\delta_2+2^{2(t-i-1)}l_0-2^{2(t-i-2)}+1-\lambda\\
 &>& 2\delta_2+2^{2(t-i-1)}l_0-2^{2(t-i-2)}+1-2^{2(t-i-4)}\\
 &\geq &2\delta_2+2^{2(t-i-1)}[1+(2+2^{3}+\cdots +2^{2i-1})]-2^{2(t-i-2)}+1-2^{2(t-i-4)}\\
 &=& \!\!2\delta_2+(2^{2t-3}\!\!+\!\!2^{2t-5}\!\!+\cdots+2^{2t-2i-1})\!\!+2^{2(t-i-1)}\!\!-\!\!2^{2(t-i-2)}+1-2^{2(t-i-4)}\\
  &\geq &\!\!2\delta_2\!\!+(2^{2t-3}\!\!+\!\!\cdots+\!\!2^{2t-2(t-3)-1})\!\!+\!\!2^{2[t-(t-3)-1]}\!\!
  -\!\!2^{2[t-(t-3)-2]}\!\!+\!\!1\!-\!\!2^{2[t-(t-3)-4]}\\
 &= &2\delta_2+2^{2t-3} +2^{2t-5}+\cdots +2^5+2^4-2^2+1-2^{-2}.
\end{eqnarray*}
\begin{eqnarray*}
\hbox{Observe that}~ \frac{n+\delta_2}{2}&=&
\frac{7\delta_2+3}{2}=2\delta_2+2^{2t-1}+1,\\
\frac{n-\delta_2}{2}&=&\frac{5\delta_2+3}{2}=2\delta_2+(2^{2t-3}
+2^{2t-5}+\cdots +2^3+2)+2.
\end{eqnarray*}
Combining the previous bounds of $y_{_{x,k}}$, it is easy to check
$\frac{n-\delta_2}{2}<y_{_{x,k}}<\frac{n+\delta_2}{2}$.

If $\frac{n-\delta_2}{2}<y_{_{x,k}}\leq \frac{n-1}{2}$, then
$n-\delta_2<2y_{_{x,k}}\leq n-1$, one can infer there exists a
$j_{_{x,k}}=n-2y_{_{x,k}}\in C_{y_{_{x,k}}}$ such that $1\leq
j_{_{x,k}}<\delta_2$.

If  $\frac{n+1}{2}\leq y_{_{x,k}}<\frac{n+\delta_2}{2}$, $n+1\leq
2y_{_{x,k}}<n+\delta_2$, one can derive there exists  a $j_{_{x,k}}=
2y_{_{x,k}}-n\equiv 2y_{_{x,k}} \in C_{y_{_{x,k}}}$ achieving $1\leq
j_{_{x,k}}<\delta_2$.

Concluding the previous discussions,  when $x\in [\delta_2+2,
\delta_2 +2^{2t-4}]$,  there exists an integer $j_{_{x,k}}
\in[1,\delta_2)$ such that $j_{_{x,k}}\in C_x$, hence
 $x$ is not a coset leader for  $x\in [\delta_2+2, \delta_2 +2^{2t-4}]$.
 \end{proof}
\begin{lemma}\label{lemm3.4}
 Let $\delta_{1}$  and $\delta_{2}$ be given as
above, if an odd integer $x>\delta_{2}$ and $x\not=\delta_{1}$, then
$x$ is not a coset leader.
\end{lemma}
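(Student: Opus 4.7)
My plan is to show, for every odd $x$ with $\delta_2 < x < n$ and $x \neq \delta_1$, that the coset $C_x$ contains an element strictly less than $x$, which rules out $x$ being a coset leader. I will exploit $2^m \equiv -1 \pmod n$, which makes $C_x$ stable under $y \mapsto n - y$: both $y_{x,k} = 2^k x \bmod n$ and $n - y_{x,k}$ belong to $C_x$, so it suffices to find a single $k$ with $\min(y_{x,k},\, n - y_{x,k}) < x$. I partition the admissible range of $x$ into six consecutive sub-intervals, choosing the exponent $k$ according to the interval.

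The four ``large'' bands are routine. If $x > (n-1)/2$, take $k = m$ to get $n - x < x$. If $\delta_1 < x \leq (n-1)/2$, take $k = m+1$: $2x < n$ makes $n - 2x \geq 1$, and $3x > 3\delta_1 = n$ makes $n - 2x < x$. If $2^{2t-1} < x < \delta_1$, take $k = 2$: $4x > n$ and $4x < 4\delta_1 < 2n$ give $y_{x,2} = 4x - n \in C_x$ positive, and $3x < n$ gives $4x - n < x$. If $\lfloor n/5 \rfloor < x \leq 2^{2t-1}$, then $4x < n$, so $n - 4x$ is positive, and $5x > n$ gives $n - 4x < x$.

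The two remaining windows just above $\delta_2$ are the delicate ones. For $\delta_2 < x \leq \delta_2 + 2^{2t-4}$, Lemma 3.3 applies directly. For $\delta_2 + 2^{2t-4} < x \leq \lfloor n/5 \rfloor$, I will take $k = 4$: the estimates $3n < 16 x \leq 16 n/5 < 4n$ pin $\lfloor 16x/n \rfloor = 3$, so $y_{x,4} = 16 x - 3n \in C_x$; the upper bound $16x - 3n < x$ follows from $15 x < 3n$, and positivity $16x > 3n$ from the inequality $\delta_2 + 2^{2t-4} > 3n/16$, which after clearing denominators reduces to $2^{2t} > 25$ and is therefore automatic under the standing hypothesis $m = 2t+1 \geq 11$.

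The main obstacle will be the bookkeeping in the fifth sub-interval: verifying that it is non-empty (i.e.\ $\lfloor n/5 \rfloor > \delta_2 + 2^{2t-4}$, which reduces to $2^{2t} > 64$), that $\lfloor 16 x / n \rfloor$ is uniformly $3$ across it, and that no odd integer is missed at the boundaries between consecutive sub-intervals. These are all elementary arithmetic checks with $n = 2^{2t+1}+1$, but because a couple of the gaps are narrow I would verify the extremal odd values explicitly to guard against off-by-one errors before assembling the six cases into the complete argument.
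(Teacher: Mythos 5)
Your proposal is correct and follows essentially the same route as the paper: the same five bands $(\,(n-1)/2,n)$, $(\delta_1,(n-1)/2]$, $(n/4,\delta_1)$, $(n/5,n/4)$, $(3n/16,n/5)$ handled with the same coset elements $n-x$, $n-2x$, $4x-n$, $n-4x$, $16x-3n$, glued to Lemma 3.3 via the observation that $3n/16<\delta_2+2^{2t-4}$. The boundary checks you flag (e.g.\ $2^{2t}>25$) are exactly the ones the paper implicitly relies on, so no further comparison is needed.
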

\begin{proof}   According to the previous discussions,  to attain
the desired conclusion, it suffices to verify there exists some
$k\in[0,m-1]$ such that  $y_{_{x,k}}<x$ or $n-y_{_{x,k}}<x$ for
$x>\delta_2$ except $x=\delta_{1}=\frac{n}{3}$. We split into five
cases:

(1): If $x\in [\frac{n+1}{2},n-1]$, then $x<n<2x$ and
$n-y_{_{x,0}}=n-x<x$.

(2): If $x\in [\frac{n}{3}+1,\frac{n-1}{2}]$,  then $2x<n<3x$ and
$n-y_{_{x,1}}=n-2x<x$.

 (3): If $x\in [\lceil\frac{n}{4} \rceil,\frac{n}{3}-1]$, then $3x<n<4x$ and $y_{_{x,2}}=4x-n<x$.

(4): If $x\in [\lceil \frac{n}{5} \rceil,
\lfloor\frac{n}{4}\rfloor]$, then $4x<n<5x$ and
$n-y_{_{x,2}}=n-4x<x$.

(5): If  $x\in [\lceil \frac{3n}{16} \rceil, \lfloor \frac{n}{5}
\rfloor]$, then $15x<3n<16x$ and  $y_{_{x,4}}=16x-3n<x$.

Combining Lemma \ref{lemm3.3}, we can infer that $x$ is  not a coset
leader for each
 $x\in [\delta_2+2, \delta_2
+2^{2t-4}]\bigcup [\lceil \frac{3n}{16} \rceil, n-1] \setminus
\{\frac{n}{3}\}.$ Notice that $\lceil \frac{3n}{16} \rceil
=\delta_2+\lceil \frac{\delta_2}{8} \rceil<\delta_2+2^{2t-4}$.  It
is easy to know if $x\in [\delta_2+2, n-1] \setminus \{\delta_1\}$,
$x$ is not a coset leader, which is equivalent to the desired
conclusion.
\end{proof}
\begin{theorem}\label{theo3.5}
 Let $\delta_{1}, \delta_{2}, \delta_{3}$,
$\delta_{4}$ and $\delta_{5}$ be  given as Lemma 3.2, then they are
the first, second, third, fourth and fifth largest coset leaders,
respectively.
 \end{theorem}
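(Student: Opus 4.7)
The plan is to combine Lemma \ref{lemm3.2} and Lemma \ref{lemm3.4} with a short explicit check on two residual candidates. Lemma \ref{lemm3.4} already says that no odd integer in $(\delta_2, n)$ other than $\delta_1$ is a coset leader, which, together with Lemma \ref{lemm3.2}, pins down $\delta_1$ and $\delta_2$ as the first and second largest coset leaders. It remains to verify that the next three coset leaders, in descending order, are exactly $\delta_3 = \delta_2 - 2$, $\delta_4 = \delta_2 - 8$, $\delta_5 = \delta_2 - 10$.

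I first record a parity observation. Since $m = 2t+1$, one computes $\delta_2 = (2^m - 2)/6 = (4^t - 1)/3 = 1 + 4 + \cdots + 4^{t-1}$, which is odd; hence $\delta_3, \delta_4, \delta_5$ are all odd. Because any even $y$ has $y/2 \in C_y$ and so cannot be a coset leader, the integers $\delta_2 - 1 \in (\delta_3, \delta_2)$ and $\delta_2 - 9 \in (\delta_5, \delta_4)$ are automatically dismissed. Consequently, the only candidates in $(\delta_5, \delta_2) \setminus \{\delta_3, \delta_4\}$ whose coset leader status still needs to be refuted are the two odd integers $\delta_2 - 4$ and $\delta_2 - 6$ lying in $(\delta_4, \delta_3)$.

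Both will be dispatched by exhibiting an element of $C_x$ strictly smaller than $x$. The computational engine is the identity $2\delta_2 = \delta_1 - 1$ (since $6\delta_2 + 3 = n$ gives $\delta_1 = n/3 = 2\delta_2 + 1$), combined with $2\delta_1 \equiv n - \delta_1 \pmod n$. Iterating multiplication by $2$ as in the proof of Lemma \ref{lemm3.3}, for odd $k$ with the relevant coefficient below $\delta_1$ one obtains
$2^k(\delta_2 - 4) \equiv \delta_1 - 9 \cdot 2^{k-1}$ and $2^k(\delta_2 - 6) \equiv \delta_1 - 13 \cdot 2^{k-1} \pmod n.$
I will choose the critical index $k = m - 4$, which is odd since $m$ is odd. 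At this $k$ the two coefficients straddle $\delta_1$, namely $9 \cdot 2^{m-5} < \delta_1 < 13 \cdot 2^{m-5}$, so the two candidates fall into different wrap-around regimes.

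For $x = \delta_2 - 4$ no wrap occurs, and $y_{x, m-4} = \delta_1 - 9 \cdot 2^{m-5}$; the required $y_{x, m-4} < x$ reduces to $9 \cdot 2^{m-5} > \delta_2 + 5$, a one-line inequality valid for all $m \geq 7$. For $x = \delta_2 - 6$ one addition of $n$ is needed, yielding $y_{x, m-4} = n + \delta_1 - 13 \cdot 2^{m-5}$ and hence the co-element $n - y_{x, m-4} = 13 \cdot 2^{m-5} - \delta_1 \in C_x$; the bound $13 \cdot 2^{m-5} - \delta_1 < \delta_2 - 6$ is another direct calculation, valid for $m \geq 10$. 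The only delicate point is precisely this wrap-count bookkeeping at $k = m - 4$, where the coefficients $9$ and $13$ lie on opposite sides of $2^5 \delta_1 / 2^m \approx 32/3$; once the two regimes are separated correctly, the remaining numerical inequalities are trivial and the theorem follows.
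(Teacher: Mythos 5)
Your proposal is correct and follows essentially the same route as the paper: both reduce the problem (via Lemmas 3.2--3.4 and the parity observation) to refuting the two odd candidates $\delta_2-4$ and $\delta_2-6$, and both do so by multiplying by $2^{m-4}=2^{2t-3}$ and exhibiting an element of the coset smaller than $x$ (your $\delta_1-9\cdot 2^{m-5}$ and $13\cdot 2^{m-5}-\delta_1$ coincide with the paper's reduced values, up to a minor arithmetic slip in the paper's displayed expression for the first one). The only difference is presentational: you derive the residues through the recursion $2\delta_2=\delta_1-1$ rather than by writing out the subtracted multiple of $n$ explicitly.
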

\begin{proof} According to Lemmas 3.2-3.4, one can naturally infer $\delta_1$ and $\delta_2$
 are the  first and second largest coset leaders, respectively.
 To prove
$\delta_{3}$, $\delta_{4}$ and $\delta_{5}$  are the third, fourth
and fifth largest coset leaders, respectively,
  we need to
show neither $\delta_2-4$ nor $\delta_2-6$  is  a  coset leader.

It is not difficult to infer
\begin{eqnarray*}
2^{2t-3}(\delta_2-4) &\equiv&2^{2t-3}(\delta_2-4)-\frac{2^{2t-4}-1}{3}n\\
               &=&2^{2t-3}+2^{2t-5}+\cdots+2+2^{2t-4}<\delta_2-4,\\
-2^{2t-3}(\delta_2-6) &\equiv&\frac{2^{2t-4}-1}{3}n-2^{2t-3}(\delta_2-6)\\
               &=&\delta_2-2^{2t-2}+2^{2t-4}<\delta_2-6,
               \end{eqnarray*}
 which implies that  $\delta_2-4$,
 $\delta_2-6$ both are   not coset leaders, then the desired conclusion can be derived.
\end{proof}
\begin{lemma}\label{lemm3.6}
 If $1 \leq x \leq  2^{t+2}+7$ or $x\in \{\delta_2,
\delta_3, \delta_4, \delta_5\}$,  then $|C_x|=2m$.
\end{lemma}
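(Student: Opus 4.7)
The plan is to relate $|C_x|$ to the multiplicative order of $2$ modulo the reduced modulus $n':=n/\gcd(x,n)$, and then rule out the possibility that this order drops below $2m$ for the values of $x$ in the hypothesis. The starting observation is that $n\mid 2^{2m}-1$ together with $\gcd(2^m-1,2^m+1)=\gcd(2^m-1,2)=1$ gives both $\mathrm{ord}_n(2)=2m$ and $\gcd(n,2^b-1)=1$ for every divisor $b$ of $m$.

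Writing $d=\gcd(x,n)$ and $n'=n/d$, Notation 3 identifies $|C_x|=\mathrm{ord}_{n'}(2)$, a divisor of $2m$. For $x\not\equiv 0\pmod{n}$ (so $n'\geq 3$), the relation $2^m\equiv -1\pmod{n'}$ together with $m$ odd forces this order to be even, say $2b$ with $b\mid m$. If $b<m$, then $n'\mid 2^{2b}-1=(2^b-1)(2^b+1)$ combined with $\gcd(n',2^b-1)\mid\gcd(n,2^b-1)=1$ forces $n'\mid 2^b+1$, whence $d(2^b+1)\geq n$. Taking the contrapositive gives the sufficient condition
\[
\gcd(x,n)\,(2^b+1)<n\quad\text{for every proper divisor $b$ of $m$.}
\]
Since $m=2t+1$ is odd, its smallest prime factor $p$ satisfies $p\geq 3$, so the largest proper divisor of $m$ is $m/p\leq (2t+1)/3$, and it suffices to check the inequality at $b=m/p$.

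To verify this in the first case of the hypothesis, $1\leq x\leq 2^{t+2}+7$, the trivial bound $\gcd(x,n)\leq x$ reduces matters to the routine numerical estimate $(2^{t+2}+7)(2^{(2t+1)/3}+1)<2^{2t+1}+1$, which holds for every $t\geq 5$. For the four largest coset leaders, the identities $6\delta_2=n-3$, $6\delta_3=n-15$, $6\delta_4=n-51$, $6\delta_5=n-63$ yield $\gcd(\delta_i,n)\mid \gcd(n,3),\gcd(n,15),\gcd(n,51),\gcd(n,63)$ respectively; a short case-check based on $\mathrm{ord}_5(2)=4$, $\mathrm{ord}_7(2)=3$, $\mathrm{ord}_{17}(2)=8$ shows that none of $5,7,17$ can divide $n=2^m+1$ for $m$ odd, so in every case $\gcd(\delta_i,n)\leq 9$, and the resulting estimate $9\,(2^{(2t+1)/3}+1)<n$ is immediate for $m\geq 11$.

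The main obstacle will be the structural reduction showing that $|C_x|<2m$ forces $n/\gcd(x,n)$ to divide some $2^b+1$ with $b$ a proper divisor of $m$; the crux is the coprimality $\gcd(n,2^b-1)=1$, which separates the two factors of $2^{2b}-1$ and makes $2^b-1$ invisible to any divisor of $n$. Once that reduction is clean, the remainder of the argument is a straightforward comparison of $\gcd(x,n)$ against $n/(2^{m/p}+1)$.
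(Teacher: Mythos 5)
Your proof is correct, and it splits naturally into a half that mirrors the paper and a half that does not. For the range $1\leq x\leq 2^{t+2}+7$ you are essentially doing what the paper does: it too assumes $|C_x|=k<2m$ with $k\mid 2m$, splits into $k=m$, $k=\tfrac{2m}{3}$ and $k\leq\tfrac{2m}{5}$, and rules each case out by comparing $x(2^k-1)$, respectively $x\cdot\gcd(2^k-1,n)=x(2^{m/3}+1)$, against $n$. Your reduction ``$|C_x|=2b<2m$ forces $n/\gcd(x,n)\mid 2^b+1$'' packages the same two ingredients ($\gcd(n,2^b-1)=1$, hence $\gcd(n,2^{2b}-1)=2^b+1$) into the single criterion $\gcd(x,n)\,(2^{m/p}+1)<n$, which is tidier and absorbs the paper's separate $k=\tfrac{2m}{3}$ case (relevant only when $3\mid m$) into the general one. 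Where you genuinely diverge is the treatment of $x\in\{\delta_2,\delta_3,\delta_4,\delta_5\}$: the paper does no estimation there at all, but simply observes that the computations in the proof of Lemma 3.2 already show $y_{\delta_i,k}>\delta_i$ for all $1\leq k\leq 2m-1$, so the orbit cannot close before $2m$ steps. You instead extract $\gcd(\delta_i,n)\leq 9$ from the identities $6\delta_i=n-3,\,n-15,\,n-51,\,n-63$ together with the fact that $5$, $7$, $17$ never divide $2^m+1$ for odd $m$, and then feed this into the same size criterion. Your route is self-contained and independent of the long Appendix B case analysis, at the small cost of the order computations modulo $5$, $7$, $17$; the paper's is a one-line corollary of work it has already done. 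Both arguments are sound, and your numerical estimates (including $(2^{t+2}+7)(2^{(2t+1)/3}+1)<2^{2t+1}+1$ for $t\geq 5$, where the only binding case is $3\mid m$, i.e.\ $m\geq 15$) check out.
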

\begin{proof} Seeking a contradiction,
suppose $|C_x|=k$  with $k<2m$,  it then follows that  $x(2^{k}-1)
\equiv 0$. Since $k|2m$ and $m$ is odd,    we have  $k=m,
\frac{2m}{3}$ or $k \leq \frac{2m}{5}$.

(1) If $k=m$, then $(2^{k}-1, n)=(2^{m}-1, 2^{m}+1)=1$. From $1 \leq
x \leq 2^{t+2}+7<n$, one has $x(2^{k}-1) \not \equiv 0$.

(2)  If $k=\frac{2m}{3}$, then $m\equiv 0(\bmod~3)$ and $(2^{k}-1,
n=2^{m}+1)=2^{\frac{m}{3}}+1$. Since $1 \leq x \leq
2^{t+2}+7<\frac{n}{2^{\frac{m}{3}}+1}$
$=2^{\frac{2m}{3}}-2^{\frac{m}{3}}+1$,   $x(2^{k}-1) \not \equiv 0
$.

(3)  If $k\leq \frac{2m}{5}$, we then obtain that  $1\leq
x(2^{k}-1)\leq (2^{t+2}+7)(2^{\frac{2m}{5}}-1) <n$, so  $x(2^{k}-1)
\not \equiv 0 $.

Collecting all the above cases, we then conclude that $x(2^{k}-1)
\not \equiv 0 $, a contradiction yields. Hence $|C_x|=2m$ for $1
\leq x \leq 2^{t+2}+7$.

In the proof of Lemma \ref{lemm3.2}, we have derived that when $x\in
\{\delta_2, \delta_3, \delta_4, \delta_5\}$,  $y_{_{x,k}}>x$ for
$1\leq k \leq 2m-1$, obviously,  $x(2^{k}-1) \not \equiv 0 $ and
$|C_x|=2m$.
 \end{proof}
By the previous results, one can get the following theorem.
\begin{theorem}\label{theo3.7} Let  $\delta_1$,
  $\delta_2$, $\delta_3$, $\delta_4$ and $\delta_5$ be
given as Lemma 3.2. If $\delta$ is odd,   then  we have the
following:

 (1) The narrow-sense   BCH codes $\mathcal{C}(n,2,\delta,1)$ have
parameters
 $$ \left\{
\begin{array}{lll}
\hbox{[}n, n-m\delta+5m, d \geq \delta\hbox{]}   &\mbox {if $2^{t+1}+3\leq \delta \leq 2^{t+1}+2^{t}-3$;}\\
\hbox{[}n, n-m\delta+9m, d \geq \delta\hbox{]}   &\mbox {if $2^{t+1}+2^{t}+3\leq \delta\leq 2^{t+2}-9 $;}\\
\hbox{[}n, n-2^{t+2}m+16m, d \geq 2^{t+2}+9\hbox{]}  &\mbox {if $2^{t+2}-7\leq \delta\leq 2^{t+2}+9$;}\\
\hbox{[}n, 2m(i-1)+3, d \geq \delta_i \hbox{]}    &\mbox {if $\delta_{i+1}+2 \leq \delta \leq \delta_{i}(i=1,2,3,4) $;}\\
\hbox{[}n, 1, n\hbox{]}    &\mbox {if $\delta_1+2 \leq \delta \leq
n$.}
 \end{array}
\right.$$

 (2) The   BCH codes $\mathcal{C}(n,2,\delta+1,0)$ have parameters
 $$ \left\{
\begin{array}{lll}
\hbox{[}n, 2^m-m\delta+5m, d \geq 2\delta\hbox{]}   &\mbox {if $2^{t+1}+3\leq \delta \leq 2^{t+1}+2^{t}-3$;}\\
\hbox{[}n, 2^m-m\delta+9m, d \geq 2\delta\hbox{]}  &\mbox {if
$2^{t+1}+2^{t}+3\leq
 \delta\leq 2^{t+2}-9 $;}\\
\hbox{[}n, 2^m-2^{t+2}m+16m, d \geq 2(2^{t+2}+9)\hbox{]}  &\mbox {if  $2^{t+2}-7\leq \delta\leq 2^{t+2}+9$;}\\
\hbox{[}n, 2m(i-1)+2, d \geq 2\delta_i \hbox{]}    &\mbox {if
$\delta_{i+1}+2 \leq \delta \leq \delta_{i}(i=1,2,3,4) $.}
 \end{array}
\right.$$
\end{theorem}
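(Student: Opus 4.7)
My strategy is to compute $|T_\delta|$ exactly in each range, then read off $k = n - |T_\delta|$ for $\mathcal{C}(n,2,\delta,1)$ and $k_0 = k-1$ for $\mathcal{C}(n,2,\delta+1,0)$ (using $T_\delta' = T_\delta \cup C_0$ together with $n-1 = 2^m$). Since $n$ is odd, every non-zero coset leader is odd, so it suffices to count odd coset leaders in $[1,\delta-1]$ and multiply by the coset size; Lemma 3.6 guarantees that each relevant coset has size exactly $2m$.

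For the three small-$\delta$ ranges (those with $\delta \leq 2^{t+2}+9$), I would directly apply Theorem 3.1: parts (1)--(3) list the subintervals in which every odd integer is a coset leader, while part (4) names the finitely many odd non-leaders. Counting odd coset leaders in $[1,\delta-1]$ and multiplying by $2m$ gives $|T_\delta|$ as an affine function of $\delta$, subtracting $2m$ for each non-leader pair identified by (4). This produces $k = n - m\delta + 5m$ in the first range, $k = n - m\delta + 9m$ in the second, and a constant $k = n - 2^{t+2}m + 16m$ throughout the third (where Theorem 3.1(4) kills off every candidate in $[2^{t+2}-7, 2^{t+2}+7]$, so $|T_\delta|$ does not change with $\delta$). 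The distance bound $d \geq \delta$ is just the BCH bound, evaluated at the largest $\delta$ in the third range since the code is identical for all odd $\delta$ there.

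For the four large-$\delta$ ranges $[\delta_{i+1}+2,\delta_i]$, I would instead count the complement. Theorem 3.5 says the only coset leaders exceeding $\delta-1$ are $\delta_1,\ldots,\delta_i$, so the cosets not contained in $T_\delta$ are exactly $C_0, C_{\delta_1}, \ldots, C_{\delta_i}$. The one subtlety is the top coset: since $3\delta_1 = n$, a direct check gives $4\delta_1 \equiv \delta_1$, hence $|C_{\delta_1}| = 2$ rather than $2m$. Together with $|C_0|=1$ and $|C_{\delta_j}|=2m$ for $2 \leq j \leq i$ from Lemma 3.6, this yields $k = 1 + 2 + 2m(i-1) = 2m(i-1)+3$. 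For $\delta \geq \delta_1+2$ every non-zero residue sits in $T_\delta$, and the code is generated by $(x^n-1)/(x-1)$, so it has dimension $1$ and weight $n$.

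The non-narrow-sense parameters and minimum distance follow with minimal extra work. The dimensions come directly from $k_0 = k-1$. The bound $d_0 \geq 2\delta$ uses reversibility: because $2^m \equiv -1 \pmod{n}$, the set $T_\delta'$ is closed under negation and therefore contains the $2\delta-1$ consecutive residues $-(\delta-1),\ldots,\delta-1$, so the BCH bound yields $d_0 \geq 2\delta$. The main obstacle is the careful bookkeeping in the three small-$\delta$ counts (checking each endpoint and correctly excluding the non-leader pairs supplied by Theorem 3.1(4)), together with remembering the anomalous size of $C_{\delta_1}$, which is precisely what turns the natural constant $+1$ coming from $|C_0|$ into the $+3$ appearing in the large-$\delta$ dimension formula.
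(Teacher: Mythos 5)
Your proposal is correct and follows essentially the same route as the paper's Appendix C: count the odd coset leaders in $[1,\delta-1]$ via Theorem 3.1 (resp.\ identify the missing cosets via Theorem 3.5 for large $\delta$), multiply by the cardinality $2m$ from Lemma 3.6 with the special case $|C_{\delta_1}|=2$, and apply the BCH bound at the largest admissible $\delta$ in each constant range. The only place you go slightly beyond the paper is in spelling out part (2) — the paper just says it follows easily from (1) — and your $k_0=k-1$ computation and the negation-closure argument for $d_0\geq 2\delta$ are both correct.
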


%
%
%
%
%
%
%
%
%
%
%
%
%
%
%
%

\section{  BCH  codes of length $n=2^{m}+1$ with $m=4t+2$}
\label{sec4} In this section,  we fix $m=4t+2 \geq 10$ and
$n=2^{m}+1$.

Similar to  last Section 3, it will be determined   when $x$ is a
 coset leader  for $1\leq x \leq 2^{2t+2}+2^{2t+1}+3$ in Theorem 4.1  in advance.
  Next, we show each $\delta_{i}$ is a coset leader for $1\leq i
\leq 5$ in Lemma 4.2.   For determining the first five largest coset
leaders in Theorem 4.5, we continue to  give  Lemmas 4.3-4.4 on the
basis of Lemma 4.2.  The cardinalities of relevant cyclotomic cosets
will be calculated in Lemma 4.6. Finally,  Theorem 4.7 can be
further given, which lists dimensions of some BCH codes of length
$n$ with designed distance $\delta$ for $2^{2t+1}+3\leq \delta \leq
2^{2t+2}+2^{2t+1}+5$ and $\delta_5+2\leq \delta \leq n$, where
$\delta_5$ is  given as Lemma 4.2.

 The proofs of Theorem 4.1,
 Lemma  4.2 and Theorem 4.7 will be  presented in {\bf Appendixes  D,   E \hbox{and} F},
 respectively.

\begin{theorem}\label{theo4.1}
 Suppose $x$ is odd, then  we have the following:

(1) If $1\leq x\leq 2^{2t+1}-1$, then $x$ is a coset leader, see
Refs. \cite{Ding7} and \cite{Ding8}.

(2) If  $2^{2t+1}+3\leq x\leq 2^{2t+2}-5$, then $x$ is a coset
leader.

(3) If  $2^{2t+2}+5\leq x\leq 2^{2t+2}+2^{2t}-3$, then $x$ is a
coset leader.

(4) If  $2^{2t+2}+2^{2t}+3\leq x\leq 2^{2t+2}+2^{2t+1}-3$, then $x$
is a coset leader.

(5) If $x=2^{2t+1}+1$, $2^{2t+2}-3\leq x \leq 2^{2t+2}+3$,
$2^{2t+2}+2^{2t}-1\leq x \leq 2^{2t+2}+2^{2t}+1$ or
 $2^{2t+2}+2^{2t+1}-1\leq x \leq 2^{2t+2}+2^{2t+1}+3$, then $x$ is not a
coset leader.
\end{theorem}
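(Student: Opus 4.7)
The plan is to follow the same template used for Theorem 3.1 (whose proof is deferred to Appendix A), based on Notation 3: an odd $x$ is a coset leader if and only if $y_{x,k}\geq x$ and $n-y_{x,k}\geq x$ for every $k\in[0,m-1]$, where $y_{x,k}\equiv 2^k x\pmod n$. Part (1) is cited from \cite{Ding7,Ding8}, so the real work is in (2)--(5). Throughout I would exploit the identity $2^{4t+2}\equiv-1\pmod n$, which puts $y_{x,k}$ into a closed form once the high bits of $x$ are fixed and effectively restricts the range of $k$ to $[0,2t+1]$, since the values $k\in[2t+2,4t+1]$ are recovered from the smaller ones by the relation $y_{x,k+2t+1}=n-y_{x,k}\cdot 2/2=\ldots$ (equivalently, the complementary inequality on $n-y_{x,k}$ takes care of them).

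For each of the positive ranges (2), (3), (4), I would parametrize the odd $x$ as a low-order offset added to the high-order pattern defining the interval, namely $x=2^{2t+1}+2j+1$ in (2), $x=2^{2t+2}+2j+1$ in (3), and $x=2^{2t+2}+2^{2t}+2j+1$ in (4), with $j$ running over the allowable range. For each $k$ I would substitute and use $2^{4t+2}\equiv -1$ to rewrite $y_{x,k}$ as a signed sum of powers of $2$ together with a linear combination of $j$, then verify the two-sided bound $x\leq y_{x,k}\leq n-x$ by comparing the dominant binary terms; the lower bound is obtained by discarding the negative contributions and the upper bound by discarding the positive ones, following exactly the style of the argument in Lemma 3.3. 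Part (2) is essentially parallel to the corresponding case in Section 3.

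For part (5), each excluded odd $x$ is ruled out by exhibiting a single $k$ that witnesses $y_{x,k}<x$ or $n-y_{x,k}<x$. For example, $2^{2t+1}(2^{2t+1}+1)=2^{4t+2}+2^{2t+1}\equiv 2^{2t+1}-1\pmod n$, and $2^{2t+1}-1<2^{2t+1}+1=x$, so $x=2^{2t+1}+1$ is not a coset leader; the remaining ten excluded odd values (clustered near $2^{2t+2}$, $2^{2t+2}+2^{2t}$, and $2^{2t+2}+2^{2t+1}$) are dispatched by short ad hoc choices of $k$ of a similar flavour.

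The main obstacle will be part (4): because $x$ now carries two distinct high bits $2^{2t+2}$ and $2^{2t}$, the carries in the reduction $2^k x\bmod n$ branch into several sub-cases depending on the position of $k$ relative to both $2t$ and $2t+1$, and on how $j$ sits inside $[0,2^{2t-1}-3]$. Organising the case split so that it simultaneously covers every $k$ and excludes \emph{exactly} the boundary values already flagged in (5) (the $\pm 3$ and $\pm 5$ margins in the interval endpoints are precisely what protects against the bad values) will be the most delicate portion, but the verification is mechanical, and this is what justifies deferring the full argument to Appendix D.
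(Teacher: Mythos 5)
Your overall template matches the paper's Appendix D: parametrize $x$ by a low-order offset, compute $y_{x,k}$ exactly, verify the two-sided bound $x\le y_{x,k}\le n-x$, and for part (5) exhibit a single witness $k$ (your computation $2^{2t+1}(2^{2t+1}+1)\equiv 2^{2t+1}-1$ is literally the one the paper uses). However, there is a genuine error in your reduction of the range of $k$. The identity $2^{4t+2}\equiv -1\pmod n$ only cuts the index range from $[0,2m-1]$ down to $[0,m-1]=[0,4t+1]$ --- which is exactly what Notation 3 already encodes by pairing $y_{x,k}$ with $n-y_{x,k}$. It does \emph{not} restrict $k$ to $[0,2t+1]$: since $2^{2t+1}$ is a square root of $-1$ modulo $n$ rather than $-1$ itself, there is no relation of the form $y_{x,k+2t+1}=n-y_{x,k}$, and the complementary inequality $n-y_{x,k}\ge x$ for $k\le 2t+1$ accounts for the indices $k+m\in[4t+2,6t+3]$, not for $k\in[2t+2,4t+1]$. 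Those latter indices are precisely where the paper's proof does its heavy lifting (cases (2.3)--(2.5), (3.3)--(3.5), (4.4)--(4.6) of Appendix D), so your plan as stated silently skips the hard half of the verification.

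A second, related gap: even for the indices you do treat, the ``closed form'' you invoke for $y_{x,k}$ does not exist as a single expression over the whole interval of $x$ once $k\ge 2t+2$. The integer $\mu$ with $y_{x,k}=2^kx-\mu n$ jumps as the offset $l$ crosses thresholds of the form $\lambda\, 2^{4t+1-k}$. The paper handles this by partitioning the range of $l$ into subintervals $I_{\lambda,k}$ (resp.\ $J_{\lambda,k}$, $S_{\lambda,k}$) indexed by an ``identity tag'' $\lambda$, on each of which $\mu$ (e.g.\ $\mu=2^{k-2t-1}+\lambda-1$ in case (2.3)) is constant, and then bounds $y_{x,k}-x$ and $n-y_{x,k}-x$ at the endpoints using the monotonicity statement Lemma 0.1. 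That partition is the actual content of the proof, not a mechanical afterthought; without it your ``signed sum of powers of $2$'' for $y_{x,k}$ is not well defined. Finally, the difficulty is not confined to part (4): parts (2) and (3) require exactly the same subinterval machinery, so the case analysis you anticipate only for (4) is needed throughout.
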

\begin{lemma}\label{lemm4.2}  Denote $\delta_{1}=\frac{n}{5}$,
$\delta_{2}=2^{4t-1}+\frac{2^{4t}-1}{5}$, $\delta_{3}=\delta_{2}-6$.

If $t=2$, let $\delta_{4}=\delta_{2}-8$ and
$\delta_{5}=\delta_{2}-24$;

 if $t>2$, let $\delta_{4}=\delta_{2}-96$
and $\delta_{5}=\delta_{2}-102$.

 Then $\delta_{1}$, $\delta_{2}$,
$\delta_{3}$, $\delta_{4}$ and $\delta_{5}$ are all coset leaders.
\end{lemma}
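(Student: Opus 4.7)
The plan is to verify directly, using Notation 3, that each $\delta_i$ is the minimum element of its cyclotomic coset $C_{\delta_i}$. Since $y_{\delta_i,k+m} = n - y_{\delta_i,k}$ (because $2^m \equiv -1 \pmod n$), it suffices to prove the two inequalities $\delta_i \leq y_{\delta_i,k} \leq n - \delta_i$ for every $k\in[0,m-1]$. In parallel, checking that each $\delta_i$ is odd will also be needed so the possibility of the coset containing a smaller even element is ruled out.

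The case $\delta_1 = n/5$ is immediate: $n = 4\cdot 2^{4t}+1$ and $2^{4t}\equiv 1\pmod 5$ give $5\mid n$, and multiplication by $2$ cycles $\delta_1$ through $\{n/5,2n/5,4n/5,3n/5\}$, so $|C_{\delta_1}|=4$ and $\delta_1$ is obviously the smallest. For $\delta_2$, I would rewrite the definition as $5\delta_2 = 7\cdot 2^{4t-1}-1$ and compute $2^k\delta_2 \bmod n$ by tracking the quotient $q_k := \lfloor 2^k\delta_2/n\rfloor$. Since $\delta_2/n \approx 7/40$, the quotient $q_k$ increases by a small, predictable amount at each doubling, and because $2^k \bmod 5$ has period $4$, the residue $y_{\delta_2,k}$ falls into a handful of explicit closed-form families indexed by $k\bmod 4$. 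Direct estimates within each family then confirm $\delta_2\le y_{\delta_2,k}\le n-\delta_2$ for all $k\in[1,m-1]$. For $\delta_3,\delta_4,\delta_5$, one writes $y_{\delta_i,k} = y_{\delta_2,k} - c_i 2^k + \epsilon_{i,k}\,n$ with $c_i\in\{6,8,24,96,102\}$ and a correction $\epsilon_{i,k}\in\{0,1\}$ that measures whether subtracting $c_i 2^k$ crosses a multiple of $n$. Provided the slack between $y_{\delta_2,k}$ and the bounds $\delta_2,\,n-\delta_2$ dominates these perturbations, the two inequalities transfer verbatim.

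The main obstacle is the split $t=2$ versus $t>2$. When $t=2$, $n=2^{10}+1=1025$, $\delta_2=163$, and the margin between $y_{\delta_2,k}$ and $\delta_2$ is small for certain $k$, so the generic perturbation bound above fails for shifts as large as $96$ or $102$; this forces the tighter choices $\delta_4=\delta_2-8$ and $\delta_5=\delta_2-24$, which I would confirm by direct enumeration of $C_{155}$ and $C_{139}$. For $t>2$, the dominant term $7\cdot 2^{4t-1}/5$ grows large enough that the shifts $96,102$ remain negligible in each of the casework families, and the uniform estimate survives. Aligning these two regimes without duplicating the entire argument for small $t$ is the delicate step; I expect the proof in the appendix to treat $t=2$ as an explicit base case and then invoke the clean block structure for $t\ge 3$.
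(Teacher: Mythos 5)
Your plan coincides with the paper's proof (Appendix E): each $\delta_i$ is shown to satisfy $\delta_i\le y_{\delta_i,k}\le n-\delta_i$ for all $k\in[0,m-1]$, with $\delta_1$ dispatched by observing $C_{\delta_1}=\{\delta_1,2\delta_1,4\delta_1,3\delta_1\}$ and $\delta_2,\delta_3$ handled by exactly the casework on $k\bmod 4$ with explicit quotients that you describe; your ``perturbation of $y_{\delta_2,k}$'' bookkeeping for $\delta_3,\delta_4,\delta_5$ is just a repackaging of the paper's adjusted quotient terms (the extra $-1,-3,\dots$ in its formulas), and the paper in fact omits the $\delta_4,\delta_5$ details entirely. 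One arithmetic slip in your $t=2$ sanity check: there $n=1025$ and $\delta_2=2^7+\frac{2^8-1}{5}=179$, not $163$, so the cosets to enumerate are $C_{171}$ and $C_{155}$ rather than $C_{155}$ and $C_{139}$; this does not affect the validity of the method.
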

In the following, to verify the five largest coset leaders,  we
first introduce an iterative algorithm similar to {\it IA 1} for
partitioning  $ J^{(t)}=[1, 3\cdot2^{4t-6}]$ into $2^{t-2}$
subintervals $ J^{(t)}=J_1\bigcup J_2 \cdots \bigcup J_{2^{t-2}}$
for $t\geq 2$.

\noindent {\bf Iterative algorithm (IA 2):}

Suppose $J_i=[a_i,b_i]$, a  partition of $J^{(t)}=[1,
3\cdot2^{4t-6}]$ is obtained as
 $$J^{(t)}=J^{(t-1)}\bigcup (\bigcup \limits_{u=2^{t-3}+1}^{2^{t-2}} J_{u}),$$
$$\hbox{where} ~J_{u}=\left\{
\begin{array}{lll}
J_{u}=J_{j}+3 \cdot 2^{4\times (t-3)+2} &\mbox {if $u=j+2^{t-3}$, $1\leq j\leq 2^{t-3}-1$;}\\
\mbox{[}a_{2^{t-3}}+3 \cdot2^{4t-10}, 3 \cdot 2^{4t-6}\mbox{]}
&\mbox {if $ u=2^{t-2}$,}
 \end{array}
 \right.$$
with the initial conditions:

$ J_1=[1,3\cdot 2^2]=[1,12]$, $ J_2=[a_1+3\cdot 2^2, 3\cdot
2^6]=[13,192].
 $

For clarity, we discuss the partition in detail as follows.

1) If $t=2$, let $J_1=J_{2^0}=[1,3\cdot 2^{4\times 2-6}]=[1,3\cdot
2^2]=[1,12]=[a_1,b_1]$;

2) If $t=3$, let $J_2=J_{2^1}=[a_1+3\cdot 2^{4\times 0+2}, 3\cdot
2^{4\times 3-6}]$=$[a_1+3\cdot 2^2, 3\cdot 2^6]=[a_2,b_2]$;

3) If $t=4$, $J_3=J_{1+2^1}=J_1+3 \cdot 2^{4\times 1+2}=[a_1+3 \cdot
2^{6}, b_1+3 \cdot 2^{6}]=[a_3,b_3]$,

$J_4=J_{2^2}=[a_2+3 \cdot 2^{4\times 1+2}, 3 \cdot 2^{10}]=[a_2+3
\cdot 2^6, 3 \cdot 2^{10}]=[a_4,b_4]$;

4) Let $t\geq 5$.  Suppose a partition of $ J^{(t-1)}$ is given by

 $J^{(t-1)}=[1, 2^{4t-10}]$ $=J_1\bigcup J_2 \cdots \bigcup
J_{2^{t-3}}=\bigcup \limits_{j=1}^{2^{t-3}} J_j$, where
$J_{j}=[a_j,b_j]$.\\

  For $u=j+2^{t-3}$ with $1\leq j\leq 2^{t-3}-1$, define that

$J_{u}=J_{j}+3 \cdot 2^{4\times (t-3)+2}$ $=J_{j}+3
\cdot2^{4t-10}=[a_j+3 \cdot2^{4t-10},b_j+3 \cdot2^{4t-10}].$

For $u=2^{t-2}$,   define that

$J_{2^{t-2}}=[a_{2^{t-3}}+3 \cdot 2^{4\times (t-3)+2}, 3 \cdot
2^{4\times (t-2)+2}]$=$[a_{2^{t-3}}+3 \cdot2^{4t-10}, 3 \cdot
2^{4t-6}].$

Then, a partition of $J^{(t)}=[1, 3 \cdot 2^{4t-6}]$ is obtained as\\
 $J^{(t)}$ $=\bigcup
\limits_{s=1}^{2^{t-3}}\!\!J_s \bigcup (\!\!\bigcup
\limits_{u=2^{t-3}+1}^{2^{t-2}}\!\!J_{u})$
  $=J_{1}\cup J_{2} \cdots
\cup J_{2^{t-3}}\cup J_{2^{t-3}+1} \cdots \cup J_{2^{t-2}}.$\\

\noindent {\bf Remark 2.} From the \textit{IA 2} above, one can
derive the following:

(1): $J_{2^i}= \left\{
\begin{array}{lll}
\hbox{[}1,12\hbox{]}=\hbox{[}1,3\cdot 2^{4i+2}\hbox{]}   &\mbox {if $i=0$};\\
\hbox{[}1+3 \cdot (2^2+2^6+\cdots+2^{4i-2}),3 \cdot2^{4i+2}\hbox{]}
&\mbox {if $1\leq i \leq t-2$}.
 \end{array}
 \right.$

(2):  Generally, for each $s\in [1, 2^{t-2}-1]$  with its 2-adic
expansion  $s=(a_0a_1a_2 \cdots a_{t-3})_{2},$ denote
$i=i_{s}=\hbox{min}\{j|a_j=1,0\leq j\leq t-3\}$.  Similar to $I_s$
in Section 3,    $J_s$ can be given by
\begin{eqnarray*}
J_s&=&J_{2^i}+a_{i+1}2^{4i+6}+a_{i+2}2^{4i+10}+\cdots +a_{t-3}2^{4t-10}\\
&=&J_{2^i}+2^{4i+6}(a_{i+1}+a_{i+2}2^4+\cdots+a_{t-3}2^{4(t-i-4)})\\
&=&J_{2^i}+2^{4i+6}\lambda,
 \end{eqnarray*}
$~\hbox{where}~\lambda
=a_{i+1}+a_{i+2}2^4+\cdots+a_{t-3}2^{4(t-i-4)}.$

Notice $0 \leq \lambda \leq 1+2^4+\cdots+ 2^{4(t-i-4)}
<2^{4(t-i-3)}$ for $s\leq 2^{t-2}-1$,
   and $J_{2^{t-2}}=J_{2^{t-2}}+2^{4(t-2)+6}\times \lambda$ with
   $\lambda=0$. Thus, from (1) and  (2), one can then derive that for each $s\in [1, 2^{t-2}]$
    there exists $0 \leq \lambda
<2^{4(t-i-3)}$ such that

$$J_s=J_{i,\lambda}=J_{2^i}+2^{4i+6}\lambda,
 ~\hbox{where}~0\leq i \leq t-2.$$

\noindent{\bf Example 2}:  When $t=8$, then $J_{s}$ can be given
as follows:\\
 if $s=8=(00010)_{2}$, then $i=3$ and $\lambda=0$,   $J_{s}=J_{2^3}=J_{2^3}+2^{18} \cdot0$.\\
if  $s=10=(01010)_{2}$, then $i=1$ and $\lambda=2^{4}$,
$J_{s}=J_{2^1}+2^{14}=J_{2^1}+2^{10}\cdot 2^{4}$.\\
if  $s=15=(11110)_{2}$, then $i=0$ and $\lambda=273$,
$J_{s}\!=\!J_{2^0}+2^6+2^{10}+2^{14}\!=\!J_{2^0}+2^6\cdot273$.

\begin{lemma}\label{lemm4.3} Let $\delta_2$ be given as Lemma 4.2. If $x\in
[\delta_2+2, \delta_2+3 \cdot 2^{4t-5}]$ is odd, then $x$ is not a
coset leader.
\end{lemma}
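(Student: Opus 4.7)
The plan is to mirror the strategy of Lemma 3.3 with the partition machinery \textit{IA 2} replacing \textit{IA 1}. First I would write each candidate as $x=\delta_{2}+2l$ with $l\in J^{(t)}=[1,3\cdot 2^{4t-5}]$, then invoke the partition $J^{(t)}=\bigcup_{s=1}^{2^{t-2}} J_{s}$ together with Remark 2 to express
$$ l = l_{0} + 2^{4i+6}\lambda,\qquad l_{0}\in J_{2^{i}},\ 0\le i\le t-2,\ 0\le \lambda < 2^{4(t-i-3)}. $$
This puts $x$ into a two-parameter normal form indexed by $(i,\lambda)$, with $l_{0}$ constrained to the explicit interval
$$ l_{0}\in\bigl[1+3(2^{2}+2^{6}+\cdots+2^{4i-2}),\ 3\cdot 2^{4i+2}\bigr] $$
(with the convention that the lower endpoint is $1$ when $i=0$).

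Next I would select an exponent $k=k(i)$ of the form $k=4t-4i-5$ (the analogue of $k=2t-2i-3$ from Lemma 3.3) so that the dominant factor $2^{k}\cdot 2^{4i+7}\lambda$ reduces, via $2^{m}=2^{4t+2}\equiv -1\pmod{n}$, to a clean linear correction $-\lambda$. Combining this with the identity $5\delta_{2}=7\cdot 2^{4t-1}-1$ (so that $2^{k}\delta_{2}$ reduces cleanly after subtracting an appropriate multiple of $n$), I expect to obtain a closed-form expression
$$ y_{x,k} \equiv 2^{k}x \equiv 2\delta_{2} + (\text{explicit terms in } l_{0}) - 2^{\ast} + 1 - \lambda \pmod{n}, $$
where the ``$\ast$'' denotes a constant depending only on $i$, paralleling the formula $y_{x,k}=2\delta_{2}+2^{2(t-i-1)}l_{0}-2^{2(t-i-2)}+1-\lambda$ from Lemma 3.3.

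After that, using the prescribed ranges for $l_{0}$ and $\lambda$, I would verify the two-sided sandwich
$$ \frac{n-\delta_{2}}{2} < y_{x,k} < \frac{n+\delta_{2}}{2} $$
by splitting into the cases $i=0$ and $1\le i\le t-2$ exactly as in Lemma 3.3: the upper bound uses the upper endpoint of $J_{2^{i}}$ and $\lambda\ge 0$, while the lower bound requires summing the geometric series coming from the left endpoint of $J_{2^{i}}$ and using $\lambda<2^{4(t-i-3)}$. Once this sandwich is established, either $n-2y_{x,k}$ (if $y_{x,k}\le (n-1)/2$) or $2y_{x,k}-n\equiv 2y_{x,k}\pmod{n}$ (if $y_{x,k}\ge (n+1)/2$) is an element of $C_{x}$ lying strictly between $0$ and $\delta_{2}$, contradicting minimality of $x$.

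The main obstacle is the arithmetic tuning. Choosing the correct $k(i)$ so that \emph{both} the $\delta_{2}$-term (whose $5$ in the denominator, replacing the $3$ of Section 3, makes the bookkeeping heavier) and the $\lambda$-term reduce simultaneously into a single expression is the heart of the argument; verifying the lower sandwich bound when $i=t-2$ (where $\lambda=0$ is forced but $l_{0}$ spans its widest range) and when $i=0$ (where the geometric correction is absent) will be the tightest cases and will likely dictate precisely why the upper endpoint of the interval for $x$ is $\delta_{2}+3\cdot 2^{4t-5}$ rather than something larger.
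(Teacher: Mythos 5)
Your plan follows the paper's proof essentially step for step: the same decomposition $l=l_{0}+2^{4i+6}\lambda$ via \textit{IA 2} and Remark 2, the same choice $k=4t-4i-5$, the same sandwich $\frac{n-\delta_{2}}{2}<y_{_{x,k}}<\frac{n+\delta_{2}}{2}$ split into the cases $i=0$ and $1\le i\le t-2$, and the same conclusion via $n-2y_{_{x,k}}$ or $2y_{_{x,k}}-n$. Two small corrections for when you carry out the arithmetic: $l=(x-\delta_{2})/2$ ranges over $[1,3\cdot 2^{4t-6}]$ (not $3\cdot 2^{4t-5}$), and the leading constant that actually emerges is $\frac{16\delta_{2}+6}{7}=\frac{2n}{5}$ rather than $2\delta_{2}$.
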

\begin{proof} Similar to the proof of Lemma 3.3,
    one can denote
  $x=\delta_2+2(l_0+2^{4i+6}\lambda)$ with
$ l_0\in J_{2^i}.$

 Choose $k=4t-4i-5$.  We then have that
\begin{eqnarray*}
2^kx\equiv y_{_{x,k}}
&=& 2^k(\delta_2+2l_0+2^{4i+7}\lambda)-(2^{k-3}+\frac{2^{k-2}-2}{5}+\lambda)n\\
&=& 2^k(\delta_2+2l_0)-(\frac{7\cdot 2^{k-3}-2}{5})(\frac{40\delta_2+15}{7})+(2^{4t+2}-n)\lambda\\
&=& \frac{16\delta_2+6}{7}+2^{4(t-i-1)}l_0-3\cdot
2^{4(t-i-2)}-\lambda.
\end{eqnarray*}
Now, according to  the value ranges of $i, l_0$ and $\lambda$ above,
one can show $\frac{n-\delta_2}{2}<y_{_{x,k}}<\frac{n+\delta_2}{2}$
for  $k=4t-4i-5$.

First, we study an upper bound of $y_{_{x,k}}$:
\begin{eqnarray*}
y_{_{x,k}}&=& \frac{16\delta_2+6}{7}+2^{4(t-i-1)}l_0-3\cdot2^{4(t-i-2)}-\lambda\\
 &\leq&
\frac{16\delta_2+6}{7}+2^{4(t-i-1)}l_0-3\cdot
2^{4(t-i-2)}\\
 &\leq & \frac{16\delta_2+6}{7}+2^{4(t-i-1)}(3 \cdot2^{4i+2})-3\cdot
2^{4(t-i-2)}\\
&= &
\frac{16\delta_2+6}{7}+3 \cdot 2^{4t-2}-3 \cdot2^{4(t-i-2)}.
\end{eqnarray*}

Then, we will provide  the lower bounds of $y_{_{x,k}}$:

If $i=0$, then $l_0\in [1,12]$,  one can obtain that
\begin{eqnarray*}
y_{_{x,k}}&=& \frac{16\delta_2+6}{7}+2^{4(t-1)}l_0-3\cdot 2^{4(t-2)}-\lambda\\
 &>& \frac{16\delta_2+6}{7}+2^{4(t-1)}l_0-3\cdot
2^{4(t-2)}-2^{4(t-3)}\\
 &\geq & \frac{16\delta_2+6}{7}+2^{4(t-1)}\cdot 1-3\cdot
2^{4(t-2)}-2^{4(t-3)}.
\end{eqnarray*}
If $1\leq i \leq t-2$, thus $l_0\in [1+3 \cdot
(2^2+2^6+\cdots+2^{4i-2}),3 \cdot2^{4i+2}]$, we get that
\begin{eqnarray*}
 y_{_{x,k}}
 &=& \frac{16\delta_2+6}{7}+2^{4(t-i-1)}l_0-3\cdot 2^{4(t-i-2)}-\lambda\\
 &>&
 \frac{16\delta_2+6}{7}+2^{4(t-i-1)}l_0-3\cdot
2^{4(t-i-2)}-2^{4(t-i-3)}\\
 &\geq &\!\!\!\!\frac{16\delta_2+6}{7}+\!\!2^{4(t-i-1)}[1+3 \cdot (2^2+2^6+\cdots+2^{4i-2})]-\!\!3\cdot
2^{4(t-i-2)}-\!\!2^{4(t-i-3)}\\
 &= &\frac{16\delta_2+6}{7}+3\cdot(2^{4t-6}+2^{4t-10}+\!\!\cdots
 +2^{4t-4i-2})+\!\!13\cdot2^{4(t-i-2)}-\!\!2^{4(t-i-3)}\\
 &>&\!\!\!\!\frac{16\delta_2+6}{7}+\!\!3\cdot(2^{4t-6}
 +\!\!\cdots+\!\!2^{4t-4(t-2)-2})\!\!+\!\!13\cdot2^{4[t-(t-2)-2]}\!\!-\!\!2^{4[t-(t-2)-3]}\\
 &= &
 \frac{16\delta_2+6}{7}+3\cdot(2^{4t-6} +2^{4t-10}+\cdots +2^6)+13-2^{-4}.
\end{eqnarray*}
\begin{eqnarray*}
\hbox{Observe that}~ \frac{n+\delta_2}{2}&=&
\frac{16\delta_2+6}{7}+\frac{15\delta_2+3}{14}=\frac{16\delta_2+6}{7}+3\cdot
2^{4t-2},\\
\frac{n-\delta_2}{2}&=& \frac{16\delta_2+6}{7}+3(2^{4t-6}
+2^{4t-10}+\cdots+2^6)+2^2+1.
\end{eqnarray*}

Combining the previous bounds of $y_{_{x,k}}$, it is easy to check
$\frac{n-\delta_2}{2}<y_{_{x,k}}<\frac{n+\delta_2}{2}$.
 Similar to the proof of Lemma \ref{lemm3.3},
one  can infer that there exists an integer $j_{_{x,k}}
\in[1,\delta_2)$ such that  $j_{_{x,k}}\in C_{y_{_{x,k}}}$ for
$\frac{n-\delta_2}{2}\leq y_{_{x,k}}<\frac{n+\delta_2}{2}$.

Summarizing  the previous discussions,   when $x\in [\delta_2+2,
\delta_2+3 \cdot 2^{4t-5}]$,  there exists an integer $j_{_{x,k}}
\in [1,\delta_2)$ such that $j_{_{x,k}}\in C_x$,  it then follows
that
 $x$ is not a coset leader for all $x\in [\delta_2+2, \delta_2+3 \cdot 2^{4t-5}]$,
 this completes the proof.  \end{proof}
\begin{lemma}\label{lemm4.4} Let $\delta_{1}$  and $\delta_{2}$ be given as
Lemma 4.2.  If an odd integer $x>\delta_{2}$ and $x\not=\delta_{1}$,
then $x$ is not a coset leader.
\end{lemma}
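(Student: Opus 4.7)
The plan is to mirror the proof of Lemma~3.4 in Section~3, with the interval boundaries adjusted for the present values of $\delta_1$ and $\delta_2$. For each odd $x$ with $\delta_2 < x \le n-1$ and $x\neq\delta_1$, it suffices to exhibit a single $k\in[0,m-1]$ such that either $y_{x,k}<x$ or $n-y_{x,k}<x$; this produces an integer in $C_x$ strictly smaller than $x$ and forces $x$ not to be a coset leader.

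First I would handle the range above $\lceil 3n/16\rceil$ by the same five sub-cases as in Lemma~3.4, one for each interval $[\lceil n/(j{+}1)\rceil,\lfloor n/j\rfloor]$ with $j=1,2,3,4$ plus a final refinement requiring $k=4$. Explicitly: for $x\in[(n+1)/2,n-1]$ take $k=0$ and use $n-x<x$; for $x\in[\lceil n/3\rceil,(n-1)/2]$ take $k=1$ and $n-2x<x$; for $x\in[\lceil n/4\rceil,\lfloor(n-1)/3\rfloor]$ take $k=2$ and $4x-n<x$; for $x\in[\delta_1+1,\lfloor n/4\rfloor]$ take $k=2$ and $n-4x<x$; for $x\in[\lceil 3n/16\rceil,\delta_1-1]$ take $k=4$ and $16x-3n<x$. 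The point to watch, and the only genuine difference from Section~3, is that $\delta_1=n/5$ lies \emph{inside} rather than below this block of sub-ranges; both strict inequalities $n-4x<x$ and $16x-3n<x$ degenerate to equality exactly at $x=\delta_1$, which is why that single value is excluded in the hypothesis.

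To finish I would invoke Lemma~4.3, which already shows that every odd $x\in[\delta_2+2,\delta_2+3\cdot 2^{4t-5}]$ fails to be a coset leader. It then remains only to check that the two ranges together cover $(\delta_2,n-1]\setminus\{\delta_1\}$, i.e.\ the overlap inequality $\lceil 3n/16\rceil \le \delta_2 + 3\cdot 2^{4t-5}$. Using $n=2^{4t+2}+1$ and the identity $5\delta_2 = 7\cdot 2^{4t-1}-1$ one computes $\lceil 3n/16\rceil = 24\cdot 2^{4t-5}+1$ while $5(\delta_2 + 3\cdot 2^{4t-5}) = 127\cdot 2^{4t-5}-1$, so the inequality reduces to $7\cdot 2^{4t-5}\ge 6$, which is immediate for $t\ge 2$.

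The main obstacle here is purely bookkeeping. I have to verify that the five sub-cases tile $[\lceil 3n/16\rceil,n-1]$ with their floor/ceiling endpoints meeting flush, and that at every boundary integer the relevant strict inequality does not degenerate (the unique such degeneracy being at $x=\delta_1$). Each individual inequality is a one-line arithmetic check, and the overlap computation above closes the remaining gap between the two regimes, so no conceptual new idea is needed beyond the elementary ratio arguments already used in Section~3.
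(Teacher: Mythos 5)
Your proposal is correct and follows essentially the same route as the paper: the same five ratio-based sub-cases with $k=0,1,2,2,4$ covering $[\lceil 3n/16\rceil,n-1]\setminus\{\delta_1\}$, combined with Lemma~4.3 and the overlap check $\lceil 3n/16\rceil<\delta_2+3\cdot 2^{4t-5}$ (your arithmetic via $5\delta_2=7\cdot 2^{4t-1}-1$ matches the paper's bound). The only quibble is expository: in Section~3 the value $\delta_1=n/3$ also sits inside the block of sub-ranges (between the third and second intervals), so this is not actually a difference from Lemma~3.4, but that remark does not affect the validity of your argument.
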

\begin{proof}
  According to previous discussions,  to attain
the desired conclusion, it suffices to verify there exists
$y_{_{x,k}}<x$ or $n-y_{_{x,k}}<x$ for some $k \in [0, m-1]$ and
$x\in [\delta_2+2,n-1] \setminus \{\delta_1=\frac{n}{5}\}$. We give
our discussions in following cases:

(1): If $x\in [\frac{n+1}{2},n-1]$,  then $x<n<2x$ and
$n-y_{_{x,0}}=n-x<x$;

(2): If $x\in [\frac{n+1}{3},\frac{n-1}{2}]$, then $2x<n<3x$  and
  $n-y_{_{x,1}}=n-2x<x$;

(3): If $x\in [\frac{n+3}{4},\frac{n-2}{3}]$, then $3x<n<4x$ and
$y_{_{x,2}}=4x-n<x$;

(4): If $x\in [\frac{n}{5}+1,\frac{n-1}{4}]$, then $4x<n<5x$ and
$n-y_{_{x,2}}=n-4x<x$;

(5): If  $x\in [\frac{3(n-1)}{16}+1,\frac{n}{5}-1]$, then
$15x<3n<16x$ and  $y_{_{x,4}}=16x-3n<x$.

Combining Lemma \ref{lemm4.3}, one can infer that $x$ is   not a
coset leader for all $x\in [\delta_2+2, \delta_2 +3 \cdot
2^{4t-5}]\bigcup [\frac{3(n-1)}{16}+1, n-1] \setminus
\{\frac{n}{5}\}.$  Notice that $\frac{3(n-1)}{16}+1 =\delta_2+
\frac{2^{4t-2}+1}{5}+1 <\delta_2+3 \cdot 2^{4t-5}$,  the conclusion
can be easily derived from the discussions above.\end{proof}
 \begin{theorem}\label{theo4.5}
  Let $\delta_{1}, \delta_{2}, \delta_{3}$,
$\delta_{4}$ and $\delta_{5}$ be  given as Lemma 4.2, then they are
the first, second, third, fourth and fifth largest coset leaders,
respectively.
\end{theorem}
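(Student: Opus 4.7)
The plan is to mirror the structure of the proof of Theorem 3.5. Lemma 4.2 already tells us that $\delta_1,\ldots,\delta_5$ are coset leaders, and Lemma 4.4 tells us that no odd integer $x>\delta_2$ other than $\delta_1=\frac{n}{5}$ can be a coset leader. Combining these two facts immediately yields that $\delta_1$ is the largest coset leader and $\delta_2$ is the second largest. What remains is to rule out every odd integer $x$ lying strictly between $\delta_5$ and $\delta_2$ with $x\notin\{\delta_3,\delta_4\}$.

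For this final step I would explicitly exhibit, for each such $x$, a pair $(k,\epsilon)$ with $0\le k\le m-1$ and $\epsilon\in\{+1,-1\}$ such that $\epsilon\cdot 2^k x \pmod n$ is a positive integer smaller than $x$, which forces $C_x$ to contain a representative less than $x$. The case analysis naturally splits according to the definition of $\delta_4,\delta_5$ in Lemma 4.2: for $t=2$ the odd gaps to close are $x=\delta_2-2,\delta_2-4$ and $x=\delta_2-2j$ for $j=5,\ldots,11$, while for $t>2$ they are $x=\delta_2-2,\delta_2-4$, the arithmetic progression $x=\delta_2-2j$ for $j=4,5,\ldots,47$, and $x=\delta_2-98,\delta_2-100$. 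In each instance the computation will be of the same flavor as in Theorem 3.5: choose $k\approx 4t-c$ so that $2^k\delta_2$ is congruent modulo $n$ to a small multiple of $\delta_2$ after subtracting an appropriate multiple of $n$ (using that $5\delta_2 = \tfrac{16\delta_2+6}{7}\cdot\tfrac{35}{16}-\cdots$ and the identity $2^{4t+2}\equiv -1\pmod n$), then track the additive correction contributed by $x-\delta_2$.

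The main obstacle will be keeping the bookkeeping organized across the many small offsets in the $t>2$ case. The cleanest way to proceed is to fix one or two canonical exponents $k$ (for instance $k=4t-3$ and $k=4t-5$, analogous to $k=2t-3$ in Section~3), compute $2^k\delta_2 \bmod n$ and $-2^k\delta_2 \bmod n$ explicitly, and then show that the adjustments for each admissible correction $c\in\{2,4,8,10,\ldots,94,98,100\}$ keep $\epsilon\cdot 2^k(\delta_2-c)\bmod n$ below $\delta_2-c$. A similar short computation using the identity $(2^{4t-4}-1)n = 5\delta_2\cdot(2^{4t-4}-1)\cdot 7$ (once normalized) suffices to handle $\delta_2-2$ and $\delta_2-4$, exactly as in Theorem 3.5.

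Having verified non-coset-leader status for every intermediate odd integer, the ordering $\delta_1>\delta_2>\delta_3>\delta_4>\delta_5$ given in Lemma 4.2 then identifies $\delta_3,\delta_4,\delta_5$ as the third, fourth and fifth largest coset leaders, completing the proof.
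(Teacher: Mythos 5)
Your proposal follows essentially the same route as the paper: deduce from Lemmas 4.2--4.4 that $\delta_1$ and $\delta_2$ are the first and second largest coset leaders, then eliminate every odd $x$ strictly between $\delta_5$ and $\delta_2$ other than $\delta_3,\delta_4$ by exhibiting an element of $C_x$ smaller than $x$ of the form $\pm 2^k x \bmod n$, and your enumeration of the offsets in both the $t=2$ and $t>2$ cases is correct. The paper carries out this computation explicitly only for $\delta_2-2$ and $\delta_2-4$ (using $k=4t-1$) and declares the remaining intervals ``similar,'' so your plan is, if anything, more complete in spirit, though like the paper it leaves the bulk of the individual verifications unexecuted.
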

 \begin{proof} According to Lemmas 4.2-4.4, one can naturally infer that $\delta_1$ and $\delta_2$
 are the  first and second largest coset leaders, respectively.
   To verify that  $\delta_3$
is the third  largest coset leader,  it suffices to show neither
$\delta_2-4$ nor  $\delta_2-2$ is  not a coset leader. It is not
difficult to derive that
\begin{eqnarray*}
-2^{4t-1}(\delta_2-4) &\equiv&(2^{4t-4}+\frac{2^{4t-3}-2}{5})n-2^{4t-1}(\delta_2-4)\\
               &=&\delta_2-2^{4t-4}-\frac{2^{4t-2}+1}{5}<\delta_2-4,\\
2^{4t-1}(\delta_2-2) &\equiv&2^{4t-1}(\delta_2-2)-(2^{4t-4}+\frac{2^{4t-3}-2}{5})n\\
               &=&2^{4t-1}-\frac{2^{4t-3}-2}{5}-2^{4t-4}<\delta_2-2,
               \end{eqnarray*}
which implies  that $\delta_2-4$ and  $\delta_2-2$ both are not
coset leaders.

Similarly, it shall be obtained that $x$ is not a coset leader for
$\delta_5 < x <\delta_4$ and $\delta_4 <x< \delta_3$, the detailed
proof is omitted.  As thus, it is easy to know $\delta_4$(resp.
$\delta_5$) is the fourth (resp. fifth) largest coset leader,
 which completes the proof.\end{proof}

\begin{lemma}\label{lemm4.6}  Let $\delta_{1}, \delta_{2}, \delta_{3}$,
$\delta_{4}$ and $\delta_{5}$ be  given as Lemma 4.2. If $1 \leq x
\leq  2^{2t+2}+2^{2t+1}+3$ or $x\in \{\delta_2, \delta_3, \delta_4,
\delta_5\}$,  then $|C_x|=2m$.
\end{lemma}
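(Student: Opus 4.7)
The strategy parallels the proof of Lemma~3.6, with the new complication that $m=4t+2$ is even so $2m=4(2t+1)$ has a richer divisor structure than in the odd case. I would suppose for contradiction that $|C_x|=k<2m$; then $k\mid 2m$ and $n\mid x(2^k-1)$. Because $m=2(2t+1)$ and $2m=4(2t+1)$ share the same odd part $2t+1$, every proper divisor $k$ of $2m$ either already divides $m$, or has the form $k=4\ell$ with $\ell\mid(2t+1)$ and $\ell<2t+1$. This dichotomy will organize the whole argument.

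If $k\mid m$, then $2^k-1\mid 2^m-1$, so $\gcd(2^k-1,n)$ divides $\gcd(2^m-1,2^m+1)=1$, forcing $n\mid x$, which contradicts $1\le x<n$. In the remaining case $k=4\ell$, I would set $d=(2t+1)/\ell$, an odd integer $\ge 3$, and compute $\gcd(2^{4\ell}-1,n)=2^{2\ell}+1$: indeed $2^{2\ell}+1\mid 2^{2\ell d}+1=2^m+1$ since $d$ is odd, while $\gcd(2^{2\ell}-1,2^m+1)=\gcd(2^{2\ell}-1,2)=1$ because $2^m\equiv 1\pmod{2^{2\ell}-1}$. Consequently $n\mid x(2^k-1)$ forces the threshold bound $x\ge(2^m+1)/(2^{2\ell}+1)$.

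The main step is then to verify that this lower bound exceeds $2^{2t+2}+2^{2t+1}+3<2^{m/2+2}$ for every admissible $\ell$. Since $d\ge 3$, one always has $2\ell\le m/3$, and the alternating-sum factorization yields $(2^m+1)/(2^{2\ell}+1)\ge 2^{m-2\ell}-2^{m-4\ell}$. When $3\mid(2t+1)$ (so $m\ge 18$), the minimum threshold $2^{2m/3}-2^{m/3}$ dominates $2^{m/2+2}$ for $t\ge 4$; when $3\nmid(2t+1)$ one has $d\ge 5$, forcing $2\ell\le m/5$ and a threshold of order $2^{4m/5}$, comfortably above $2^{m/2+2}$ for $m\ge 10$. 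The small cases $m\in\{10,14\}$, where $2t+1$ is prime and only $\ell=1$, $k=4$ arises (threshold $(2^m+1)/5$), I would check by direct arithmetic.

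For the four exceptional values $x\in\{\delta_2,\delta_3,\delta_4,\delta_5\}$ the conclusion comes for free from Lemma~4.2, whose proof exhibits $y_{x,k}>x$ for every $1\le k\le 2m-1$; hence $2^kx\not\equiv x\pmod n$ in that range and $|C_x|=2m$. I expect the main obstacle to lie in the threshold inequality when $m$ is small and the smallest prime factor of $2t+1$ is moderate (so $d=3$ or $d=5$ is attained), where the crude bound $2^{m-2\ell-1}$ has a thin margin against $3\cdot 2^{m/2}$ and one must track the alternating-sum correction terms precisely rather than discard them.
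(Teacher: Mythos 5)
Your proof is correct and rests on the same core argument as the paper's: assume $|C_x|=k<2m$, so $n\mid x(2^k-1)$, and rule this out by computing $\gcd(2^k-1,2^m+1)$ and comparing the resulting lower bound on $x$ against $2^{2t+2}+2^{2t+1}+3$; the exceptional values $\delta_2,\dots,\delta_5$ are dispatched exactly as in the paper, by citing the inequalities $y_{_{x,k}}>x$ established in the proof of Lemma 4.2. The only real difference is organizational: the paper enumerates the large divisors $k=m$, $k=2m/3$, $k=m/2$ (handled via gcd computations) and lumps all remaining $k\le 2m/5$ into a single magnitude argument $1\le x(2^k-1)<n$, whereas you partition the proper divisors of $2m=4(2t+1)$ into those dividing $m$ (where $\gcd(2^k-1,n)=1$) and those of the form $4\ell$ with $\ell\mid(2t+1)$ (where $\gcd(2^{4\ell}-1,n)=2^{2\ell}+1$), treating the paper's second and third cases uniformly. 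This buys a cleaner, more structural case split at the cost of a slightly more delicate threshold verification when the smallest admissible $d$ is $3$ (i.e.\ $m\equiv 0\bmod 3$, so $m\ge 18$), where $2^{2m/3}-2^{m/3}+1$ must be checked to exceed $2^{2t+2}+2^{2t+1}+3$ — which it does, as you note; your computations of the gcds and the checks for $m=10,14$ are all sound.
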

\begin{proof} Seeking a contradiction, if $|C_x|=k$ with $k<2m$,
 we have $x(2^{k}-1) \equiv 0$. From $k|2m$, one can deduce
 $k=m, \frac{2m}{3},  \frac{2m}{4}$ or $k \leq \frac{2m}{5}$.

(1): If $k=\frac{2m}{2}=m$ or $k=\frac{2m}{4}=\frac{m}{2}$, then
$(2^{k}-1, n)=(2^{m}-1, 2^{m}+1)=(2^{\frac{m}{2}}-1, 2^{m}+1)=1$ and
$1 \leq x \leq 2^{2t+2}+2^{2t+1}+3<n$,  $x(2^{k}-1)\not \equiv 0$.

(2):  If $k=\frac{2m}{3}$, then $m\equiv 0 (\bmod ~3)$ and
$(2^{k}-1, n=2^{m}+1)=2^{\frac{m}{3}}+1$. Since $1 \leq x \leq
2^{2t+2}+2^{2t+1}+3<\frac{n}{2^{\frac{m}{3}}+1}$
$=2^{\frac{2m}{3}}-2^{\frac{m}{3}}+1$,   $x(2^{k}-1) \not \equiv 0$.

(3): If $k\leq \frac{2m}{5}$, then $1\leq x(2^{k}-1)\leq
(2^{2t+2}+2^{2t+1}+3)(2^{k}-1) <n$, $x(2^{k}-1) \not \equiv 0 $.

Collecting  the discussions above, we then conclude that $x(2^{k}-1)
\not \equiv 0 $, a contradiction yields. Hence $|C_x|=2m$ for $1
\leq x \leq 2^{2t+2}+2^{2t+1}+3$.

In the proof of Lemma \ref{lemm4.2}, we have derived that when $x\in
\{\delta_2, \delta_3, \delta_4, \delta_5\}$,
 $y_{_{x,k}}>x$
for $1\leq k \leq 2m-1$, it then follows that $x(2^{k}-1) \not
\equiv 0 $ and $|C_x|=2m$. \end{proof}
 According to  the results
above, one can get the following theorem.
\begin{theorem}\label{theo4.7} Denote that $a=2^{2t+2}+2^{2t+1}$.
  Let $\delta_1$, $\delta_2$, $\delta_3$, $\delta_4$ and $\delta_5$ be
given as Lemma 4.2. If $\delta$ is odd,  then   we have the
following:

 (1) The narrow-sense   BCH codes $\mathcal{C}(n,2,\delta,1)$ have
parameters
 $$ \left\{
\begin{array}{lll}
\hbox{[}n, n-m\delta+3m, d \geq \delta\hbox{]}  &\mbox {if $2^{2t+1}+3\leq \delta\leq 2^{2t+2}-5$;}\\
\hbox{[}n, n-m\delta+11m, d \geq \delta\hbox{]}  &\mbox {if $2^{2t+2}+5\leq \delta\leq 2^{2t+2}+2^{2t}-3$;}\\
\hbox{[}n, n-m\delta+15m, d \geq \delta\hbox{]}  &\mbox {if
$2^{2t+2}+2^{2t}+3\leq \delta \leq a-3$;}\\
\hbox{[}n, n-ma+16m, d \geq a+5 \hbox{]}  &\mbox {if $a-1 \leq \delta\leq a+5$;}\\
\hbox{[}n, 2m(i-1)+5, d \geq \delta_i\hbox{]}    &\mbox {if $\delta_{i+1}+2 \leq \delta \leq \delta_{i}(i=1,2,3,4) $;}\\
\hbox{[}n, 1, n\hbox{]}    &\mbox {if $\delta_1+2 \leq \delta \leq
n$.}
 \end{array}
\right.$$

 (2) The   BCH codes $\mathcal{C}(n,2,\delta+1,0)$ have parameters
 $$ \left\{
\begin{array}{lll}
\hbox{[}n, 2^m-m\delta+3m, d \geq 2\delta\hbox{]}  &\mbox {if $2^{2t+1}+3\leq \delta\leq 2^{2t+2}-5$;}\\
\hbox{[}n, 2^m-m\delta+11m, d \geq 2\delta\hbox{]}  &\mbox {if $2^{2t+2}+5\leq \delta\leq 2^{2t+2}+2^{2t}-3$;}\\
\hbox{[}n, 2^m-m\delta+15m, d \geq 2\delta\hbox{]}  &\mbox {if
$2^{2t+2}+2^{2t}+3\leq\delta \leq a-3$;}\\
\hbox{[}n, 2^m-ma+16m, d \geq 2(a+5)\hbox{]}  &\mbox {if $a-1 \leq \delta\leq a+5$;}\\
\hbox{[}n, 2m(i-1)+4, d \geq 2\delta_i\hbox{]}    &\mbox {if
$\delta_{i}+2 \leq \delta \leq \delta_{i}(i=1,2,3,4)$.}
\end{array}
\right.$$
\end{theorem}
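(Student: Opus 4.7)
The plan is to reduce every case to counting coset leaders. For $\delta$ odd, the defining set $T_\delta = C_1 \cup \cdots \cup C_{\delta-1}$ is the disjoint union $\bigcup_{x} C_x$ over coset leaders $x \in [1,\delta-1]$, so $|T_\delta|=\sum|C_x|$ and $\dim \mathcal{C}(n,2,\delta,1)=n-|T_\delta|$. The extended defining set $T'_\delta = \{0\}\cup T_\delta$ adds a single element, giving $\dim \mathcal{C}(n,2,\delta+1,0)=\dim \mathcal{C}(n,2,\delta,1)-1$. Since $n=2^m+1$ forces $2^m\equiv -1$, every cyclotomic coset modulo $n$ is closed under negation, so $T'_\delta$ contains the $2\delta-1$ consecutive integers $-(\delta-1),\ldots,\delta-1$ and the BCH bound yields $d\geq 2\delta$; the bound $d\geq\delta$ for $T_\delta$ is standard. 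Lemma 4.6 supplies $|C_x|=2m$ for every leader in the regions relevant to us, with the single exception of $x=\delta_1=n/5$: the powers of $2$ modulo $5$ cycle through $1,2,4,3$, so $C_{\delta_1}=\{\delta_1,2\delta_1,3\delta_1,4\delta_1\}$ has size $4$. This exceptional value, together with the exclusion of $C_0$, is the source of the ``$+5$'' appearing in the upper rows.

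For the first four rows of each table (those with $\delta\leq a+5$ where $a=2^{2t+2}+2^{2t+1}$), the approach is to enumerate odd leaders in $[1,\delta-1]$ directly from Theorem 4.1. Its parts (1)--(4) identify four successive ``leader blocks'' of odd integers, while part (5) identifies the handful of non-leader odd integers sitting in the gaps. A straightforward count of odd numbers in each block, followed by cancellation, should yield $(\delta-3)/2$, $(\delta-11)/2$, $(\delta-15)/2$, and $(a-16)/2$ leaders in the four successive ranges of $\delta$. Multiplying by $2m$ and subtracting from $n$ (respectively from $2^m$) gives the stated dimensions.

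For the upper rows $\delta_{i+1}+2\leq\delta\leq\delta_i$ the plan is to use a complementary count. The partition $\mathbb{Z}_n=C_0\cup\bigcup_{x\neq 0}C_x$ yields $\sum_{x\neq 0}|C_x|=n-1$. By Theorem 4.5 the only coset leaders exceeding $\delta_{i+1}$ are $\delta_1,\delta_2,\ldots,\delta_i$, hence
\[
|T_\delta|=(n-1)-|C_{\delta_1}|-\sum_{j=2}^{i}|C_{\delta_j}|=(n-1)-4-2m(i-1)=n-5-2m(i-1),
\]
producing the dimensions $2m(i-1)+5$ and $2m(i-1)+4$. The final row $\delta_1+2\leq\delta\leq n$ corresponds to $T_\delta=\{1,\ldots,n-1\}$, giving the repetition code $[n,1,n]$. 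The stronger bound $d\geq\delta_i$ claimed in these rows is available because $T_\delta$ is constant as $\delta$ ranges over $[\delta_{i+1}+2,\delta_i]$, so the BCH bound may be applied with the largest admissible designed distance.

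The main obstacle I anticipate is the bookkeeping for the row $a-1\leq\delta\leq a+5$ in case (4): this interval straddles the boundary of Theorem 4.1's coverage and contains three non-leader odd integers $a-1$, $a+1$, $a+3$, so some care is required to verify that the count of odd leaders in $[1,\delta-1]$ remains constant at $(a-16)/2$ throughout. Once the block sums from Theorem 4.1 are tabulated, the arithmetic in the remaining rows is essentially mechanical.
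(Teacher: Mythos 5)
Your proposal is correct and follows essentially the same route as the paper's proof in Appendix F: counting odd coset leaders in $[1,\delta-1]$ via Theorem 4.1 and Lemma 4.6 for the lower ranges of $\delta$, and using the complementary count $|T_\delta|=(n-1)-|C_{\delta_1}|-\sum_{j=2}^{i}|C_{\delta_j}|$ with $|C_{\delta_1}|=4$ for the ranges between consecutive $\delta_i$'s. Your leader counts $(\delta-3)/2$, $(\delta-11)/2$, $(\delta-15)/2$, $(a-16)/2$ agree with the paper's $\frac{\delta-1}{2}-1$, $\frac{\delta-1}{2}-5$, $\frac{\delta-1}{2}-7$, $2^{2t+1}+2^{2t}-8$, and your handling of the BCH bound and of part (2) matches the paper's.
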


\section{BCH  codes of length $n=2^{m}+1$ with $m=8t+4$}
\label{sec5} In this section, suppose that  $n=2^{m}+1$ with $m=8t+4
\geq 12$,  we only list the main results, omitting their proofs
because of the   similarity with ones in the two sections above.
\begin{theorem}\label{theo5.1}
 Suppose that $x$ is odd, then  we have the following:

(1) If  $1\leq x\leq 2^{4t+2}-1$, then $x$ is a coset leader.

(2) If   $2^{4t+2}+3\leq x\leq 2^{4t+3}-5$, then $x$ is a coset
leader.

(3) If   $2^{4t+3}+5\leq x\leq 2^{4t+3}+2^{4t+1}-3$, then $x$ is a
coset leader.

(4) If   $2^{4t+3}+2^{4t+1}+3\leq x\leq 2^{4t+3}+2^{4t+2}-3$, then
$x$ is a coset leader.

(5) If  $2^{4t+3}+2^{4t+2}+5\leq x\leq
2^{4t+3}+2^{4t+2}+2^{4t+1}-3$, then $x$ is a coset leader.

(6) If $x=2^{4t+2}+1$,
       $2^{4t+3}-3\leq x \leq 2^{4t+3}+3$,
       $2^{4t+3}+2^{4t+1}-1\leq x \leq 2^{4t+3}+2^{4t+1}+1$,
       $2^{4t+3}+2^{4t+2}-1\leq x \leq 2^{4t+3}+2^{4t+2}+3$ or
$2^{4t+3}+2^{4t+2}+2^{4t+1}-1\leq x \leq
2^{4t+3}+2^{4t+2}+2^{4t+1}+1$,
 then $x$ is not a coset leader.
 \end{theorem}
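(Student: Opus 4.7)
The plan is to mirror the strategy used for Theorems 3.1 and 4.1, but now with $m=8t+4$ which forces five positive ranges instead of three. By Notation 3, an odd $x$ is a coset leader modulo $n=2^m+1$ if and only if $y_{x,k}\geq x$ and $n-y_{x,k}\geq x$ for every $0\le k\le m-1$, where $y_{x,k}\equiv 2^k x\pmod n$. Thus the positive parts (1)--(5) reduce to uniform verifications that no iterate of doubling produces a residue below $x$ or above $n-x$, while the negative part (6) reduces to exhibiting at least one offending $k$ for each listed $x$.

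For the positive ranges in (1)--(5), I would write $x$ in the mixed-radix form suggested by each interval; for example in (5), write $x=2^{4t+3}+2^{4t+2}+r$ with $5\le r\le 2^{4t+1}-3$ odd. The critical values of $k$ to inspect are those close to $m/2=4t+2$ (so that $2^k x$ wraps once or twice around $n$), since these are the indices that actually threaten the bound. For each critical $k$ I would compute $y_{x,k}=2^k x-qn$ exactly in closed form, with $q$ determined by the leading bits of $x$, and then estimate the result against $x$ and against $n-x$. The remaining $k$'s (those far from the wrapping regime) are handled by the same easy $y_{x,k}\geq 2x>x$ or $n-y_{x,k}\geq n-x/2$ argument used implicitly in the earlier sections. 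Since the intervals in (1)--(5) are chosen precisely to keep the leading binary pattern of $x$ away from $0\ldots 01$ or $1\ldots 10$ patterns near $m/2$, the residues $y_{x,k}$ stay bounded away from $0$ and $n$, and both bounds follow.

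For the negative part (6), I would produce an explicit witness for each sub-interval. For the isolated values ($x=2^{4t+2}+1$, $x=2^{4t+3}\pm 1, 2^{4t+3}\pm 3$, etc.) a single doubling (or two) makes $y_{x,k}$ or $n-y_{x,k}$ drop below $x$; for instance, $2^{m/2}\cdot(2^{4t+2}+1)\equiv 2^{4t+2}-1\pmod n$, which is smaller. For the boundary clusters near $2^{4t+3}+2^{4t+1}$, $2^{4t+3}+2^{4t+2}$, and $2^{4t+3}+2^{4t+2}+2^{4t+1}$, I would exploit the identities $2^{4t+1}\cdot 2^{4t+3}\equiv -2^{4t}\cdot 2\equiv -2^{4t+1}$ and $2^{m/2}\equiv -2^{-m/2}$ inside $\mathbb Z_n^\times$, which reduce the verification to a short arithmetic check; the same trick already appears implicitly in the cases $x=2^{2t+1}+1$ in Theorem 4.1.

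The main obstacle will not be any single computation but rather the bookkeeping: with five positive intervals and four clusters of exceptions, one must choose the right exponent $k$ for each sub-case and handle the edge $x$'s at the interval endpoints without overlap. A secondary difficulty is that the gaps between the ``bad'' clusters in (6) contain odd $x$'s which must be certified as coset leaders in (2)--(5); here the writing has to be careful to cover $x$'s like $2^{4t+3}+2^{4t+1}\pm 3$ exactly once. Because the structural identities $2^{4t+2}\equiv -2^{-4t-2}$ and $2^{m}\equiv -1$ behave exactly as in Section 4 once one rescales by the factor of $2$ in the exponent, I expect no new technique to be necessary beyond those of Theorems 3.1 and 4.1; the proof will be a longer but structurally identical case analysis, which is presumably why the authors elect to omit it.
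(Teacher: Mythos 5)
Your strategy is the one the paper itself intends: the authors omit the proof of Theorem 5.1 entirely, saying only that it is similar to those of Theorems 3.1 and 4.1, and your plan (verify $y_{_{x,k}}\ge x$ and $n-y_{_{x,k}}\ge x$ over all $k$ for the five positive ranges; exhibit a smaller member of $C_x$ for each exception in (6)) is exactly that template, so there is no methodological divergence to report. As written, however, the proposal has two concrete defects. First, one of the two identities you display for part (6) is false: $2^{4t+1}\cdot 2^{4t+3}=2^{8t+4}=2^{m}\equiv -1\pmod n$, not $-2^{4t+1}$ (that would force $2^{4t+1}\equiv 1$). The witnesses for the exceptional clusters must be computed one by one in the style of Appendix D, item (5): e.g.\ $(2^{4t+2}+1)2^{4t+2}=2^m+2^{4t+2}\equiv 2^{4t+2}-1$ (which you do have right), then analogous congruences for $2^{4t+3}\pm i$ with $i=1,3$ and for each of the three remaining clusters; none of these is supplied.

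Second, the assertion that only $k$ ``close to $m/2$'' need detailed inspection, with the quotient $q$ ``determined by the leading bits of $x$,'' understates the actual mechanism. In Appendices A and D the quotient in $y_{_{x,k}}=2^kx-qn$ depends on the low-order bits of $x$: for each $k$ in roughly $[\,m/2+1,\,m-2\,]$ the range of $l$ (where $x=(\hbox{base})+2l$) must be partitioned into subintervals indexed by an identity tag $\lambda$, with $q=q(\lambda)$, and the resulting bounds are then controlled uniformly in $\lambda$ via Lemma 0.1; the boundary values $k=m-2,\,m-1$ get separate treatment. With five base patterns this amounts to roughly twenty-five sub-cases of explicit inequalities, and it is precisely these estimates that produce the exact endpoints in (1)--(5) (for instance, why range (3) begins at $2^{4t+3}+5$ rather than $2^{4t+3}+3$, matching the exclusion of $2^{4t+3}+3$ in (6)). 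Until those inequalities are carried out, the outline is a plausible program rather than a proof.
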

\begin{theorem}\label{theo5.2}   Let $\delta_{1}=\frac{3n}{17}$.

 If $t=1$, let $\delta_{2}=\delta_{1}-6$,
 $\delta_{3}=\delta_{2}-24$, $\delta_{4}=\delta_{3}-2$
and $\delta_{5}=\delta_{4}-38$;

If $t=2$, let $\delta_{2}=\delta_{1}-\frac{\delta_1+45}{128}$ and
$\delta_{3}=\delta_{2}-90$,  $\delta_{4}=\delta_{3}-6$,
$\delta_{5}=\delta_{4}-384$;

If $t\geq 3$, let $\delta_{2}=\delta_{1}-\frac{\delta_1+45}{128}$
and $\delta_{3}=\delta_{2}-90$, $\delta_{4}=\delta_{3}-22950$,
$\delta_{5}=\delta_{4}-90$.

Then they are the first, second, third, fourth and fifth largest
coset leaders, respectively.
 \end{theorem}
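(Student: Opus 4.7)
The plan is to mirror the two-phase argument used in Sections 3 and 4. First I would prove the analog of Lemma 3.2 and Lemma 4.2, verifying that each of $\delta_1,\ldots,\delta_5$ is itself a coset leader by checking $y_{\delta_i,k}\geq \delta_i$ and $n-y_{\delta_i,k}\geq \delta_i$ for all $0\leq k\leq m-1$. The divisibility $17\mid n$ follows from $2^{8}\equiv 1\pmod{17}$, so $\delta_1=3n/17$ is clearly a coset leader. For $\delta_2,\ldots,\delta_5$ the verification reduces to a short routine sequence of reductions $2^k\delta_i \bmod n$. The three-way split on $t$ arises because the binary expansion of $\delta_1=3n/17$ becomes genuinely periodic of period $8$ only once $t\geq 3$; for $t=1$ and $t=2$ the relevant reductions are dominated by edge effects, which forces the exceptional values $\delta_2=\delta_1-6$ (resp.\ $\delta_2=\delta_1-(\delta_1+45)/128$) and the shorter gaps among the $\delta_i$.

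Phase 2 is to show that no odd $x\in(\delta_2,n-1]$ other than $\delta_1$ is a coset leader. Following Lemmas 3.3--3.4 (resp.\ 4.3--4.4), I split the range into two pieces. For $x\in[\lceil 3n/17\rceil,n-1]\setminus\{\delta_1\}$, a chain of case distinctions of the form $n/(c+1)<x<n/c$ for $c=1,2,3,\ldots$ produces either $n-y_{x,k}<x$ or $y_{x,k}<x$ for a specific small $k$; compared with Sections 3 and 4 this chain must be pushed further, down through the interval $(n/6,3n/17)$, since $\delta_1$ is now a fraction with denominator $17$. For $x\in[\delta_2+2,\lceil 3n/17\rceil]$ I would design an iterative partition algorithm (analogous to IA 1 and IA 2) with base scale $2^{8}$, reflecting $\mathrm{ord}_{17}(2)=8$. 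Writing $x=\delta_2+2l$ and expressing $l=l_0+2^{8i+c}\lambda$ on the $s$th subinterval, I would select $k=k(i)$ so that $2^k x\bmod n$ lands in the window $\bigl((n-\delta_2)/2,(n+\delta_2)/2\bigr)$, immediately exhibiting an element of $C_x$ strictly smaller than $\delta_2$.

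It then remains to fill the gaps between consecutive $\delta_i$, in particular the large gap $\delta_4=\delta_3-22950$ for $t\geq 3$. For each intermediate odd $x$ one exhibits a single explicit exponent $k$ for which $2^k x\bmod n<x$; these are direct computations of the same flavor as those concluding the proofs of Theorems 3.5 and 4.5, and the cardinality count $|C_x|=2m$ can then be reproduced exactly as in Lemmas 3.6 and 4.6, using that the only possibilities $k\mid 2m$ with $k<2m$ are $k=m,2m/3,m/2,m/4$ or $k\leq 2m/5$, each of which is ruled out by a size comparison.

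The chief obstacle will be constructing the iterative partition. Because the period here is $8$ rather than $2$ or $4$, the blocks grow by factors of $2^{8}$, so the recursion is coarser while the target window is much tighter: $\delta_1-\delta_2\approx\delta_1/128$ for $t\geq 2$, compared with $\delta_1/2$ in Section 3 and $\delta_1/8$ in Section 4. The seed intervals and the choice of $k(i)$ must therefore be fine-tuned so that the upper and lower bounds on $y_{x,k}$ simultaneously beat $(n+\delta_2)/2$ and $(n-\delta_2)/2$ for every admissible $(i,l_0,\lambda)$; balancing these two inequalities under so narrow a constraint is the main technical point. Once this partition is in hand, the remainder of the proof is bookkeeping paralleling Sections 3 and 4.
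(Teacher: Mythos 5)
Your plan coincides with the paper's intended argument: the paper explicitly omits the proof of this theorem, stating only that it parallels Sections 3 and 4, and your outline --- verifying each $\delta_i$ directly as in Lemmas 3.2/4.2, covering $(\delta_2,n)\setminus\{\delta_1\}$ by the interval chain pushed past $3n/16$ together with a base-$2^{8}$ iterative partition reflecting $\mathrm{ord}_{17}(2)=8$, and finishing with the gap-filling and cardinality checks of Lemmas 3.6/4.6 --- is precisely that parallel. It remains a sketch (the seed intervals and the exponents $k(i)$ are not actually constructed, which is where all the work lies), but so is the paper's treatment, and the structure is sound.
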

\begin{lemma}\label{lemm5.3}
 Let ${\delta_1, \delta_2, \delta_3, \delta_4,
\delta_5}$ be given as Theorem 5.2.  If $1 \leq x \leq
2^{4t+3}+2^{4t+2}+2^{4t+1}+1$ or $x\in \{\delta_2, \delta_3,
\delta_4, \delta_5\}$,  then  $|C_x|=2m$ and $|C_{\delta_1}|=8$.
\end{lemma}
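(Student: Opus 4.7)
The plan is to follow the contradiction strategy used in Lemmas~\ref{lemm3.6} and \ref{lemm4.6}, supplemented by a short direct calculation for $|C_{\delta_1}|$. The argument naturally splits into three parts: verifying $|C_x|=2m$ for $x$ in the stated range $[1,\,2^{4t+3}+2^{4t+2}+2^{4t+1}+1]$, verifying $|C_x|=2m$ for $x\in\{\delta_2,\delta_3,\delta_4,\delta_5\}$, and computing $|C_{\delta_1}|=8$.

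For the first part, I will assume $|C_x|=k<2m=8(2t+1)$ and derive a contradiction from $n\mid x(2^k-1)$ with $k\mid 2m$. The only divisors of $2m$ strictly between $2m/5$ and $2m$ are $k\in\{m,\,2m/3,\,m/2\}$, so every remaining divisor satisfies $k\le 2m/5$. Whenever $k\mid m$ (which covers $k=m$ and $k=m/2$), $2^k-1\mid 2^m-1$ forces $\gcd(2^k-1,n)=1$ and hence the impossible $n\mid x$. For $k=2m/3$, which requires $3\mid 2t+1$, the factorisations $2^m+1=(2^{m/3}+1)(2^{2m/3}-2^{m/3}+1)$ and $2^{2m/3}-1=(2^{m/3}-1)(2^{m/3}+1)$ give $\gcd(2^k-1,n)=2^{m/3}+1$, so $(2^{2m/3}-2^{m/3}+1)\mid x$, contradicting $x\le 7\cdot 2^{4t+1}+1<2^{2m/3}-2^{m/3}+1$ (the borderline case $t=1$ being verified numerically, $225<241$). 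For $k\le 2m/5$, the crude estimate $x(2^k-1)<n$ closes the case, with small $t$ checked directly since the uniform exponent bound $4t+4+2m/5<m$ only holds once $t\ge 3$.

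For $x\in\{\delta_2,\delta_3,\delta_4,\delta_5\}$ the upper bound on $x$ no longer applies, so I will instead appeal to the proof of Theorem~\ref{theo5.2}. In verifying that each $\delta_i$ is a coset leader, that proof (by design, in the same way as Lemmas~\ref{lemm3.2} and \ref{lemm4.2}) produces strict inequalities $y_{x,k}>x$ throughout $1\le k\le 2m-1$. Strict inequality immediately gives $2^kx\not\equiv x \bmod n$ for every such $k$, and therefore $|C_x|=2m$.

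Finally, the order of $\delta_1=3n/17$ is genuinely smaller. Since $ord_{17}(2)=8$ and $m=8t+4\equiv 4\bmod 8$ give $2^m\equiv-1\bmod 17$, we have $17\mid n$ and $\delta_1\in\mathbb{Z}$. Using $256=15\cdot 17+1$ one computes $2^8\delta_1=45n+\delta_1\equiv\delta_1\bmod n$, so $|C_{\delta_1}|\mid 8$. The proper divisors $1,2,4$ are excluded by noting that $(2^k-1)\delta_1\equiv 0\bmod n$ reduces to $17\mid 3(2^k-1)$, which fails for $k=1,2,4$ since $17\nmid 3,\,9,\,45$. The step I expect to be the main obstacle is the $k=2m/3$ case of the first part: the bound on $x$ lies only marginally below $2^{2m/3}-2^{m/3}+1$ in the initial $t=1$ case, so no single uniform asymptotic estimate works and the small cases have to be handled individually.
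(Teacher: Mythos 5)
Your proposal is correct and follows essentially the same approach the paper intends: the paper omits the proof of Lemma 5.3, stating only that it is similar to Lemmas \ref{lemm3.6} and \ref{lemm4.6}, and your argument is exactly that template carried out for $m=8t+4$ (divisor analysis of $k\mid 2m$ with the three exceptional cases $k=m,\,2m/3,\,m/2$, the size bound for $k\le 2m/5$, and the appeal to the strict inequalities $y_{x,k}>x$ from the coset-leader verification for $\delta_2,\dots,\delta_5$). Your explicit order computation $|C_{\delta_1}|=8$ via $\mathrm{ord}_{17}(2)=8$ is a correct and slightly more detailed rendering of what the paper does for $\delta_1$ in the earlier sections, and your attention to the borderline cases ($t=1$ with $225<241$, and small $t$ in the $k\le 2m/5$ estimate) is sound.
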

\begin{theorem}\label{theo5.4}
Suppose that  $b=2^{4t+3}+2^{4t+2}+2^{4t+1}$. Let $\delta_1$,
$\delta_2$, $\delta_3$, $\delta_4$ and $\delta_5$ be given as
Theorem 5.2.  If  $\delta$ is odd,  then  we have the following:

 (1) The narrow-sense   BCH codes $\mathcal{C}(n,2,\delta,1)$ have
parameters
 $$ \left\{
\begin{array}{lll}
\hbox{[}n, n-m\delta+3m, d \geq \delta\hbox{]}   &\mbox {if $2^{4t+2}+3\leq \delta\leq 2^{4t+3}-5$;}\\
\hbox{[}n, n-m\delta+11m, d \geq \delta\hbox{]}  &\mbox {if $2^{4t+3}+5\leq \delta\leq  2^{4t+3}+2^{4t+1}-3$;}\\
\hbox{[}n, n-m\delta+15m, d \geq \delta\hbox{]}  &\mbox {if $2^{4t+3}+2^{4t+1}+3\leq \delta \leq 2^{4t+3}+2^{4t+2}-3$;}\\
\hbox{[}n, n-m\delta+21m, d \geq \delta\hbox{]}  &\mbox {if $2^{4t+3}+2^{4t+2}+5\leq \delta \leq b-3$;}\\
\hbox{[}n, n-mb+22m, d \geq b+3\hbox{]}  &\mbox {if $b-1 \leq \delta \leq b+3$;}\\
\hbox{[}n, 2m(i-1)+9, d \geq \delta_i\hbox{]}    &\mbox {if $\delta_{i+1}+2 \leq \delta \leq \delta_i(i=1,2,3,4)$;}\\
\hbox{[}n, 1, n\hbox{]}    &\mbox {if $\delta_1+2 \leq \delta \leq
n$.}
 \end{array}
\right.$$

 (2) The  BCH codes $\mathcal{C}(n,2,\delta+1,0)$ have parameters
 $$ \left\{
\begin{array}{lll}
[n, 2^m-m\delta+3m, d \geq 2\delta]  &\mbox {if $2^{4t+2}+3\leq \delta\leq 2^{4t+3}-5$;}\\
\hbox{[}n, 2^m-m\delta+11m, d \geq 2\delta\hbox{]}  &\mbox {if $2^{4t+3}+5\leq \delta\leq  2^{4t+3}+2^{4t+1}-5$;}\\
\hbox{[}n, 2^m-m\delta+15m, d \geq 2\delta\hbox{]}  &\mbox {if
$2^{4t+3}+2^{4t+1} \!\!+3\leq \delta \leq  \!\!2^{4t+3}+2^{4t+2} \!\!-3$;}\\
\hbox{[}n, 2^m-m\delta+21m, d \geq 2\delta\hbox{]}  &\mbox {if $2^{4t+3}+2^{4t+2}+5\leq \delta \leq b-3$;}\\
\hbox{[}n, 2^m-mb+22m, d \geq 2(b+3)\hbox{]}  &\mbox {if $b-1 \leq \delta \leq b+3$;}\\
\hbox{[}n, 2m(i-1)+8, d \geq 2\delta_i\hbox{]}    &\mbox {if
$\delta_{i+1}+2 \leq \delta \leq \delta_i(i=1,2,3,4)$.}
 \end{array}
\right.$$
 \end{theorem}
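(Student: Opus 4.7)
The plan is to mirror the strategy used in Theorems 3.7 and 4.7 (see Appendices C and F), leveraging Theorem 5.1 to enumerate the coset leaders up to $b+3$, Theorem 5.2 to isolate the five largest coset leaders $\delta_1,\ldots,\delta_5$, and Lemma 5.3 which supplies $|C_x|=2m$ uniformly except for $|C_{\delta_1}|=8$. Since $\mathcal{C}(n,2,\delta,1)$ has defining set $T_\delta=\bigcup_{y=1}^{\delta-1}C_y=\bigcup\{C_x:x\text{ a coset leader with }x\le\delta-1\}$, the dimension is $k=n-|T_\delta|$, while the BCH bound gives $d\ge\delta$ directly. For the companion code $\mathcal{C}(n,2,\delta+1,0)$, the defining set is $T'_\delta=\{0\}\cup T_\delta$, hence $k_0=2^m-|T_\delta|=k-1$; moreover, since $-y\equiv 2^m y\in C_y$ for every $y$, $T'_\delta$ contains the $2\delta-1$ consecutive residues $\{-(\delta-1),\ldots,\delta-1\}$ modulo $n$, so the BCH bound yields $d_0\ge 2\delta$.

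For each of the first four genuine $\delta$-ranges of part~(1), I would fix an odd $\delta$ and count the odd coset leaders in $[1,\delta-1]$ by scanning items (1)--(5) of Theorem 5.1 and subtracting the ``bad'' odd integers from item (6) that fall inside the window. Lemma 5.3 promotes each count to a contribution of $2m$, collapsing into the clean form $|T_\delta|=m(\delta-c)$ with shift constants $c=3,11,15,21$. The transition band $b-1\le\delta\le b+3$ requires a small separate observation: the odd integers $b-1$ and $b+1$ are non-coset-leaders by Theorem 5.1(6), so $T_{b-1}=T_{b+1}=T_{b+3}$, all three designed distances share the dimension $k=n-mb+22m$, and the common minimum-distance bound $d\ge b+3$ follows from the embedding $[1,b+2]\subset T_{b-1}$ (the excluded integers $b-1,b,b+1,b+2$ all lie in cosets of smaller leaders, as $b$ itself reduces by repeated halving to $7$).

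For the top ranges $\delta_{i+1}+2\le\delta\le\delta_i$ ($i=1,2,3,4$), I would use the complementary enumeration: the cosets of distinct coset leaders partition $[1,n-1]$, and the leaders exceeding $\delta-1$ are exactly $\delta_1,\ldots,\delta_i$, giving $|T_\delta|=n-1-\sum_{j=1}^{i}|C_{\delta_j}|$. Substituting $|C_{\delta_1}|=8$ and $|C_{\delta_j}|=2m$ for $j\ge 2$ from Lemma 5.3 produces $k=2m(i-1)+9$ and $k_0=2m(i-1)+8$. Because $T_\delta$ is constant as $\delta$ sweeps $[\delta_{i+1}+2,\delta_i]$ and because $[1,\delta_i-1]\subset T_\delta$ throughout this band, the BCH bound upgrades to $d\ge\delta_i$ and $d_0\ge 2\delta_i$. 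For $\delta>\delta_1$ there are no remaining coset leaders, so $T_\delta=[1,n-1]$, $k=1$, and the unique nonzero codeword is the all-ones vector, giving the parameters $[n,1,n]$.

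The main obstacle is arithmetic bookkeeping rather than any conceptual difficulty: in each of the first five ranges one must carefully identify the exact odd non-coset-leaders from Theorem 5.1(6) that fall below $\delta-1$ and verify that the telescoping count of surviving coset leaders collapses into the linear expression $m\delta-cm$ with the claimed shift $c$. Since the template is identical in form to the $m=2t+1$ and $m=4t+2$ cases treated in Appendices C and F, no new technique is needed; the proof is a careful but essentially mechanical assembly of Theorems 5.1 and 5.2 and Lemma 5.3.
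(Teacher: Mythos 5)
Your proposal is correct and is essentially the proof the paper intends: the paper omits the argument for Theorem 5.4, stating it is analogous to Theorems 3.7 and 4.7 (Appendices C and F), and your counting of coset leaders via Theorem 5.1, the cardinalities from Lemma 5.3 (with $|C_{\delta_1}|=8$), and the BCH-bound arguments for $d$ and $d_0$ reproduce that template faithfully, with the shift constants $3,11,15,21,22$ and the dimensions $2m(i-1)+9$, $2m(i-1)+8$ all checking out.
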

\section{Conclusion}
We have  discussed the dimension of some binary  BCH codes of length
$n=2^m+1$ with designed distance $\delta>2^{\lceil \frac{m}{2}
\rceil}$ for  $m=2t+1$,  $m=4t+2$, $m=8t+4$ and $m\geq10$.   The
main contributions of this paper are summarized as follows:

 1) Some new techniques to find  out coset leaders were presented.

 2) The first five largest coset leaders were determined.

3) The dimensions  of some  classes of BCH codes of length $n$ with
designed distance $\delta>2^{\lceil \frac{m}{2} \rceil}$ were
determined. The Bose distances of these codes were also obtained.

Though it is  not easy to study parameters of antiprimitive BCH
codes as pointed out in \cite{Ding5}, fortunately, we have gone one
step further on the basis of Refs. \cite{Ding7} and \cite{Ding8}. It
is worthwhile and expectant  to develop more techniques  and
achievements on antiprimitive  LCD BCH codes over finite fields. We
believe  that our results will shed light on  BCH codes and cyclic
codes. It is also hoped that these results will work to discuss
constructions of LCD codes from negacyclic codes over finite fields
as did in \cite{Shixin}.
\section*{Acknowledgements}

This work is a revised edition  on the basis of its original
edition, which was submitted to the journal "Finite Fields and Their
Applications" on May 4, 2017. Here, we are  greatly indebted to two
anonymous reviewers and the Associate Editor, Prof. Pascale Charpin,
for their  comments and suggestions that much improved the
presentation and quality of this paper.

 The first author would like to express his gratitude to Prof.
Cunsheng Ding for  his helpful suggestions by e-mail communication
and inspiring discussions during National Conference on Coding
theory and Cryptography in Hangzhou, China.

 This work is supported by National Natural Science Foundation
of China  under Grant No.11471011 and Natural Science Foundation of
Shaanxi  under Grant No.2017JQ1032.
\section*{Appendix}

 First of all, we present the following Lemma
 to get the minimum values  in the proofs of Theorems 3.1 and 4.1.

\noindent{\bf Lemma 0.1} Let $f(k)=2^{-k}a+2^{k}b$,  where $a, b$
and $k$ are positive real numbers. If $k_2\geq k_1\geq
\frac{\log_{2}a-\log_{2}b}{2}$, then $f(k_2)\geq f(k_1)$.
\begin{proof} Since $f'(k)=(-2^{-k}a+2^{k}b)\ln 2$, one can
easily  deduce that if $k\geq \frac{\log_{2}a-\log_{2}b}{2}$ , then
$ f'(k)\geq 0$. It    follows that  when $k\geq
\frac{\log_{2}a-\log_{2}b}{2}$, $f(k)$ is monotonically increasing,
this completes the proof.
\end{proof}

\subsection*{Appendix A: The proof of Theorem 3.1}
\begin{proof} To prove this theorem, it suffices to vertify the items (2)-(4).

(2): Since $x$ is odd, $x$ can be denoted by
$$x=2^{t+1}+1+2l, \hbox{where}~l \in I=[1, 2^{t-1}-2].$$

 To verify  (2), it suffices to  prove  $y_{_{x,k}}-x \geq 0$  and
$n-y_{_{x,k}}-x \geq 0$ for $k\in[0,m-1=2t]$.   We first determine
$y_{_{x,k}}$, then show $y_{_{x,k}}-x \geq 0$  and $n-y_{_{x,k}}-x
\geq 0$ according to different $k$.

 (2.1): When $k=0,1,2,\cdots,t-1$, from $2^{t+1}+3\leq x\leq 2^{t+1}+2^{t}-3$,
 we have $x\leq 2^kx <n$, hence, $y_{_{x,k}}=2^kx\geq
x$, and
\begin{eqnarray*}
 n-y_{_{x,k}}-x&=&2^{2t+1}+1-(2^k+1)x\\
 &\geq&2^{2t+1}+1-(2^{t-1}+1)x\\
 &\geq&2^{2t+1}+1-(2^{t-1}+1)(2^{t+1}+2^{t}-3)\\
 &= & 2^{2t-1}-2^{t-1}-2^t+4>0.
\end{eqnarray*}
(2.2): When $k=t, t+1$, we have $y_{_{x,k}}=2^kx-2^{k-t}n$, thus

\begin{eqnarray*}
y_{_{x,k}}-x &=&(2^k-1)x-2^{k-t}n\\
&=& (2^k-1)(2^{t+1}+1+2l)-2^{k-t}(2^{2t+1}+1)\\
&=&(2^{k+1}-2)l+2^k-2^{t+1}-1-2^{k-t}\\
&\geq&(2^{k+1}-2)\cdot1+2^k-2^{t+1}-1-2^{k-t}\\
 &=&(3-2^{-t})\cdot2^k-2^{t+1}-3\\
  &\geq&(3-2^{-t})\cdot2^t-2^{t+1}-3= 2^{t}-4>0,
  \end{eqnarray*}
  \begin{eqnarray*}
   n-y_{_{x,k}}-x &=&(2^{k-t}+1)n-(2^k+1)x\\
&=& (2^{k-t}+1)(2^{2t+1}+1)-(2^k+1)(2^{t+1}+1+2l)\\
&=&  2^{k-t}+2^{2t+1}-2^{t+1}-2^{k}-(2^{k+1}+2)l\\
&\geq& 2^{k-t}+2^{2t+1}-2^{t+1}-2^{k}-(2^{k+1}+2)(2^{t-1}-2)\\
&=& 2^{2t+1}-2^{t}-2^{t+1}+4-(2^t-2^{-t}-3)2^{k}\\
&\geq& 2^{2t+1}-2^{t}-2^{t+1}+4-(2^t-2^{-t}-3)2^{t+1}\\
&=&3(2^{t}+2)>0.
\end{eqnarray*}

(2.3): When $k=t+2,t+3,\cdots,2t-1$, it is a little complex to
determine   $y_{_{x,k}}$ and    check   $y_{_{x,k}}-x \geq 0$ along
with $n-y_{_{x,k}}-x \geq 0$. To achieve this, for each $k$,  we
divide  the value range $I=[1, 2^{t-1}-2]$ of $l$ into $2^{k-1-t}$
 subintervals as follows:

 $I_{_{\lambda,k}}=[1,\lambda 2^{2t-k}-1]~\hbox{for}~ \lambda =1$,

 $I_{_{\lambda,k}}=[(\lambda-1)2^{2t-k},
  \lambda2^{2t-k}-1] ~\hbox{for}~ \lambda \in [2, 2^{k-t-1}-1]$,

 $I_{_{\lambda,k}}=[(\lambda-1)2^{2t-k},2^{t-1}-2] ~\hbox{for}~ \lambda = 2^{k-t-1}$.

Fix $k$, for each  $\lambda \in [1, 2^{k-t-1}]$,  $\lambda$ is
called the {\it identity tag} of the  subinterval
$I_{_{\lambda,k}}=[l_{_{\lambda,b}},l_{_{\lambda,e}}]$. Thus, for
$x=2^{t+1}+1+2l$ and given $\lambda$,   it is not difficult to
derive $y_{_{x,k}}=2^kx-(2^{k-t}+\lambda-1)n$ if  $l\in
I_{_{\lambda,k}}$.

 {\bf Case 2.3.1:} Firstly, we show $y_{_{x,k}}-x\geq0$.
 For  general $\lambda$,
\begin{eqnarray*}
y_{_{x,k}}-x &=&(2^k-1)x-(2^{k-t}+\lambda-1)n\\
&=& (2^k-1)(2^{t+1}+1+2l)-(2^{k-t}+\lambda-1)(2^{2t+1}+1)\\
&=&(2^k-1)(1+2l)-2^{t+1}-2^{k-t}-(\lambda-1)(2^{2t+1}+1).
 \end{eqnarray*}
When $\lambda=1$, from $l\geq 1$ and $k\geq t+2$, we have
\begin{eqnarray*}
y_{_{x,k}}-x&=&(2^k-1)(1+2l)-2^{t+1}-2^{k-t}\\
&\geq& (2^k-1)(1+2\cdot1)-2^{t+1}-2^{k-t}\\
&=& (3-2^{-t})\cdot 2^k-2^{t+1}-3\\
&\geq& (3-2^{-t})\cdot 2^{t+2}-2^{t+1}-3\\
             &=&5\cdot 2^{t+2}-7>0.
            \end{eqnarray*}
For given  $\lambda \in [2, 2^{k-t-1}]$, from $l\in
I_{_{\lambda,k}}=[l_{_{\lambda,b}}=(\lambda-1)2^{2t-k},l_{_{\lambda,e}}]$,
we have
\begin{eqnarray*}
y_{_{x,k}}-x
&=&(2^k-1)(1+2l)-2^{t+1}-2^{k-t}-(\lambda-1)(2^{2t+1}+1)\\
&\geq&(2^k-1)(1+2\cdot
l_{_{\lambda,b}})-2^{t+1}-2^{k-t}-(\lambda-1)(2^{2t+1}+1)\\
&=& 2^{2t+1-k}+2^k-2^{t+1}-2^{k-t}-(2^{2t-k+1}+1)\lambda\\
&\geq& 2^{2t+1-k}+2^k-2^{t+1}-2^{k-t}-(2^{2t-k+1}+1)\cdot2^{k-t-1}\\
&=& 2^{2t+1-k}+2^k(1-2^{-t}-2^{-t-1})-2^{t+1}-2^{t}\\
&\geq& \!2^{2t+1-(t+2)}+\!\!2^{t+2}(1-2^{-t}-2^{-t-1})\!\!-2^{t+1}-\!2^{t}(\hbox{see Lemma 0.1}) \\
&=& 2^{t}+2^{t-1}-6>0.
\end{eqnarray*}

As thus, we have shown $y_{_{x,k}}-x\geq0$ for
$k=t+2,t+3,\cdots,2t-1$.

{\bf Case 2.3.2:} Secondly,  we will show $n-y_{_{x,k}}-x\geq0$.
For  general $\lambda$,
\begin{eqnarray*}
n-y_{_{x,k}}-x 
&=& (2^{k-t}+\lambda)(2^{2t+1}+1)-(2^k+1)(2^{t+1}+1+2l).
\end{eqnarray*}
If $\lambda$ satisfies   $1\leq\lambda \leq 2^{k-t-1}-1$ and  $l\in
I_{_{\lambda,k}}=[l_{_{\lambda,b}},l_{_{\lambda,e}}=\lambda2^{2t-k}-1]$,
then
\begin{eqnarray*}
&&n-y_{_{x,k}}-x\\
&\geq& (2^{k-t}+\lambda)(2^{2t+1}+1)-(2^k+1)(2^{t+1}+1+2\cdot l_{_{\lambda,e}})\\
&=&2^{k-t}-2^{t+1}+2^{k}+1-(2^{2t+1-k}-1)\lambda \\
&\geq&2^{k-t}-2^{t+1}+2^{k}+1-(2^{2t+1-k}-1)\cdot(2^{k-t-1}-1) \\
&=& 2^{2t+1-k}+2^k(1+2^{-t}+2^{-t-1})-2^{t+1}-2^{t}\\
&\geq& 2^{2t+1-(t+2)}+2^{t+2}(1+2^{-t}+2^{-t-1})-2^{t+1}-2^{t}(\hbox{see Lemma 0.1})\\
&=& 3(2^{t-1}+2)>0.
\end{eqnarray*}
If  $\lambda= 2^{k-t-1}$, let $l\in
I_{_{\lambda,k}}=[l_{_{\lambda,b}}=(\lambda-1)2^{2t-k},
l_{_{\lambda,e}}=2^{t-1}-2]$,  we get then
\begin{eqnarray*}
n-y_{_{x,k}}-x
&=&(2^{k-t}+2^{k-t-1})(2^{2t+1}+1)-(2^k+1)(2^{t+1}+1+2l)\\
&\geq&(2^{k-t}+\lambda)(2^{2t+1}+1)-(2^k+1)(2^{t+1}+1+2\cdot l_{_{\lambda,e}})\\
&=&2^k(3+2^{-t}+2^{-t-1})-2^{t+1}-2^{t}+3\\
&\geq&2^{t+2}(3+2^{-t}+2^{-t-1})-2^{t+1}-2^{t}+3\\
&=& 9(2^{t}+1)>0.
\end{eqnarray*}
(2.4): When $k=2t$, for $x=2^{t+1}+1+2l$ with $l\in
I=[1,2^{t-1}-2]$,  we have
\begin{eqnarray*}
y_{_{x,k}}&=&2^kx-(2^{k-t}+l)n=2^{2t}-2^{t}-l.
\end{eqnarray*}
 Thus, we check that \begin{eqnarray*}
y_{_{x,k}}-x
&=&2^{2t}-2^{t}-l-(2^{t+1}+1+2l)\\
&=&2^{2t}-2^{t+1}-2^t-1-3l\\
&\geq&2^{2t}-2^{t+1}-2^t-1-3\cdot(2^{t-1}-2)\\
 &=& 2^{2t}-2^{t+2}-2^{t-1}+5>0,\\
n-y_{_{x,k}}-x
&=&2^{2t}-2^{t+1}+2^t-l\\
&\geq&2^{2t}-2^{t+1}+2^t-(2^{t-1}-2)\\
 &=& 2^{2t}-2^{t+1}+2^{t-1}+2>0.
\end{eqnarray*}

Summarizing the four cases above, we then  conclude that
$y_{_{x,k}}-x\geq0$ and $n-y_{_{x,k}}-x\geq0$ for $2^{t+1}+3\leq
x\leq 2^{t+1}+2^{t}-3$
  and   $k\in [0,m-1=2t]$,  (2) follows.\\

(3) For an odd $x \in [2^{t+1}+2^{t}+3,2^{t+2}-9]$,   let
$x=2^{t+1}+2^{t}+1+2l$ with $l \in J=[1, 2^{t-1}-5]$. To verify (3)
holds, we will first determine $y_{_{x,k}}$ and then show
$y_{_{x,k}}-x \geq 0$ and $n-y_{_{x,k}}-x \geq 0$ for all
$k\in[0,m-1=2t]$. Similar  to (2) above, we split into following
cases according to different $k$.

(3.1)  When  $k=0,1,2,\cdots,t-1$, it is clear that $x\leq 2^kx<n$,
then we have $y_{_{x,k}}=2^kx\geq x$ and
\begin{eqnarray*}
 n-y_{_{x,k}}-x&=&2^{2t+1}+1-(2^k+1)x\\
 &\geq&2^{2t+1}+1-(2^k+1)(2^{t+2}-9)\\
 &= & 2^{t-1}+10>0.
\end{eqnarray*}
(3.2): When $k=t$, we have $y_{_{x,k}}=2^kx-n$, hence
\begin{eqnarray*}
y_{_{x,k}}-x &=& (2^t-1)x-n\\
&\geq& (2^t-1)(2^{t+1}+2^{t}+3)-n\\
&=&2^{2t}-4>0,\\
 & &\\
n-y_{_{x,k}}-x &=&2n-(2^t+1)x\\
&\geq&2n-(2^t+1)(2^{t+2}-9)\\
&=&5\cdot2^t+11>0.
\end{eqnarray*}

(3.3): When  $k=t+1$, we have  $y_{_{x,k}}=2^kx-3n$,
 it follows that
\begin{eqnarray*}
y_{_{x,k}}-x &=& (2^{t+1}-1)x-3n\\
&\geq& (2^{t+1}-1)(2^{t+1}+2^{t}+3)-3n\\
&=&3(2^{t}-2)>0,\\
 & &\\
n-y_{_{x,k}}-x &=&4n-(2^{t+1}+1)x\\
 &\geq&4n-(2^{t+1}+1)(2^{t+2}-9)\\
&=&7 \cdot 2^{t+1}+13>0.
\end{eqnarray*}

(3.4): For each $k=t+2,t+3,\cdots, 2t-3$, to determine $y_{_{x,k}}$
and show $y_{_{x,k}}-x\geq 0$ and $n-y_{_{x,k}}-x\geq0$, we divide
the value range $J=[1, 2^{t-1}-5]$ of $l$ into $2^{k-1-t}$
subintervals as follows:

 $J_{_{\lambda,k}}=[1,\lambda2^{2t-k}-1]~\hbox{for}~ \lambda=1$,

 $J_{_{\lambda,k}}=[(\lambda-1)2^{2t-k},
 \lambda2^{2t-k}-1] ~\hbox{for}~ \lambda \in [2,2^{k-t-1}-1]$,

 $J_{_{\lambda,k}}=[(\lambda-1)2^{2t-k},2^{t}-5]~\hbox{for}~ \lambda=2^{k-t-1}$.

For given $k$,  we  can define $\lambda \in [1, 2^{k-t-1}]$ as the
{\it identity tag}  of the subinterval
$J_{_{\lambda,k}}=[l_{_{\lambda,b}},l_{_{\lambda,e}}]$. Fix
$\lambda$, if $x=2^{t+1}+2^{t}+1+2l$ with $l\in J_{_{\lambda,k}}$,
it follows that $$y_{_{x,k}}=2^kx-(2^{k-t}+2^{k-t-1}+\lambda-1)n.$$
As so, we can further verify $y_{_{x,k}}-x>0$ and
$n-y_{_{x,k}}-x>0$.

  {\bf Case 3.4.1:} Firstly, we show $y_{_{x,k}}-x>0$
\begin{eqnarray*}
y_{_{x,k}}-x &=&(2^k-1)x-(2^{k-t}+2^{k-t-1}+\lambda-1)n\\
&=&(2^k-1)(2^{t+1}+2^{t}+1+2l)-(2^{k-t}+2^{k-t-1}+\lambda-1)(2^{2t+1}+1)\\
&=&(2^k-1)(1+2l)-2^{t+1}-2^{t}-2^{k-t}-2^{k-t-1}-(\lambda-1)(2^{2t+1}+1).
 \end{eqnarray*}
When $\lambda=1$, notice that $l\in J_{_{\lambda,k}}=[1,2^{2t-k}-1]$
and $k\geq t+2$, we get that
\begin{eqnarray*}
y_{_{x,k}}-x&=&(2^k-1)(1+2l)-2^{t+1}-2^{t}-2^{k-t}-2^{k-t-1}\\
&\geq&(2^k-1)(1+2\cdot 1)-2^{t+1}-2^{t}-2^{k-t}-2^{k-t-1}\\
            &=& (3-2^{-t}-2^{-t-1})\cdot 2^k-2^{t+1}-2^{t}-3\\
            &\geq& (3-2^{-t}-2^{-t-1})\cdot 2^{t+2}-2^{t+1}-2^{t}-3\\
            &=&3\cdot 2^{t+2}-2^{t+1}-2^{t}-9>0.
\end{eqnarray*}
For given  $\lambda \in [2, 2^{k-t-1}]$, when $l\in
J_{_{\lambda,k}}=[l_{_{\lambda,b}}=(\lambda-1)2^{2t-k},l_{_{\lambda,e}}]$,
we have a similar derivation process:
\begin{eqnarray*}
&&y_{_{x,k}}-x\\
&=&(2^k-1)(1+2l)-2^{t+1}-2^{t}-2^{k-t}-2^{k-t-1}-(\lambda-1)(2^{2t+1}+1)\\
&\geq&(2^k-1)(1+2\cdot l_{_{\lambda,b}})-2^{t+1}-2^{t}-2^{k-t}-2^{k-t-1}-(\lambda-1)(2^{2t+1}+1)\\
&=& 2^{2t+1-k}+2^k-2^{t+1}-2^{t}-2^{k-t}-2^{k-t-1}-(2^{2t-k+1}+1)\lambda\\
&\geq& 2^{2t+1-k}+2^k-2^{t+1}-2^{t}-2^{k-t}-2^{k-t-1}-(2^{2t-k+1}+1)\cdot2^{k-t-1}\\
&=& 2^{2t+1-k}+2^k(1-2^{1-t})-2^{t+2}\\
&\geq& 2^{2t+1-(t+2)}+2^{t+2}(1-2^{1-t})-2^{t+2}(\hbox{see Lemma 0.1})\\
&=& 2^{t-1}-8>0.
\end{eqnarray*}

{\bf Case 3.4.2:} Secondly, we show $n-y_{_{x,k}}-x>0$
\begin{eqnarray*}
n-y_{_{x,k}}-x &=&(2^{k-t}+2^{k-t-1}+\lambda)n-(2^k+1)x\\
&=&
(2^{k-t}+2^{k-t-1}+\lambda)(2^{2t+1}+1)-(2^k+1)(2^{t+1}+2^{t}+1+2l).
\end{eqnarray*}
For given  $\lambda \in [1, 2^{k-t-1}-1]$, let $l\in
J_{_{\lambda,k}}=[l_{_{\lambda,b}},l_{_{\lambda,e}}=\lambda2^{2t-k}-1]$.
Then one can check that
\begin{eqnarray*}
&&n-y_{_{x,k}}-x\\
&\geq&(2^{k-t}+2^{k-t-1}+\lambda)(2^{2t+1}+1)-(2^k+1)(2^{t+1}+2^{t}+1+2\cdot
l_{_{\lambda,e}})\\
&=&2^{k-t}+2^{k-t-1}-2^{t+1}-2^{t}+2^{k}+1-(2^{2t+1-k}-1)\lambda\\
&\geq&2^{k-t}+2^{k-t-1}-2^{t+1}-2^{t}+2^{k}+1-(2^{2t+1-k}-1)(2^{k-t-1}-1)\\
&=& 2^{2t+1-k}+2^k(1+2^{1-t})-2^{t+2}\\
&\geq& 2^{2t+1-(t+2)}+2^{t+2}(1+2^{1-t})-2^{t+2}(\hbox{see Lemma 0.1})\\
&=& 2^{t-1}+8>0.
\end{eqnarray*}
When $\lambda= 2^{k-t-1}$, since $l\in
I_{_{\lambda,k}}=[l_{_{\lambda,b}}=(\lambda-1)2^{2t-k},l_{_{\lambda,e}}=2^{t}-5]$,
it is easy to deduce that
\begin{eqnarray*}
& &n-y_{_{x,k}}-x\\
&=&(2^{k-t}+2^{k-t-1}+2^{k-t-1})(2^{2t+1}+1)-(2^k+1)(2^{t+1}+2^{t}+1+2l)\\
&\geq&(2^{k-t}+2^{k-t-1}+2^{k-t-1})(2^{2t+1}+1)-(2^k+1)(2^{t+1}+2^{t}+1+2l_{_{\lambda,e}})\\
&=&2^k(2^{1-t}+9)-2^{t+2}+9\\
&\geq&2^{t+2}(2^{1-t}+9)-2^{t+2}+9\\
&=& 2^{t+5}+17>0.
\end{eqnarray*}

(3.5): For each  $k=2t-2, 2t-1$, similar to (3.3) and (3.4), we
divide the value range $J=[1, 2^{t-1}-5]$ of $l$ into
$2^{k-1-t}-2^{k+2-2t}$ subintervals as follows:

 $J_{_{\lambda,k}}=[1,\lambda2^{2t-k}-1]~\hbox{for}~ \lambda=1$,

 $J_{_{\lambda,k}}=[(\lambda-1)2^{2t-k},
 \lambda2^{2t-k}-1] ~\hbox{for}~ \lambda \in [2,2^{k-1-t}-2^{k+2-2t}]$.

Let $\lambda \in [1,2^{k-1-t}-2^{k+2-2t}]$,  we  can define
$\lambda$ as the {\it identity tag} of the subinterval
$J_{_{\lambda,k}}=[l_{_{\lambda,b}},l_{_{\lambda,e}}]$.

For $x=2^{t+1}+2^{t}+1+2l$ with $l\in J_{_{\lambda,k}}$, one can
  check that $$y_{_{x,k}}=2^kx-(2^{k-t}+2^{k-t-1}+\lambda-1)n,$$
then we split into following two cases to deduce the desired result.

{\bf Case 3.5.1:} Firstly, we show $y_{_{x,k}}-x>0$
\begin{eqnarray*}
y_{_{x,k}}-x &=&(2^k-1)x-(2^{k-t}+2^{k-t-1}+\lambda-1)n\\
&=&(2^k-1)(2^{t+1}+2^{t}+1+2l)-(2^{k-t}+2^{k-t-1}+\lambda-1)(2^{2t+1}+1)\\
&=&(2^k-1)(1+2l)-2^{t+1}-2^{t}-2^{k-t}-2^{k-t-1}-(\lambda-1)(2^{2t+1}+1).
 \end{eqnarray*}
If $\lambda=1$,  it then  follows from $l\geq1$ and $k\geq2t-2$ that
\begin{eqnarray*}
y_{_{x,k}}-x &=&(2^k-1)(1+2l)-2^{t+1}-2^{t}-2^{k-t}-2^{k-t-1}\\
&\geq&(2^k-1)(1+2\cdot1)-2^{t+1}-2^{t}-2^{k-t}-2^{k-t-1}\\
&=&(3-2^{-t}-2^{-t-1})\cdot 2^k-2^{t+1}-2^{t}-3\\
&\geq&(3-2^{-t}-2^{-t-1})\cdot\!\! 2^{2t-2}-\!\! 2^{t+1}-2^{t}-3(\hbox{see Lemma 0.1})\\
&=&3\cdot 2^{2t-2}-2^{t+1}-2^{t}-2^{t-2}-2^{t-3}-3>0.
\end{eqnarray*}
For given  $\lambda \in [2, 2^{k-t-1}-2^{k+2-2t}]$, notice that
$l\in
J_{_{\lambda,k}}=[l_{_{\lambda,b}}=(\lambda-1)2^{2t-k},l_{_{\lambda,e}}]$,
thus we have
\begin{eqnarray*}
& &y_{_{x,k}}-x\\
&=&(2^k-1)(1+2l)-2^{t+1}-2^{t}-2^{k-t}-2^{k-t-1}-(\lambda-1)(2^{2t+1}+1)\\
&\geq&(2^k-1)(1+2\cdot l_{_{\lambda,b}})-2^{t+1}-2^{t}-2^{k-t}-2^{k-t-1}-(\lambda-1)(2^{2t+1}+1)\\
&=& 2^{2t+1-k}+2^k-2^{t+1}-2^{t}-2^{k-t}-2^{k-t-1}-(2^{2t-k+1}+1)\lambda\\
&\geq& \!\! 2^{2t+1-k}+2^k-2^{t+1}-2^{t}-2^{k-t}\!-\!2^{k-t-1}\!-\!(2^{2t-k+1}+1)(2^{k-t-1}-2^{k+2-2t})\\
&=& 2^{2t+1-k}+2^k(1+ 2^{2-2t}-2^{1-t})-2^{t+2}+8\\
&\geq& 2^{2t+1- (2t-2)}\!\!+\!\! 2^{2t-2}(1+ 2^{2-2t}-2^{1-t})\!\! -2^{t+2}+8(\hbox{see Lemma 0.1})\\
&=& 2^{2t-2}-2^{t+2}-2^{t-1}+17>0.
\end{eqnarray*}
{\bf Case 3.5.2:} Secondly, we show $n-y_{_{x,k}}-x>0$
\begin{eqnarray*}
n-y_{_{x,k}}-x &=&(2^{k-t}+2^{k-t-1}+\lambda)n-(2^k+1)x\\
&=&(2^{k-t}+2^{k-t-1}+\lambda)(2^{2t+1}+1)\!\!-(2^k+1)(2^{t+1}+2^{t}+1+2l).
\end{eqnarray*}
For given $\lambda \in [1, 2^{k-t-1}-2^{k+2-2t}]$, since $l\in
J_{_{\lambda,k}}=[l_{_{\lambda,b}},l_{_{\lambda,e}}=\lambda2^{2t-k}-1]$,
thus one can deduce that
\begin{eqnarray*}
&&n-y_{_{x,k}}-x\\
&\geq&(2^{k-t}+2^{k-t-1}+\lambda)(2^{2t+1}+1)-(2^k+1)(2^{t+1}+2^{t}+1+2l_{_{\lambda,e}})\\
&=&2^{k-t}+2^{k-t-1}-2^{t+1}-2^{t}+2^{k}+1-(2^{2t+1-k}-1)\lambda\\
&\geq&2^{k-t}+2^{k-t-1}-2^{t+1}-2^{t}+2^{k}+1-(2^{2t+1-k}-1)(2^{k-t-1}-2^{k+2-2t})\\
&=& 2^k(1+2^{1-t}-2^{2-2t})-2^{t+2}+9\\
&\geq& 2^{2t-2}(1+2^{1-t}-2^{2-2t})-2^{t+2}+9\\
&=& 2^{2t-2}+2^{t-1}-2^{t+2}+8>0.
\end{eqnarray*}

(3.6): When $k=2t$, according to  $x=2^{t+1}+2^{t}+1+2l$ with $l\in
J=[1, 2^{t-1}-5]$,  we have
\begin{eqnarray*}
y_{_{x,k}}&=&2^kx-(2^{k-t}+l)n\\
&=&2^{2t}(2^{t+1}+2^{t}+1+2l)-(2^{t}+l)(2^{2t+1}+1)\\
&=&2^{2t}-2^{t}-2^{t-1}-l,
\end{eqnarray*}
thus, one can easily derive  from $l\in J=[1, 2^{t-1}-5]$ that
\begin{eqnarray*}
y_{_{x,k}}-x&=&2^{2t}-2^{t}-2^{t-1}-l-(2^{t+1}+2^{t}+1+2l)\\
&=&2^{2t}-2^{t+2}-2^{t-1}-1-3l\\
&\geq&2^{2t}-2^{t+2}-2^{t-1}-1-3(2^{t-1}-5)\\
 &=& 2^{2t}-2^{t+2}-2^{t+1}+14>0,\\
  & &\\
n-y_{_{x,k}}-x&=&(2^{2t+1}+1)-(2^{2t}-2^{t}-2^{t-1}-l)-(2^{t+1}+2^{t}+1+2l)\\
&=&2^{2t}-2^{t+1}+2^{t-1}-l\\
&\geq&2^{2t}-2^{t+1}+2^{t-1}-(2^{t-1}-5)\\
 &=& 2^{2t}-2^{t+1}+5>0.
\end{eqnarray*}

Concluding the previous six cases (3.1)-(3.6), we then conclude that
$y_{_{x,k}}-x\geq0$ and $n-y_{_{x,k}}-x\geq0$ for
$2^{t+1}+2^{t}+3\leq x\leq 2^{t+2}-9$ and each $k\in [0,m-1=2t]$,
which implies that (3) holds.\\

 (4)  It is easy to check that:

$(2^{t+1}-1)2^{3t+1}=(2^{t+1}-1)2^{t}\cdot
2^{2t+1}=(n-1-2^t)(n-1)\equiv 2^{t}+1$;

$(2^{t+1}+1)2^{t}=2^{m}+2^{t}\equiv 2^{t}-1$;

$(2^{t+1}+2^{t}-1)2^{3t+2}=(3n-3-2^{t+1})(n-1)\equiv 2^{t+1}+3$;

$(2^{t+1}+2^{t}+1)2^{t+1}=2^{m+1}+2^{m}+2^{t+1}\equiv 2^{t+1}-3$;

If $i=1,3,5 \hbox{~or~}7$, then

$(2^{t+2}-i)2^{3t}=(2^{t+2}-i)2^{t-1}\cdot 2^{2t+1}=(n-1-i\cdot
2^{t-1})(n-1)\equiv i \cdot 2^{t-1}+1$;

$ (2^{t+2}+i)2^{t-1} =2^{2t+1}+i\cdot2^{t-1}\equiv i\cdot
2^{t-1}-1$.

Combining the definition of a cyclotomic coset, the above congruence
expressions  imply that there exists an odd integer $y\in[1,x-1]$
satisfying $y\in C_{x}$ for each $x$ in (4), hence $x$ is not a
coset leader, (4) follows.\end{proof}

\subsection*{Appendix B: The proof of Lemma 3.2}
\begin{proof} From $\delta_{1}=\frac{n}{3}=2^{2t}-2^{2t-1}+\cdots+4-2+1$,
$\delta_{2}=\frac{n-3}{6}=2^{2t-2}+2^{2t-4}+\cdots+4+1$, we can
obviously know $\delta_{1}, \delta_{2}, \delta_{3}, \delta_{4},
\delta_{5}$ are odd integers. It is easy to derive
 $C_{\delta_{1}}=\{\delta_{1}, 2\delta_{1}\}$,
 which implies that $|C_{\delta_{1}}|=2$ and  $\delta_{1}$ is a coset leader.
We then show that $\delta_{2}$, $\delta_{3}$, $\delta_{4}$ and
$\delta_{5}$ are all also coset leaders.

{\it Step 1:} We show $y_{_{\delta_2,k}}-\delta_2\geq 0$ and
$n-y_{_{\delta_2,k}}-\delta_2\geq 0$ in three cases:

(1.1): If $k=0,1,2$, it is clear that

$y_{_{\delta_2,k}}=2^k\delta_2\geq \delta_2$,
$n-y_{_{\delta_2,k}}-\delta_2=n-2^k\delta_2-\delta_2=(5-2^k)\delta_2+3
>0$.

(1.2):  If $k=3,5,\cdots,2t-1$,   we then get that
\begin{eqnarray*}
y_{_{\delta_2,k}}&=&2^k\delta_2-\frac{2^{k-1}-1}{3}n=2\delta_2-2^{k-1}+1,\\
y_{_{\delta_2,k}}-\delta_2&=&\delta_2-2^{k-1}+1
\geq\delta_2-2^{(2t-1)-1}+1
 =\delta_2-2^{2t-2}+1>0,\\
n-y_{_{\delta_2,k}}-\delta_2&=&3\delta_2+2^{k-1}+2>0.
\end{eqnarray*}

(1.3):  If $k=4,6,\cdots,2t$,   we have then
\begin{eqnarray*}
y_{_{\delta_2,k}}&=&2^k\delta_2-\frac{2^{k-1}-2}{3}n=4\delta_2-2^{k-1}+2,\\
y_{_{\delta_2,k}}-\delta_2&=&3\delta_2-2^{k-1}+2
\geq3\delta_2-2^{2t-1}+2>0,\\
n-y_{_{\delta_2,k}}-\delta_2&=&\delta_2+2^{k-1}+1>0.
\end{eqnarray*}

From the three cases above, we then can conclude $\delta_2$ is a
coset leader.

{\it Step 2:} Now, we show $y_{_{\delta_3,k}}-\delta_3\geq 0$ and
$n-y_{_{\delta_3,k}}-\delta_3\geq 0$.

(2.1): If $k=0,1,2$, it is clear that
$y_{_{\delta_3,k}}=2^k\delta_3\geq\delta_3$, then

$n-y_{_{\delta_3,k}}-\delta_3=n-(2^k+1)\delta_3\geq
\frac{n+15}{6}+10>0$.

(2.2): If $k=3,5,\cdots,2t-3$, then we can obtain that
\begin{eqnarray*}
y_{_{\delta_3,k}}&=&2^k(\delta_2-2)-\frac{2^{k-1}-1}{3}n=2\delta_2-2^{k+1}-2^{k-1}+1,\\
y_{_{\delta_3,k}}-\delta_3&=&\delta_2-2^{k+1}-2^{k-1}+3\\
&\geq& \delta_2-2^{(2t-3)+1}-2^{(2t-3)-1}+3=
\delta_2-2^{2t-2}-2^{2t-4}+3>0,\\
n-y_{_{\delta_3,k}}-\delta_3&=&3\delta_2+2^{k+1}+2^{k-1}+4>0.
\end{eqnarray*}

(2.3): If $k=4,6,\cdots,2t-2$,    we can deduce that
\begin{eqnarray*}
y_{_{\delta_3,k}}&=&2^k(\delta_2-2)-\frac{2^{k-1}-2}{3}n=4\delta_2-2^{k+1}-2^{k-1}+2,\\
y_{_{\delta_3,k}}-\delta_3&=&3\delta_2-2^{k+1}-2^{k-1}+4\\
&\geq&\!\!3\delta_2-2^{(2t-2)+1}-2^{(2t-2)-1}+4=3\delta_2-2^{2t-1}-2^{2t-3}+4>0,\\
n-y_{_{\delta_3,k}}-\delta_3&=&\delta_2+2^{k+1}+2^{k-1}+3>0.
\end{eqnarray*}

(2.4): If $k=2t-1$, it is easy to know
\begin{eqnarray*}
y_{_{\delta_3,k}}&=&2^k(\delta_2-2)-(\frac{2^{k-1}-1}{3}-1)n=n+2\delta_2-2^{k+1}-2^{k-1}+1,\\
y_{_{\delta_3,k}}-\delta_3&=&n+\delta_2-2^{k+1}-2^{k-1}+3=\delta_2+2^{2t}-2^{2t-2}+4>0,\\
n-y_{_{\delta_3,k}}-\delta_3&=&2^{k+1}+2^{k-1}+1-3\delta_2=2^{2t-2}+2>0.
\end{eqnarray*}

(2.5): If $k=2t$, then we can check that
\begin{eqnarray*}
y_{_{\delta_3,k}}&=&2^k(\delta_2-2)-(\frac{2^{k-1}-2}{3}-1)n=4\delta_2-2^{k-1}+3,\\
y_{_{\delta_3,k}}-\delta_3&=&3\delta_2-2^{k-1}+5=2^{2t-1}+4>0,\\
n-y_{_{\delta_3,k}}-\delta_3&=&\delta_2+2^{k-1}+2>0.
\end{eqnarray*}

Summarizing the previous five cases, we know
$\delta_3$ is a coset leader.\\

{\it Step 3:} We show $y_{_{\delta_4,k}}-\delta_4\geq 0$ and
$n-y_{_{\delta_4,k}}-\delta_4\geq 0$ in seven cases:

(3.1): If $k=0,1,2$, it is clear that
$y_{_{\delta_4,k}}=2^k\delta_4\geq \delta_4$, then

$n-y_{_{\delta_4,k}}=n-2^k(\delta_2-8)=(6-2^k)\delta_2+2^{k+3}+3
>\delta_4$.

(3.2): If $k=3,5,\cdots,2t-5$,  we have then
\begin{eqnarray*}
y_{_{\delta_4,k}}&=&2^k(\delta_2-8)-\frac{2^{k-1}-1}{3}n=2\delta_2-2^{k+3}-2^{k-1}+1,\\
y_{_{\delta_4,k}}-\delta_4&=&\delta_2-2^{k+3}-2^{k-1}+9\\
&\geq&\delta_2-2^{(2t-5)+3}-2^{(2t-5)-1}+9=\delta_2-2^{2t-2}-2^{2t-6}+9>0,\\
n-y_{_{\delta_4,k}}-\delta_4&=&3\delta_2+2^{k+3}+2^{k-1}+10>0.
\end{eqnarray*}

(3.3): If $k=4,6,\cdots,2t-4$,  we can check that
\begin{eqnarray*}
y_{_{\delta_4,k}}&=&2^k(\delta_2-8)-\frac{2^{k-1}-2}{3}n=4\delta_2-2^{k+3}-2^{k-1}+2,\\
y_{_{\delta_4,k}}-\delta_4&=&3\delta_2-2^{k+3}-2^{k-1}+10\\
&\geq&3\delta_2-2^{(2t-4)+3}-2^{(2t-4)-1}+10=2^{2t-1}-2^{2t-5}+9>0,\\
n-y_{_{\delta_4,k}}-\delta_4&=&\delta_2+2^{k+3}+2^{k-1}+9>0.
\end{eqnarray*}

(3.4): If $k=2t-3$, then we deduce that
\begin{eqnarray*}
y_{_{\delta_4,k}}&=&2^k(\delta_2-8)-(\frac{2^{k-1}-1}{3}-1)n=5\delta_2-2^{k-1}+3,\\
y_{_{\delta_4,k}}-\delta_4&=&4\delta_2-2^{k-1}+11=\delta_2+2^{2t}-2^{2t-4}+10>0,\\
n-y_{_{\delta_4,k}}-\delta_4&=&2^{k-1}+8>0.
\end{eqnarray*}

(3.5): If $k=2t-2$,  we obtain that
\begin{eqnarray*}
y_{_{\delta_4,k}}&=&2^k(\delta_2-8)-(\frac{2^{k-1}-2}{3}-1)n=4\delta_2-2^{k-1}+3,\\
y_{_{\delta_4,k}}-\delta_4&=&3\delta_2-2^{k-1}+11=2^{2t}-2^{2t-3}+10>0,\\
n-y_{_{\delta_4,k}}-\delta_4&=&\delta_2+2^{k-1}+8>0.
\end{eqnarray*}

(3.6): If $k=2t-1$, then we get
\begin{eqnarray*}
y_{_{\delta_4,k}}&=&2^k(\delta_2-8)-(\frac{2^{k-1}-1}{3}-2)n=2\delta_2-2^{k-1}+3,\\
y_{_{\delta_4,k}}-\delta_4&=&\delta_2-2^{k-1}+11=\delta_2-2^{2t-2}+11>0,\\
n-y_{_{\delta_4,k}}-\delta_4&=&3\delta_2+2^{k-1}+8>0.
\end{eqnarray*}

(3.7): If $k=2t$, it is easy to derive that
\begin{eqnarray*}
y_{_{\delta_4,k}}&=&2^k(\delta_2-8)-(\frac{2^{k-1}-2}{3}-4)n=4\delta_2-2^{k-1}+6,\\
y_{_{\delta_4,k}}-\delta_4&=&3\delta_2-2^{k-1}+14=2^{2t-1}+13>0,\\
n-y_{_{\delta_4,k}}-\delta_4&=&\delta_2+2^{k-1}+5>0.
\end{eqnarray*}

It follows from the seven cases above that $\delta_4$ is a coset leader.\\

{\it Step 4:} We show $y_{_{\delta_5,k}}-\delta_5\geq 0$ and
$n-y_{_{\delta_5,k}}-\delta_5\geq 0$ in seven cases:

(4.1): If $k=0,1,2$, it is clear that
$y_{_{\delta_5,k}}=2^k\delta_5\geq \delta_5$, then

$n-y_{_{\delta_5,k}}=n-2^k(\delta_2-10)=(6-2^k)\delta_2+2^{k+3}+2^{k}+3>\delta_5$.

(4.2): If $k=3,5,\cdots,2t-5$, then we can check that
\begin{eqnarray*}
y_{_{\delta_5,k}}&=&2^k(\delta_2-10)-\frac{2^{k-1}-1}{3}n=2\delta_2-2^{k+3}-2^{k+1}-2^{k-1}+1,\\
y_{_{\delta_5,k}}-\delta_5&=&\delta_2-2^{k+3}-2^{k+1}-2^{k-1}+11\\
                          &\geq&\delta_2-2^{(2t-5)+3}-2^{(2t-5)+1}-2^{(2t-5)-1}+11\\
                          &=&\delta_2-2^{2t-2}-2^{2t-4}-2^{2t-6}+11>0,\\
n-y_{_{\delta_5,k}}-\delta_5&=&3\delta_2+2^{k+3}+2^{k+1}+2^{k-1}+12
>0.
\end{eqnarray*}

(4.3): If $k=4,6,\cdots,2t-4$, then we deduce
\begin{eqnarray*}
y_{_{\delta_5,k}}&=&2^k(\delta_2-10)-\frac{2^{k-1}-2}{3}n=4\delta_2-2^{k+3}-2^{k+1}-2^{k-1}+2,\\
y_{_{\delta_5,k}}-\delta_5&=&3\delta_2-2^{k+3}-2^{k+1}-2^{k-1}+12\\
                          &\geq&3\delta_2-2^{(2t-4)+3}-2^{(2t-4)+1}-2^{(2t-4)-1}+12\\
                          &=&2^{2t-1}-2^{2t-3}-2^{2t-5}+11>0,\\
n-y_{_{\delta_5,k}}-\delta_5&=&\delta_2+2^{k+3}+2^{k+1}+2^{k-1}+11>0.
\end{eqnarray*}

(4.4): If $k=2t-3$, it is not difficult to get that
\begin{eqnarray*}
y_{_{\delta_5,k}}&=&2^k(\delta_2-10)-(\frac{2^{k-1}-1}{3}-1)n=5\delta_2-2^{k+1}-2^{k-1}+3,\\
y_{_{\delta_5,k}}-\delta_5&=&4\delta_2-2^{k+1}-2^{k-1}+13\\
&=&\delta_2+2^{2t}-2^{2t-2}-2^{2t-4}+12>0,\\
n-y_{_{\delta_5,k}}-\delta_5&=&2^{k+1}+2^{k-1}+10>0.
\end{eqnarray*}

(4.5): If $k=2t-2$, we can easily obtain
\begin{eqnarray*}
y_{_{\delta_5,k}}&=&2^k(\delta_2-10)-(\frac{2^{k-1}-2}{3}-1)n=4\delta_2-2^{k+1}-2^{k-1}+3,\\
y_{_{\delta_5,k}}-\delta_5&=&3\delta_2-2^{k+1}-2^{k-1}+13=2^{2t-1}-2^{2t-3}+12>0,\\
n-y_{_{\delta_5,k}}-\delta_5&=&\delta_2+2^{k+1}+2^{k-1}+10>0.
\end{eqnarray*}

(4.6): If $k=2t-1$, then we get
\begin{eqnarray*}
y_{_{\delta_5,k}}&=&2^k(\delta_2-10)-(\frac{2^{k-1}-1}{3}-3)n=5\delta_2-2^{k-1}+5,\\
y_{_{\delta_5,k}}-\delta_5&=&4\delta_2-2^{k-1}+15=\delta_2+2^{2t}-2^{2t-2}+14>0,\\
n-y_{_{\delta_5,k}}-\delta_5&=&2^{k-1}+8>0.
\end{eqnarray*}

(4.7): If $k=2t$, then we deduce
\begin{eqnarray*}
y_{_{\delta_5,k}}&=&2^k(\delta_2-10)-(\frac{2^{k-1}-2}{3}-5)n=4\delta_2-2^{k-1}+7,\\
y_{_{\delta_5,k}}-\delta_5&=&3\delta_2-2^{k-1}+17=2^{2t-1}+16>0,\\
n-y_{_{\delta_5,k}}-\delta_5&=&\delta_2+2^{k-1}+6>0.
\end{eqnarray*}
Then we can conclude that $\delta_5$ is a coset leader from the
previous seven cases.
\end{proof}

\subsection*{Appendix C: The proof of Theorem 3.7}

\begin{proof}(1) Let $T_{\delta}$ be the defining set of
$\mathcal{C}(n,2,\delta,1)$ and $T_{\delta}=\bigcup\limits_{i\in
S_{\delta}}C_{i}$, where
 $S_{\delta}=\{x|x~\hbox{is a coset leader,
} C_x\subseteq T_{\delta}\}$, then
  $\mathcal{C}(n,2,\delta,1)$ has
dimension $k=n-|T_{\delta}|=n-\sum\limits_{i\in S_{\delta}}|C_{i}|$.

(i): When $2^{t+1}+3\leq \delta \leq 2^{t+1}+2^{t}-3$, from Theorem
3.1, we have  $$S_{\delta}=\{x|x~\hbox{is odd and}~x\in
[1,\delta-1]\setminus\{2^{t+1}\pm1\}\},$$ thus
$|S_{\delta}|=\frac{\delta-1}{2}-2$. According to Lemma 3.6, all
cyclotomic cosets in $T_{\delta}$ have cardinality $2m$,  it then
follows that $$k=n-\sum\limits_{i\in S_{\delta}}|C_{i}|=n-2m\cdot
(\frac{\delta-1}{2}-2)=n-m\delta+5m.$$  It is obvious that  there
exist $\delta-1$ consecutive integers,  according to the BCH bound,
the minimum distance $d\geq \delta$ .

(ii): Similar to (i),
 when $2^{t+1}+2^{t}+3\leq \delta\leq 2^{t+2}-9$,  from Theorem 3.1 and Lemma 3.6,
  $$S_{\delta}=\{x|x~\hbox{is odd and}~x\in
[1,\delta-1]\setminus\{2^{t+1}\pm1,2^{t+1}+2^{t}\pm1\}\},$$  then we
have  $d\geq \delta$ and $k=n-2m\cdot
(\frac{\delta-1}{2}-4)=n-m\delta+9m.$

(iii): Similar to (i),
 when $2^{t+2}-7\leq\delta\leq 2^{t+2}+9$,  from Theorem 3.1 and Lemma 3.6,
 $$S_{\delta}=\{x|x~\hbox{is odd and}~x\in
[1,2^{t+2}-9]\setminus\{2^{t+1}\pm1,2^{t+1}+2^{t}\pm1\}\},$$  then
we have $d\geq 2^{t+2}+9$ and $k=n-2m\cdot
(2^{t+1}-8)=n-2^{t+2}m+16m.$

(iv): When $\delta_{i+1}+2\leq\delta\leq \delta_{i}(i=1,2,3,4)$, we
can infer from Theorem 3.5 that
 thus $T_{\delta}=\bigcup\limits_{i\in
S_{\delta}}C_{i}=\{1,2,\cdots,n-1\}\setminus\bigcup
\limits_{j=1}^{i}C_{\delta_j}$.

Since $|C_{\delta_{1}}|=2$ from the proof of Lemma 3.2, combining
Lemma 3.6, every $C_{\delta_{i}}(i=2,3,4,5)$ has cardinality $2m$,
it then follows that
$$k=n-|T_{\delta}|=n-[n-1-2m (i-1)-2]=2m (i-1)+3.$$

On the other hand, there exist $\delta_{i}-1$ consecutive integers
in $T_{\delta}$, the minimum distance $d\geq \delta_{i}$.

(v): When $\delta_{1}+2\leq\delta\leq n$, it is easy to infer from
Theorem 3.5 that $T_{\delta}=\bigcup\limits_{i\in
S_{\delta}}C_{i}=\{1,2,\cdots,n-1\}$, then
$k=n-|T_{\delta}|=n-(n-1)=1.$ Obviously, the minimum distance
 $d=n$ by the Singleton bound.

 (2) On the basis of  the proof of (1),   (2) can be
 easily given.
 \end{proof}

\subsection*{Appendix D: The proof of Theorem 4.1}
\begin{proof} Similar to the proof of Theorem 3.1,  it suffices
 to prove  items (2)-(5).

(2) Since $x$ is odd,  $x$ can be denoted by
$$x=2^{2t+1}+1+2l, \hbox{where}~l \in I=[1, 2^{2t}-3].$$

 To verify (2),  we will first determine $y_{_{x,k}}$, then
show $y_{_{x,k}}-x \geq 0$  and $n-y_{_{x,k}}-x \geq 0$ according to
different $k$.

(2.1): If $k=0,1,2,\cdots,2t$, note that $2^{2t+1}+3\leq x\leq
2^{2t+2}-5$,  then $x\leq 2^kx <n$,  it then follows that
$y_{_{x,k}}=2^kx\geq x$, and
\begin{eqnarray*}
n-y_{_{x,k}}-x&=&n-(2^k+1)x\\
&\geq&n-(2^{2t}+1)x\\
 &\geq&n-(2^{2t}+1)(2^{2t+2}-5)=2^{2t}+6>0.
 \end{eqnarray*}
(2.2):  If $k=2t+1$, we check that $y_{_{x,k}}=2^kx-n$, it is easy
to obtain
\begin{eqnarray*}
y_{_{x,k}}-x&=&(2^k-1)x-n\\
&\geq &(2^k-1)(2^{2t+1}+3)-n=2^{2t+2}-4>0,\\
n-y_{_{x,k}}-x&=& 2n-(2^k+1)x\\
&\geq& 2n-(2^k+1)(2^{2t+2}-5)= 3\cdot2^{2t+1}+7>0.
 \end{eqnarray*}
(2.3): For each $k=2t+2, 2t+3,\cdots,4t-1$, it is a little difficult
to determine $y_{_{x,k}}$ and then check   $y_{_{x,k}}-x \geq 0$
along with $n-y_{_{x,k}}-x \geq 0$. To achieve this, we divide
$I=[1, 2^{2t}-3]$ into $2^{k-2t-1}$ subintervals as follows:

 $I_{_{\lambda,k}}=[1,\lambda2^{4t+1-k}-1]$ for $\lambda=1$,

 $I_{_{\lambda,k}}=[(\lambda-1)2^{4t+1-k},
 \lambda2^{4t+1-k}-1]$ for $2\leq \lambda \leq 2^{k-2t-1}-1$,

 $I_{_{\lambda,k}}=[(\lambda-1)2^{4t+1-k}, 2^{2t}-3]$ for $\lambda=2^{k-2t-1}$.

Fix $k$, for each $\lambda \in [1, 2^{k-2t-1}]$,  we  can define
$\lambda$ as the {\it identity tag} of the subinterval
$I_{_{\lambda,k}}=[l_{_{\lambda,b}},l_{_{\lambda,e}}]$.

For given $\lambda$,  $x=2^{2t+1}+1+2l$ with $l\in
I_{_{\lambda,k}}$, we can easily calculate that
$y_{_{x,k}}=2^kx-(2^{k-2t-1}+\lambda-1)n$, then we split into
following two subcases.

{\bf Case 2.3.1:} Firstly, we show $y_{_{x,k}}-x>0$. For general
$\lambda$, we have
\begin{eqnarray*}
y_{_{x,k}}-x&=&(2^k-1)x-(2^{k-2t-1}+\lambda-1)n\\
&=&(2^k-1)(2^{2t+1}+1+2l)-(2^{k-2t-1}+\lambda-1)(2^{4t+2}+1)\\
&=&(2^k-1)(1+2l)-2^{2t+1}-2^{k-2t-1}-(\lambda-1)(2^{4t+2}+1).
\end{eqnarray*}

If $\lambda=1$, then  $l\in I_{_{\lambda,k}}=[1,2^{4t+1-k}-1]$, it
follows from $k\geq 2t+2$ that
\begin{eqnarray*}
y_{_{x,k}}-x&=&(2^k-1)(1+2l)-2^{2t+1}-2^{k-2t-1}\\
&\geq&(2^k-1)(1+2\cdot1)-2^{2t+1}-2^{k-2t-1}\\
&=&(3-2^{-2t-1})\cdot2^k-2^{2t+1}-3\\
&\geq&(3-2^{-2t-1})\cdot2^{2t+2}-2^{2t+1}-3\\
&=&3\cdot2^{2t+2}-2^{2t+1}-5>0.
\end{eqnarray*}
If $\lambda\in[2,2^{k-2t-1}]$, since $l\in
I_{_{\lambda,k}}=[l_{_{\lambda,b}}=(\lambda-1)2^{4t+1-k},
 l_{_{\lambda,e}}]$, we can similarly obtain that
\begin{eqnarray*}
y_{_{x,k}}-x&=&(2^k-1)(1+2l)-2^{2t+1}-2^{k-2t-1}-(\lambda-1)(2^{4t+2}+1)\\
&\geq&(2^k-1)(1+2\cdot l_{_{\lambda,b}})-2^{2t+1}-2^{k-2t-1}-(\lambda-1)(2^{4t+2}+1)\\
&=&2^{4t+2-k}+2^{k}-2^{2t+1}-2^{k-2t-1}-(2^{4t+2-k}+1)\lambda\\
&\geq&2^{4t+2-k}+2^{k}-2^{2t+1}-2^{k-2t-1}-(2^{4t+2-k}+1)\cdot2^{k-2t-1}\\
&=&2^{4t+2-k}+(1-2^{-2t})2^{k}-2^{2t+2}\\
&\geq&2^{4t+2-(2t+2)}+(1-2^{-2t})2^{2t+2}-2^{2t+2}(\hbox{see Lemma 0.1})\\
&=&2^{2t}-4>0.
\end{eqnarray*}
{\bf Case 2.3.2:} Secondly, we show $n-y_{_{x,k}}-x>0$. For general
$\lambda$, we have
\begin{eqnarray*}
n-y_{_{x,k}}-x&=& (2^{k-2t-1}+\lambda)(2^{4t+2}+1)-(2^k+1)(2^{2t+1}+1+2l)\\
& =&\lambda(2^{4t+2}+1)+2^{k-2t-1}-2^{2t+1}-(2^k+1)(1+2l).
\end{eqnarray*}
If $\lambda\in[1,2^{k-2t-1}-1]$  and  $l\in [l_{_{\lambda,b}},
l_{_{\lambda,e}}=\lambda2^{4t+1-k}-1]$, we can derive from $k\geq
2t+2$ that
\begin{eqnarray*}
n-y_{_{x,k}}-x
&\geq&\lambda(2^{4t+2}+1)+2^{k-2t-1}-2^{2t+1}-(2^k+1)(1+2l_{_{\lambda,e}})\\
&=&2^{k-2t-1}+2^k-2^{2t+1}+1-2^{4t+2-k}\lambda\\
&\geq&2^{k-2t-1}+2^k-2^{2t+1}+1-2^{4t+2-k}(2^{k-2t-1}-1)\\
&=&2^{4t+2-k}+(1+2^{-2t})2^k-2^{2t+2}\\
&\geq&2^{4t+2-(2t+2)}+(1+2^{-2t})2^{2t+2}-2^{2t+2}(\hbox{see Lemma 0.1})\\
&=&2^{2t}+4>0.
\end{eqnarray*}

If $\lambda=2^{k-2t-1}$ and $l\in
I_{_{2^{k-2t-1},k}}=[l_{_{\lambda,b}}=(2^{k-2t-1}-1)2^{4t+1-k},l_{_{\lambda,e}}=
2^{2t}-3]$, then we similarly get that
\begin{eqnarray*}
n-y_{_{x,k}}-x&=&2^{k-2t-1}(2^{4t+2}+1)+2^{k-2t-1}-2^{2t+1}-(2^k+1)(1+2l)\\
&\geq&2^{k-2t-1}(2^{4t+2}+1)+2^{k-2t-1}-2^{2t+1}-(2^k+1)(1+2l_{_{\lambda,e}})\\
&=&(5+2^{-2t})2^k-2^{2t+2}+5\\
&\geq&(5+2^{-2t})2^{2t+2}-2^{2t+2}+5=2^{2t+4}+9>0.
\end{eqnarray*}

(2.4): If $k=4t$, similarly, we partition $I=[1, 2^{2t}-3]$ into
$2^{2t-1}-1$ subintervals as follows.

 $I_{_{\lambda,k}}=[1,1=2\lambda-1]$, where $\lambda=1$,

 $I_{_{\lambda,k}}=[2(\lambda-1),
 2\lambda-1]$, where $2\leq \lambda \leq 2^{2t-1}-1$.

Let $\lambda \in [1,2^{2t-1}-1]$, we can define $\lambda$ as the
{\it identity tag} of the subinterval
$I_{_{\lambda,k}}=[l_{_{\lambda,b}},l_{_{\lambda,e}}]$.

For given $\lambda$, if $x=2^{2t+1}+1+2l$ with $l\in
I_{_{\lambda,k}}$, we can easily check that
$y_{_{x,k}}=2^kx-(2^{2t-1}+\lambda-1)n$, then we split into
following two subcases.

{\bf Case 2.4.1:} First, we show $y_{_{x,k}}-x>0$. For general
$\lambda$,
\begin{eqnarray*}
y_{_{x,k}}-x&=&(2^{4t}-1)x-(2^{2t-1}+\lambda-1)n\\
&=&(2^{4t}-1)(2^{2t+1}+1+2l)-(2^{2t-1}+\lambda-1)(2^{4t+2}+1)\\
&=&(2^{4t}-1)(1+2l)-2^{2t+1}-2^{2t-1}-(\lambda-1)(2^{4t+2}+1).
\end{eqnarray*}

If $\lambda=1$, we have $l\in I_{_{1,k}}=[1,1]$, that is $l=1$, we
can easily obtain that $$
y_{_{x,k}}-x=2^{4t}-2^{2t-1}+2^{2t+1}-3>0.$$

If $\lambda\in[2,2^{2t-1}-1]$ and $l\in
[l_{_{\lambda,b}}=2(\lambda-1),
 2\lambda-1]$, we  get
\begin{eqnarray*}
y_{_{x,k}}-x&=&(2^{4t}-1)(1+2l)-2^{2t+1}-2^{2t-1}-(\lambda-1)(2^{4t+2}+1)\\
&\geq&(2^{4t}-1)(1+2l_{_{\lambda,b}})-2^{2t+1}-2^{2t-1}-(\lambda-1)(2^{4t+2}+1)\\
&=&2^{4t}-2^{t+1}-2^{t-1}+4-5\lambda\\
&\geq&2^{4t}-2^{t+1}-2^{t-1}+4-5(2^{2t-1}-1)\\
&=&2^{4t}-5\cdot2^{2t}+9>0.
\end{eqnarray*}
{\bf Case 2.4.2:} Secondly, we show $n-y_{_{x,k}}-x>0$.  For general
$\lambda$,
\begin{eqnarray*}
n-y_{_{x,k}}-x&=&(2^{2t-1}+\lambda)(2^{4t+2}+1)-(2^{4t}+1)(2^{2t+1}+1+2l)\\
& =&\lambda(2^{4t+2}+1)+2^{2t-1}-2^{2t+1}-(2^{4t}+1)(1+2l)\\
&\geq&\lambda(2^{4t+2}+1)+2^{2t-1}-2^{2t+1}-(2^{4t}+1)[1+2(2\lambda-1)]\\
&=&2^{4t}+2^{2t-1}-2^{2t+1}+1-3\lambda\\
&\geq&2^{4t}+2^{2t-1}-2^{2t+1}+1-3(2^{2t-1}-1)\\
&=&2^{4t}-3\cdot 2^{2t}+4>0.
\end{eqnarray*}

(2.5): If $k=4t+1$, for given $l\in I=[1, 2^{2t}-3]$,  we have
$y_{_{x,k}}=2^kx-(2^{2t}+l)n$, thus
\begin{eqnarray*}
y_{_{x,k}}-x&=&(2^k-1)x-(2^{2t}+l)n\\
 &=&2^{4t+1}-2^{2t+1}-2^{2t}-1-3l\\
&\geq &2^{4t+1}-2^{2t+1}-2^{2t}-1-3(2^{2t}-3)\\
&=&2^{4t+1}-3\cdot2^{2t+1}+8>0,\\
n-y_{_{x,k}}-x
&=& 2^{4t+1}+2^{2t}-2^{2t+1}-l\\
&\geq&2^{4t+1}+2^{2t}-2^{2t+1}-(2^{2t}-3)\\
&=& 2^{4t+1}-2^{2t+1}+3>0.
\end{eqnarray*}
Concluding the previous five cases (2.1)-(2.5), (2) holds.

 (3) Since $x$ is odd and $2^{2t+2}+5\leq x\leq 2^{2t+2}+2^{2t}-3$,
   then $x$ can be denoted as
   $$x=2^{2t+2}+1+2l,  \hbox{where}~l\in J=[2,2^{2t-1}-2].$$

 To verify that (3) holds, it is necessary to  show  $y_{_{x,k}}-x \geq 0$  and
$n-y_{_{x,k}}-x \geq 0$. Similar to (2),  we split into following
cases according to different $k$.

(3.1): If $k=0,1,2,\cdots,2t-1$, obviously, $x\leq 2^kx <n$, then
$y_{_{x,k}}=2^kx\geq x$, we get that
\begin{eqnarray*}
n-y_{_{x,k}}-x&=&n-(2^k+1)x\\
&\geq&n-(2^{2t-1}+1)x\\
&\geq&n-(2^{2t-1}+1)(2^{2t+2}+2^{2t}-3)\\
&=&3\cdot2^{4t-1}-7\cdot2^{2t-1}+4>0.
\end{eqnarray*}

(3.2): If $k=2t,2t+1,2t+2$, we check that
$y_{_{x,k}}=2^kx-2^{k-2t}n$, thus
\begin{eqnarray*}
y_{_{x,k}}-x&=&(2^k-1)x-2^{k-2t}n\\
 &=&2^k-2^{k-2t}-2^{2t+2}-1+(2^{k+1}-2)l\\
  &\geq&2^k-2^{k-2t}-2^{2t+2}-1+(2^{k+1}-2)\cdot 2\\
&=&(5-2^{-2t})\cdot2^k-2^{2t+2}-5\\
&\geq&(5-2^{-2t})\cdot2^{2t}-2^{2t+2}-5\\
 &=&2^{2t}-6>0,\\
n-y_{_{x,k}}-x&=&(2^{k-2t}+1)n-(2^k+1)(2^{2t+2}+1+2l)\\
&\geq&(2^{k-2t}+1)n-(2^k+1)[2^{2t+2}+1+2(2^{2t-1}-2)]\\
&=&2^{4t+2}+2^{2t+2}+2^{2t}-2-(2^{2t}+2^{-2t}-3)2^k\\
&\geq&2^{4t+2}+2^{2t+2}+2^{2t}-2-(2^{2t}+2^{-2t}-3)2^{2t+2}\\
&=&7\cdot2^{2t}+8>0.
\end{eqnarray*}

(3.3): For each $k=2t+3, 2t+4,\cdots,4t-1$, it is not easy to
determine $y_{_{x,k}}$ and    check   $y_{_{x,k}}-x \geq 0$ along
with $n-y_{_{x,k}}-x \geq 0$. To achieve this,  we divide the value
range  $J=[2,2^{2t-1}-2]$ of $l$ into $2^{k-2t-2}$ subintervals as
follows:

 $J_{_{\lambda,k}}=[2,\lambda2^{4t+1-k}-1]$ for $\lambda=1$,

 $J_{_{\lambda,k}}=[(\lambda-1)2^{4t+1-k},
 \lambda2^{4t+1-k}-1]$ for $2\leq \lambda \leq 2^{k-2t-2}-1$

$J_{_{\lambda,k}}=[(\lambda-1)2^{4t+1-k}, 2^{2t-1}-2]$ for
$\lambda=2^{k-2t-2}$.

Let $\lambda \in [1, 2^{k-2t-2}]$,  where $\lambda$ is called the
{\it identity tag} of  subinterval
$I_{_{\lambda,k}}=[l_{_{\lambda,b}},l_{_{\lambda,e}}]$.

For given $\lambda$, if $x=2^{2t+1}+1+2l$ with $l\in
J_{_{\lambda,k}}$, we can check
$y_{_{x,k}}=2^kx-(2^{k-2t}+\lambda-1)n$,  then we split into
following two subcases:

 {\bf Case 3.3.1:} Firstly, we show
$y_{_{x,k}}-x>0$:
\begin{eqnarray*}
y_{_{x,k}}-x&=&(2^k-1)x-(2^{k-2t}+\lambda-1)n\\
&=&(2^k-1)(2^{2t+2}+1+2l)-(2^{k-2t}+\lambda-1)(2^{4t+2}+1)\\
&=&(2^k-1)(1+2l)-2^{2t+2}-2^{k-2t}-(\lambda-1)(2^{4t+2}+1).
\end{eqnarray*}
If $\lambda=1$, from $l\in J_{_{1,k}}=[2, 2^{4t+1-k}-1]$ and $k\geq
2t+3$, we get
\begin{eqnarray*}
y_{_{x,k}}-x&=&(2^k-1)(1+2l)-2^{2t+2}-2^{k-2t}\\
&\geq&(2^k-1)(1+2\cdot2)-2^{2t+2}-2^{k-2t}\\
&=&(5-2^{-2t})\cdot2^k-2^{2t+2}-5\\
&\geq&(5-2^{-2t})\cdot2^{2t+3}-2^{2t+2}-5\\
&=&9\cdot2^{2t+2}-13>0.
\end{eqnarray*}
If $\lambda\in[2,2^{k-2t-2}]$, thus $l\in J_{_{\lambda,k}}=
[l_{_{\lambda,b}}=(\lambda-1)2^{4t+1-k}, l_{_{\lambda,e}}]$, we can
similarly deduce that
\begin{eqnarray*}
&&y_{_{x,k}}-x\\
&=&(2^k-1)(1+2l)-2^{2t+2}-2^{k-2t}-(\lambda-1)(2^{4t+2}+1)\\
&\geq&(2^k-1)(1+2l_{_{\lambda,b}})-2^{2t+2}-2^{k-2t}-(\lambda-1)(2^{4t+2}+1)\\
&=&2^{4t+2-k}+2^{k}-2^{2t+2}-2^{k-2t}-(2^{4t+2-k}+1)\lambda\\
&\geq&2^{4t+2-k}+2^{k}-2^{2t+2}-2^{k-2t}-(2^{4t+2-k}+1)2^{k-2t-2}\\
&=&2^{4t+2-k}+(1-2^{-2t}-2^{-2t-2})2^{k}-2^{2t+2}-2^{2t}\\
&\geq&2^{4t+2-(2t+3)}+\!\!(1-\!\!2^{-2t}-\!\!2^{-2t-2})2^{2t+3}-2^{2t+2}-2^{2t}(\hbox{see Lemma 0.1})\\
&=&7\cdot2^{2t-1}-10>0.
\end{eqnarray*}
{\bf Case 3.3.2:} Secondly, we show $n-y_{_{x,k}}-x>0$:
\begin{eqnarray*}
n-y_{_{x,k}}-x&=& (2^{k-2t}+\lambda)n-(2^k+1)x\\
&=& (2^{k-2t}+\lambda)(2^{4t+2}+1)-(2^k+1)(2^{2t+2}+1+2l)\\
& =&\lambda(2^{4t+2}+1)+2^{k-2t}-2^{2t+2}-(2^k+1)(1+2l).
\end{eqnarray*}
If $\lambda\in[1,2^{k-2t-2}-1]$, we have $l\in
J_{_{\lambda,k}}=[l_{_{\lambda,b}},
 l_{_{\lambda,e}}=\lambda2^{4t+1-k}-1]$, then
\begin{eqnarray*}
  &&n-y_{_{x,k}}-x\\
  &\geq&\lambda(2^{4t+2}+1)+2^{k-2t}-2^{2t+2}-(2^k+1)(1+2l_{_{\lambda,e}})\\
  &=&2^{k-2t}+2^k-2^{2t+2}+1-2^{4t+2-k}\lambda\\
&\geq&2^{k-2t}+2^k-2^{2t+2}+1-2^{4t+2-k}\cdot(2^{k-2t-2}-1)\\
&=&2^{4t+2-k}+(1+2^{-2t}+2^{-2t-2})2^k-2^{2t+2}-2^{2t}\\
&\geq&2^{4t+2-(2t+3)}+(1+2^{-2t}+2^{-2t-2})2^{2t+3}-2^{2t+2}-2^{2t}(\hbox{see Lemma 0.1})\\
&=&7\cdot2^{2t-1}-10>0;
\end{eqnarray*}
If $\lambda=2^{k-2t-2}$, we have $l\in
J_{_{\lambda,k}}=[l_{_{\lambda,b}}=(\lambda-1)2^{4t+1-k},
 l_{_{\lambda,e}}= 2^{2t-1}-2]$,  it is easy
to obtain that
\begin{eqnarray*}
 n-y_{_{x,k}}-x&
 =&2^{k-2t-2}(2^{4t+2}+1)+2^{k-2t}-2^{2t+2}-(2^k+1)(1+2l)\\
&\geq&2^{k-2t-2}(2^{4t+2}+1)+2^{k-2t}-2^{2t+2}-(2^k+1)(1+2 l_{_{\lambda,e}})\\
&=&(3+2^{-2t}+2^{-2t-2})2^k-2^{2t+2}-2^{2t}+3\\
&\geq&(3+2^{-2t}+2^{-2t-2})2^{2t+3}-2^{2t+2}-2^{2t}+3\\
&=&19\cdot2^{2t}+13>0.
\end{eqnarray*}
(3.4): If $k=4t$, we divide $J=[2,2^{2t-1}-2]$ into the following
$2^{2t-2}-1$ subintervals:

 $J_{_{\lambda,k}}=[2\lambda,  2\lambda+1]$, where $1\leq \lambda \leq 2^{2t-2}-2$,

 $J_{_{\lambda =2^{2t-2}-1,k}}= [2^{2t-1}-2=2\lambda,2^{2t-1}-2=2\lambda]$.

Let $\lambda \in [1,2^{2t-2}-1]$,  we  can define $\lambda$ as the
{\it identity tag} of the subinterval  $J_{_{\lambda,k}}$.

For $x=2^{2t+1}+1+2l$, if $l\in
J_{_{\lambda,k}}=[l_{_{\lambda,b}}=2\lambda,l_{_{\lambda,e}}]$, we
have $y_{_{x,k}}=2^kx-(2^{2t}+\lambda)n$,  then we split into
following two subcases.

{\bf Case 3.4.1:} Firstly, we show $y_{_{x,k}}-x>0$:
\begin{eqnarray*}
y_{_{x,k}}-x&=&(2^{4t}-1)x-(2^{2t}+\lambda)n\\
&=&(2^{4t}-1)(2^{2t+2}+1+2l)-(2^{2t}+\lambda)(2^{4t+2}+1)\\
&=&(2^{4t}-1)(1+2l)-2^{2t+2}-2^{2t}-(2^{4t+2}+1)\lambda\\
&\geq&(2^{4t}-1)(1+2l_{_{\lambda,b}})-2^{2t+2}-2^{2t}-(2^{4t+2}+1)\lambda\\
&=&2^{4t}-2^{2t+2}-2^{2t}-1-5\lambda\\
&\geq&2^{4t}-2^{2t+2}-2^{2t}-1-5(2^{2t-1}-1)\\
&=&2^{4t}-25\cdot2^{2t-2}+4>0.
\end{eqnarray*}

{\bf Case 3.4.2:} Secondly, we show $n-y_{_{x,k}}-x>0$:
\begin{eqnarray*}
n-y_{_{x,k}}-x&=&(2^{2t}+\lambda+1)n-(2^{4t}+1)x\\
&=&(2^{2t}+\lambda+1)(2^{4t+2}+1)-(2^{4t}+1)(2^{2t+2}+1+2l).
\end{eqnarray*}
When $\lambda \in [1,2^{2t-2}-2]$, we have $l\in
J_{_{\lambda,k}}=[l_{_{\lambda,b}}=2\lambda,
l_{_{\lambda,e}}2\lambda+1]$, then
\begin{eqnarray*}
n-y_{_{x,k}}-x&\geq&(2^{2t}+\lambda+1)(2^{4t+2}+1)-(2^{4t}+1)(2^{2t+2}+1+2l_{_{\lambda,e}})\\
&=&2^{4t}-3\cdot2^{2t}-2-3\lambda\\
&\geq&2^{4t}-3\cdot2^{2t}-2-3(2^{2t-2}-2)\\
&=&2^{4t}-15\cdot 2^{2t-2}+4>0.
\end{eqnarray*}
When $\lambda=2^{2t-2}-1$ and $l\in
J_{_{\lambda,k}}=[2^{2t-1}-2,2^{2t-1}-2]$, we easily deduce that
\begin{eqnarray*}
n-y_{_{x,k}}-x&=&(2^{2t}+\lambda+1)(2^{4t+2}+1)-(2^{4t}+1)(2^{2t+2}+1+2l)\\
&\geq&(2^{2t}+2^{2t-2})(2^{4t+2}+1)-(2^{4t}+1)[2^{2t+2}+1+2(2^{2t-1}-2)]\\
&=&3\cdot 2^{4t}-15\cdot 2^{2t-2}+3>0.
\end{eqnarray*}
(3.5): If  $k=4t+1$, for all $x=2^{2t+2}+1+2l$ with $l\in
J=[2,2^{2t-1}-2]$, we have $y_{_{x,k}}=2^kx-(2^{k-2t}+l)n$, it
follows that
\begin{eqnarray*}
y_{_{x,k}}-x&=&(2^k-1)x-(2^{k-2t}+l)n\\
 &=&2^{4t+1}-3\cdot2^{2t+1}-1-3l\\
&\geq &2^{4t+1}-3\cdot2^{2t+1}-1-3(2^{2t-1}-2)\\
 &=&2^{4t+1}-15\cdot2^{2t-1}+5>0,\\
&&\\
n-y_{_{x,k}}-x&=& (2^{2t}+l+1)n-(2^k+1)x\\
&=& 2^{4t+1}-2^{2t+1}-l\\
&\geq &2^{4t+1}-2^{2t+1}-(2^{2t-1}-2)\\
&=&2^{4t+1}-5\cdot2^{2t+1}+2>0.
\end{eqnarray*}

To conclude the five cases (3.1)-(3.5), (3) holds.

(4) Similarly, for odd $x$ with $2^{2t+2}+2^{2t}+3 \leq x\leq
2^{2t+2}+2^{2t+1}-3$, $x$ can be denoted as
$$x=2^{2t+2}+2^{2t}+1+2l,   \hbox{where}~l \in S=[1,2^{2t-1}-2].$$

 To verify that (4) holds, it is necessary to  show  $y_{_{x,k}}-x \geq 0$  and
$n-y_{_{x,k}}-x \geq 0$.  We now split into following cases
according to different $k$.

(4.1): If $k=0,1,2,\cdots, 2t-1$,  it follows from
$2^{2t+2}+2^{2t}+3\leq x\leq 2^{2t+2}+2^{2t+1}-3$ that $x\leq 2^kx
<n$,   then we easily know $y_{_{x,k}}=2^kx\geq x$ and
 \begin{eqnarray*}
n-y_{_{x,k}}-x&=&n-(2^k+1)x\\
 &\geq&n-(2^{2t-1}+1)x\\
&\geq&n-(2^{2t-1}+1)(2^{2t+2}+2^{2t+1}-3)\\
&=&2^{4t}-9\cdot2^{2t-1}+4>0.
 \end{eqnarray*}

(4.2): If $k=2t,2t+1$, we check that $y_{_{x,k}}=2^kx-2^{k-2t}n$,
then
\begin{eqnarray*}
y_{_{x,k}}-x&=&(2^k-1)x-2^{k-2t}n\\
 &=&2^k+2^{k+2t}-2^{2t}-2^{k-2t}-2^{2t+2}-1+(2^{k+1}-2)l\\
&\geq&2^k+2^{k+2t}-2^{2t}-2^{k-2t}-2^{2t+2}-1+(2^{k+1}-2)\cdot1\\
&=&(3+2^{2t}-2^{-2t})\cdot2^k-2^{2t+2}-2^{2t}-3\\
&\geq&(3+2^{2t}-2^{-2t})\cdot2^{2t}-2^{2t+2}-2^{2t}-3\\
&=&2^{4t}-2^{2t+1}-4>0,
\end{eqnarray*}
\begin{eqnarray*}
n-y_{_{x,k}}-x&=&(2^{k-2t}+1)n-(2^k+1)(2^{2t+2}+2^{2t}+1+2l)\\
&\geq&(2^{k-2t}+1)n-(2^k+1)[2^{2t+2}+2^{2t}+1+2(2^{2t-1}-2)]\\
&=&2^{4t+2}-2^{2t+2}-2^{2t+1}-(2^{2t+1}-2^{-2t}-3)2^k+4\\
&\geq&2^{4t+2}-2^{2t+2}-2^{2t+1}-(2^{2t+1}-2^{-2t}-3)2^{2t+1}+4\\
&=& 6>0.
\end{eqnarray*}
(4.3): If $k=2t+2$,  we can deduce $y_{_{x,k}}=2^kx-5n$, it follows
that \begin{eqnarray*}
y_{_{x,k}}-x&=&(2^k-1)x-5n\\
 &\geq&(2^k-1)(2^{2t+2}+2^{2t}+3)-5n\\
 &=& 7\cdot2^{2t}-8>0,\\
&&\\
n-y_{_{x,k}}-x&=&6n-(2^k+1)x\\
&\geq&6n-(2^k+1)(2^{2t+2}+2^{2t+1}-3)\\
 &=& 3(2^{2t+1}+3)>0.
\end{eqnarray*}

(4.4): For each $k=2t+3, 2t+4,\cdots,4t-1$, it is a little
intractable to determine   $y_{_{x,k}}$ and    check $y_{_{x,k}}-x
\geq 0$ along with $n-y_{_{x,k}}-x \geq 0$. To complete this,  we
first divide $S=[1,2^{2t-1}-2]$ into $2^{k-2t-2}$ intervals as
follows:

 $S_{_{\lambda,k}}=[1,2^{4t+1-k}-1]$ for $\lambda=1$,

 $S_{_{\lambda,k}}=[(\lambda-1)2^{4t+1-k},
 \lambda2^{4t+1-k}-1]$ for $2\leq \lambda \leq 2^{k-2t-2}-1$,

 $S_{_{\lambda,k}}=[(\lambda-1)2^{4t+1-k}, 2^{2t-1}-2]$ for $\lambda=2^{k-2t-2}$.

Let $\lambda \in [1,2^{k-2t-2}]$,  we  can define $\lambda$ as the
{\it identity tag} of the subinterval
$S_{_{\lambda,k}}=[l_{_{\lambda,b}},l_{_{\lambda,e}}]$.

For $x=2^{2t+2}+2^{2t}+1+2l$, for given $\lambda$, if $l\in
S_{_{\lambda,k}}$, it is not difficult to check
$y_{_{x,k}}=2^kx-(2^{k-2t}+2^{k-2t-2}+\lambda-1)n$, then we can
split into following two subcases to verify the desired conclusion.

{\bf Case 4.4.1:} Firstly, we show $y_{_{x,k}}-x>0$:
\begin{eqnarray*}
y_{_{x,k}}-x&=&(2^k-1)x-(2^{k-2t}+2^{k-2t-2}+\lambda-1)n\\
&=&(2^k-1)(2^{2t+2}+2^{2t}+1+2l)-(2^{k-2t}+2^{k-2t-2}+\lambda-1)n\\
&=&(2^k-1)(1+2l)-2^{2t+2}-2^{2t}-2^{k-2t}-2^{k-2t-2}-(\lambda-1)n.
\end{eqnarray*}

If $\lambda=1$, we have $l\in S_{_{1,k}}=[1,2^{4t+1-k}-1]$, it
follows from $k\geq 2t+3$ that
 \begin{eqnarray*}y_{_{x,k}}-x
&=&(2^k-1)(1+2l)-2^{2t+2}-2^{2t}-2^{k-2t}-2^{k-2t-2}\\
&\geq&(2^k-1)(1+2cdot1)-2^{2t+2}-2^{2t}-2^{k-2t}-2^{k-2t-2}\\
&=&(3-2^{-2t}-2^{-2t-2})\cdot2^k-2^{2t+2}-2^{2t}-3\\
&\geq&(3-2^{-2t}-2^{-2t-2})\cdot2^{2t+3}-2^{2t+2}-2^{2t}-3\\
&=&19\cdot2^{2t}-13>0.
\end{eqnarray*}

If $\lambda\in[2,2^{k-2t-2}]$, since  $l\in
S_{_{\lambda,k}}=[l_{_{\lambda,b}}=(\lambda-1)2^{4t+1-k},l_{_{\lambda,e}}]$,
we can similarly deduce  from $k\geq 2t+3$ that
\begin{eqnarray*}
y_{_{x,k}}-x&=&(2^k-1)(1+2l)-2^{2t+2}-2^{2t}-2^{k-2t}-2^{k-2t-2}-(\lambda-1)n\\
&\geq&(2^k-1)(1+2l_{_{\lambda,b}})-2^{2t+2}-2^{2t}-2^{k-2t}-2^{k-2t-2}-(\lambda-1)n\\
&=&2^{4t+2-k}+2^{k}-2^{2t+2}-2^{2t}-2^{k-2t}-2^{k-2t-2}-(2^{4t+2-k}+1)\lambda\\
&\geq&\!\!2^{4t+2-k}+2^{k}-2^{2t+2}-2^{2t}-2^{k-2t}-\!2^{k-2t-2}\!\!-(2^{4t+2-k}+1)2^{k-2t-2}\\
&=&2^{4t+2-k}+(1-2^{-2t}-2^{-2t-1})2^{k}-2^{2t+2}-2^{2t+1}(\hbox{see Lemma 0.1})\\
&\geq&2^{4t+2-(2t+3)}+(1-2^{-2t}-2^{-2t-1})2^{2t+3}-2^{2t+2}-2^{2t+1}\\
&=&5\cdot2^{2t-1}-12>0.
\end{eqnarray*}
{\bf Case 4.4.2:} Secondly, we show $n-y_{_{x,k}}-x>0$:
\begin{eqnarray*}
&&n-y_{_{x,k}}-x\\
&=& (2^{k-2t}+2^{k-2t-2}+\lambda)n-(2^k+1)x\\
&=&\lambda(2^{4t+2}+1)+2^{k-2t}+2^{k-2t-2}-\!\!2^{2t+2}-2^{2t}-\!\!(2^k+1)(1+2l).
\end{eqnarray*}
If $\lambda\in[1,2^{k-2t-2}-1]$, we  have $l\in
S_{_{\lambda,k}}=[l_{_{\lambda,b}},
l_{_{\lambda,e}}=\lambda2^{4t+1-k}-1]$, then
\begin{eqnarray*}
 &&n-y_{_{x,k}}-x\\
&\geq&\lambda(2^{4t+2}+1)+2^{k-2t}+2^{k-2t-2}-2^{2t+2}-2^{2t}-(2^k+1)(1+2l_{_{\lambda,e}})\\
 &=&2^{k-2t}+2^{k-2t-2}+2^k-2^{2t+2}-2^{2t}+1-(2^{4t+2-k}-1)\lambda\\
&\geq&2^{k-2t}+2^{k-2t-2}+2^k-2^{2t+2}-2^{2t}+1-(2^{4t+2-k}-1)(2^{k-2t-2}-1)\\
&=&2^{4t+2-k}+(1+2^{-2t}+2^{-2t-1})2^k-2^{2t+2}-2^{2t+1}\\
&\geq&2^{4t+2-(2t+3)}+(1+2^{-2t}+2^{-2t-1})2^{2t+3}-2^{2t+2}-2^{2t+1}(\hbox{see Lemma 0.1})\\
&=&5\cdot2^{2t-1}+12>0.
\end{eqnarray*}

If $\lambda=2^{k-2t-2}$,  it follows that $l\in
S_{_{\lambda,k}}=[l_{_{\lambda,b}}=(2^{k-2t-2}-1)2^{4t+1-k},
l_{_{\lambda,e}}=2^{2t-1}-2]$, then we obtain
\begin{eqnarray*}
&& n-y_{_{x,k}}-x\\
 &=&2^{k-2t-2}(2^{4t+2}+1)+2^{k-2t}+\!\!2^{k-2t-2}-\!\!2^{2t+2}-\!\!2^{2t}-(2^k+1)(1+2l)\\
&\geq&2^{k-2t-2}(2^{4t+2}+1)+2^{k-2t}+\!\!2^{k-2t-2}-\!\!2^{2t+2}-\!\!2^{2t}-\!\!(2^k+1)(1+2l_{_{\lambda,e}})\\
 &=&2^{k-2t-1}+2^{k-2t}+2^{k+1}-2^{2t+2}-2^{2t+1}+3\\
&\geq&2^{(2t+3)-2t-1}+2^{(2t+3)-2t}+2^{(2t+3)+1}-2^{2t+2}-2^{2t+1}+3\\
 &=&5(2^{2t+1}+3)>0.
\end{eqnarray*}
(4.5): If $k=4t$,  similar to (4.4) above, to determine $y_{_{x,k}}$
and    check   $y_{_{x,k}}-x \geq 0$ along with $n-y_{_{x,k}}-x \geq
0$,  we  first divide $S=[1,2^{2t-1}-2]$ into $2^{2t-2}$ intervals
as follows:

 $S_{_{\lambda,k}}=[1,1=2\lambda-1]$ for $\lambda=1$,

 $S_{_{\lambda,k}}=[2(\lambda-1),
 2\lambda-1] ~ \mbox{for} ~ 2\leq \lambda \leq 2^{2t-2}-1 $,

 $S_{_{\lambda,k}}=[2^{2t-1}-2=2(\lambda-1),2^{2t-1}-2]$ for $\lambda=2^{2t-2}$.

Let $\lambda \in [1,2^{2t-2}]$,  we  can define $\lambda$ as the
{\it identity tag} of the subinterval
$S_{_{\lambda,k}}=[l_{_{\lambda,b}},l_{_{\lambda,e}}]$.

For give $\lambda$, if $x=2^{2t+2}+2^{2t}+1+2l$ with $l\in
S_{_{\lambda,k}}$, it is not difficult to check that
$y_{_{x,k}}=2^kx-(2^{k-2t}+2^{k-2t-2}+\lambda-1)n$,  then we split
into following two subcases.

{\bf Case 4.5.1:} Firstly, we show $y_{_{x,k}}-x>0$:
\begin{eqnarray*}
y_{_{x,k}}-x&=&(2^{4t}-1)x-(2^{2t}+2^{2t-2}+\lambda-1)n\\
&=&(2^{4t}-1)(2^{2t+2}+2^{2t}+1+2l)-(2^{2t}+2^{2t-2}+\lambda-1)(2^{4t+2}+1)\\
&=&(2^{4t}-1)(1+2l)-2^{2t+2}-2^{2t+1}-2^{2t-2}-(2^{4t+2}+1)(\lambda-1).
\end{eqnarray*}
If $\lambda=1$,  we have $l\in S_{_{\lambda,k}}=[1,1]$, that is,
$l=1$, we can easily deduce
\begin{eqnarray*}
y_{_{x,k}}-x&=&(2^{4t}-1)(1+2l)-2^{2t+2}-2^{2t+1}-2^{2t-2}\\
&=&(2^{4t}-1)(1+2\cdot1)-2^{2t+2}-2^{2t+1}-2^{2t-2}\\
&\geq& 3\cdot2^{4t}-25\cdot 2^{2t-2}-3>0.
\end{eqnarray*}
If $\lambda\in[2,2^{2t-2}]$,  we know $l\in
S_{_{\lambda,k}}=[l_{_{\lambda,b}}=2\lambda-2,l_{_{\lambda,e}}]$, it
follows that
\begin{eqnarray*}
y_{_{x,k}}-x&=&(2^{4t}-1)(1+2l)-2^{2t+2}-2^{2t+1}-2^{2t-2}-(2^{4t+2}+1)\\
&\geq&(2^{4t}-1)(1+2l_{_{\lambda,b}})-2^{2t+2}-2^{2t+1}-2^{2t-2}-(2^{4t+2}+1)\\
&=&2^{4t}-25\cdot2^{2t-2}+4-5\lambda\\
&\geq&2^{4t}-25\cdot2^{2t-2}+4-5\cdot2^{2t-2}\\
&=&2^{4t}-15\cdot2^{2t-1}+4>0.
\end{eqnarray*}
{\bf Case 4.5.2:} Secondly, we show $n-y_{_{x,k}}-x>0$:
\begin{eqnarray*}
n-y_{_{x,k}}-x&=&(2^{2t}+2^{2t-2}+\lambda)n-(2^{4t}+1)x\\
&=&(2^{2t}+2^{2t-2}+\lambda)(2^{4t+2}+1)-(2^{4t}+1)(2^{2t+2}+2^{2t}+1+2l).
\end{eqnarray*}
If $1\leq \lambda \leq 2^{2t-2}-1$ and  $l\in
S_{_{\lambda,k}}=[l_{_{\lambda,b}},l_{_{\lambda,e}}=2\lambda-1]$,
then we get that
\begin{eqnarray*}
n-y_{_{x,k}}-x&\geq&(2^{2t}+2^{2t-2}+\lambda)(2^{4t+2}+1)-(2^{4t}+1)(2^{2t+2}+2^{2t}+1+2l_{_{\lambda,e}})\\
&=&2^{4t}+2^{2t-2}-2^{2t+2}+1-3\lambda\\
&\geq&2^{4t}+2^{2t-2}-2^{2t+2}+1-3(2^{2t-2}-1)\\
 &=&2^{4t}-9\cdot2^{2t-1}+4>0.
\end{eqnarray*}
If $\lambda=2^{2t-2}$, we have $x=2^{2t+2}+2^{2t}+1+2l$ with $l\in
S_{_{\lambda,k}}$, it is easy to  deduce that
\begin{eqnarray*}
n-y_{_{x,k}}-x&=&(2^{2t}+2^{2t-2}+2^{2t-2})(2^{4t+2}+1)-(2^{4t}+1)x\\
&\geq&(2^{2t}+2^{2t-2}+2^{2t-2})(2^{4t+2}+1)-(2^{4t}+1)(2^{2t+2}+2^{2t+1}-3)\\
&=&3\cdot 2^{4t}-9\cdot 2^{2t-1}+3>0.
\end{eqnarray*}
(4.6): If $k=4t+1$, for all $x=2^{2t+2}+2^{2t}+1+2l$ with $l\in
S=[1,2^{2t-1}-2]$, we have
$y_{_{x,k}}=2^kx-(2^{k-2t}+2^{k-2t-2}+l)n$, then
\begin{eqnarray*}
y_{_{x,k}}-x&=&(2^k-1)x-(2^{k-2t}+l)n\\
 &=&2^{4t+1}-3\cdot2^{2t+1}-1-3l\\
&\geq &2^{4t+1}-3\cdot2^{2t+1}-1-3(2^{2t-1}-2)\\
  &=&2^{4t+1}-15\cdot2^{2t-1}+5>0,\\
  &&\\
n-y_{_{x,k}}-x&=& (2^{2t}+l+1)n-(2^k+1)x\\
&=& 2^{4t+1}-2^{2t+1}-l\\
&\geq& 2^{4t+1}-2^{2t+1}-(2^{2t-1}-2)\\
&=&2^{4t+1}-5\cdot2^{2t+1}+2>0.
\end{eqnarray*}

Summarizing the six cases (4.1)-(4.6), one can derive that (4)
holds.

(5)  For $i=1\hbox{~or~} 3$, it is easy to check the following:

$(2^{2t+1}+1)2^{2t+1}=2^{m}+2^{t+1}\equiv 2^{2t+1}-1  $;

$(2^{2t+2}-i)2^{6t+2}=(2^{2t+2}-i)2^{2t}\cdot
2^{4t+2}\equiv2^{2t}i+1$;

$(2^{2t+2}+i)2^{2t}=2^{m}+2^{2t}i\equiv 2^{2t}i-1 $;

$(2^{2t+2}+2^{2t}-1)2^{6t+4}=(2^{2t+2}+2^{2t}-1)2^{2t}\cdot 2^{4t+2}
\equiv 2^{2t+2}+5$;

$(2^{2t+2}+2^{2t}+1)2^{2t+2}=2^{m+2}+2^{m}+2^{2t+2}\equiv
2^{t+2}-5$;

$(2^{2t+2}+2^{2t+1}-1)2^{6t+3}=(3\cdot2^{2t+1}-1)2^{2t+1}\cdot
2^{4t+2}\equiv 2^{2t+1}+3$;

$(2^{2t+2}+2^{2t+1}+i)2^{2t+1}=2^{m+1}+2^{m}+2^{2t+1}i\equiv
2^{2t+1}i-3 $.

Combining the definition of a cyclotomic coset, these congruence
expressions above imply that there exists some odd integer
$y\in[1,x-1]$ such that $y\in C_{x}$ for $x$ in (5), hence $x$ is
not a coset leader.
\end{proof}

\subsection*{Appendix E: The proof of Lemma 4.2}

\begin{proof}
 From $\delta_{1}=\frac{n}{5}=2^{4t}-2^{4t-2}+\cdots-2^{2}+1$,
$\delta_{2}=2^{4t-1}+\frac{2^{4t}-1}{5}=2^{4t-1}+3(2^{4t-4}+2^{4t-8}+\cdots+2^{4}+1)$,
we can deduce $\delta_{1}$, $\delta_{2}$, $\delta_{3}$, $\delta_{4}$
and $\delta_{5}$ are all odd. It is easy to derive
 $C_{\delta_{1}}=\{ \delta_{1}, 2\delta_{1}, 4\delta_{1}, 3\delta_{1}
 \}$, which implies that $|C_{\delta_{1}}|=4$ and $\delta_{1}$ is a coset leader.
 Hence,  we will   prove that $\delta_{2}$, $\delta_{3}$, $\delta_{4}$ and $\delta_{5}$
are all coset leaders.

{\it Step 1:} We show $y_{_{\delta_2,k}}-\delta_2\geq 0$ and
$n-y_{_{\delta_2,k}}-\delta_2\geq 0$ in three cases:

(1.1): If $k=0,1,2$, it is clear that

$y_{_{\delta_2,k}}=2^k\delta_2>\delta_2$,
$n-y_{_{\delta_2,k}}-\delta_2=n-(2^k+1)\delta_2 \geq 2^{4t-1}+2>0.$

(1.2): If $k=3,4$, we have
\begin {eqnarray*}
y_{_{\delta_2,k}}&=&2^k\delta_2-2^{k-3}n,\\
5(y_{_{\delta_2,k}}-\delta_2)&=&2^{4t+k}-13\cdot2^{k-3}-7\cdot2^{4t-1}+1\\
&\geq&2^{4t+3}-13\cdot2^{3-3}-7\cdot2^{4t-1}+1=9\cdot2^{4t-1}-12>0,\\
5(n-y_{_{\delta_2,k}}-\delta_2)&=&33\cdot2^{4t-1}+13\cdot2^{k-3}-2^{4t+k}+6\\
&\geq&33\cdot2^{4t-1}+13\cdot2^{4-3}-2^{4t+4}+6=2^{4t-1}+32>0.
 \end {eqnarray*}

(1.3): If $5\leq k\leq 4t+1$, the proof can be split  into following
two subcases.

\quad (1.3.1): When  $k\equiv 1 \bmod 4$ (i.e., $k=5, 9,\cdots,
4t+1$), we have
\begin {eqnarray*}
y_{_{\delta_2,k}}&=&2^k\delta_2-(2^{k-3}+\frac{2^{k-2}-3}{5})n,\\
5(y_{_{\delta_2,k}}-\delta_2)&=&17\cdot2^{4t-1}-2^k-7\cdot2^{k-3}+4\\
&\geq&17\cdot2^{4t-1}-2^{4t+1}-7\cdot2^{(4t+1)-3}+4=
19\cdot2^{4t-2}+4>0,\\
5(n-y_{_{\delta_2,k}}-\delta_2)&=&9\cdot2^{4t-1}\!\!+2^k+7\cdot2^{k-3}+3\\
&\geq&9\cdot2^{4t-1}\!\!+2^5+7\cdot2^{5-3}+3= 9\cdot2^{4t-1}+63>0.
 \end {eqnarray*}
\quad (1.3.2): When  $k\equiv \gamma \bmod 4$ ($\gamma=2,3,4$),
i.e., $k=4\times1+\gamma, 4\times2+\gamma,\cdots, 4(t-1)+\gamma$, it
then follows that
\begin {eqnarray*}
y_{_{\delta_2,k}}&=&2^k\delta_2-(2^{k-3}+\frac{2^{k-2}-2^{\gamma-2}}{5})n,\\
5(y_{_{\delta_2,k}}-\delta_2)&=&(2^{\gamma+1}-7)2^{4t-1}+1+2^{\gamma-2}-15\cdot2^{k-3}\\
&\geq&(2^{\gamma+1}-7)2^{4t-1}+1+2^{\gamma-2}-15\cdot2^{[4(t-1)+\gamma]-3}\\
&=&113\cdot 2^{4t+\gamma-7}+2^{\gamma-1}-7\cdot2^{4t-1}+1\\
&\geq&113\cdot 2^{4t+2-7}+2^{2-1}-7\cdot2^{4t-1}+1=2^{4t-5}+2>0,\\
5(n-y_{_{\delta_2,k}}-\delta_2)&=&(33-2^{\gamma+1})2^{4t-1}+6-2^{\gamma-2}+15\cdot2^{k-3}\\
&\geq&(33-2^{\gamma+1})2^{4t-1}+6-2^{\gamma-2}+15\cdot2^{(4\times1+\gamma)-3}\\
&=&33\cdot2^{4t-1}+6-(2^{4t+2}-119)\cdot2^{\gamma-2}\\
&\geq&33\cdot2^{4t-1}+6-(2^{4t+2}-119)\cdot2^{4-2}=2^{4t-1}+482>0.
\end {eqnarray*}

From the three cases above, we then  conclude that $\delta_2$ is a
coset leader.

{\it Step 2:} Now, we show $y_{_{\delta_3,k}}-\delta_3$ and
$n-y_{_{\delta_3,k}}-\delta_3$ for $0\leq k\leq 4t+1$ as follows:

(2.1): If $k=1,2$, it is clear that

$y_{_{\delta_3,k}}=2^k\delta_3>\delta_3$,
$n-y_{_{\delta_3,k}}-\delta_3=n-(2^k+1)\delta_3\geq 2^{4t-1}+32>0$.

(2.2): If $k=3,4$, we obtain
 \begin {eqnarray*}
y_{_{\delta_3,k}}&=&2^k\delta_3-2^{k-3}n,\\
5(y_{_{\delta_3,k}}-\delta_3)&=&2^{4t+k}-253\cdot2^{k-3}-7\cdot2^{4t-1}+31\\
&\geq&2^{4t+3}-253\cdot2^{3-3}-7\cdot2^{4t-1}+31\\
& =&9\cdot 2^{4t-1}-222\geq 930,\\
5(n-y_{_{\delta_3,k}}-\delta_3)&=&33\cdot2^{4t-1}+1+253\cdot2^{k-3}-2^{4t+k}+35\\
&\geq&33\cdot2^{4t-1}+1+253\cdot2^{4-3}-2^{4t+4}+35\\
 &=& 2^{4t-1}  +542>0.
\end {eqnarray*}

(2.3): If $5\leq k\leq 4t-3$, the discussion can be given in two
subcases.

\quad  (2.3.1): When  $k\equiv 1\bmod4$ (i.e., $k=5, 9,\cdots,
4t-3$), we can get
\begin
{eqnarray*}
y_{_{\delta_3,k}}&=&2^k\delta_3-(2^{k-3}+\frac{2^{k-2}-3}{5})n,\\
5(y_{_{\delta_3,k}}-\delta_3)&=&17\cdot2^{4t-1}+4-255\cdot2^{k-3}+30\\
&\geq&17\cdot2^{4t-1}+4-255\cdot2^{(4t-3)-3}+30\\
&=& 289\cdot2^{4t-6}+34>0,\\
5(n-y_{_{\delta_3,k}}-\delta_3)&=&9\cdot2^{4t-1}+255\cdot2^{k-3}+33>0.
\end {eqnarray*}

\quad  (2.3.2): When $t\geq3$ and  $k\equiv \gamma\bmod4$
($\gamma=2,3,4$), i.e., $k=4\times1+\gamma, 4\times2+\gamma,\cdots,
4(t-2)+\gamma$, we obtain
 \begin {eqnarray*}
y_{_{\delta_3,k}}&=&2^k\delta_3-(2^{k-3}+\frac{2^{k-2}-2^{\gamma-2}}{5})n,\\
5(y_{_{\delta_3,k}}\!\!\!-\delta_3)&=&(2^{\gamma+1}-7)\cdot2^{4t-1}+2^{\gamma-2}+1-255\cdot2^{k-3}+30\\
&\geq&(2^{\gamma+1}-7)\cdot2^{4t-1}+2^{\gamma-2}+1-255\cdot2^{[4(t-2)+\gamma]-3}+30\\
&=&1793\cdot2^{4t+\gamma-11}+2^{\gamma}-7\cdot2^{4t-1}+1\\
&\geq&1793\cdot2^{4t+2-11}+2^{2}-7\cdot2^{4t-1}+1\\
&=&2^{4t-9}+32>0,\\
5(n-y_{_{\delta_3,k}}\!\!\!-\delta_3)&=&(33-2^{\gamma+1})2^{4t-1}-2^{\gamma-2}+1+255\cdot2^{k-3}+35\\
&\geq&(33-2^{\gamma+1})2^{4t-1}-2^{\gamma-2}+1+255\cdot2^{(4\times1+\gamma)-3}+35\\
&=&33\cdot2^{4t-1}+1-(2^{4t+2}-2039)\cdot2^{\gamma-2}+35\\
&\geq&33\cdot2^{4t-1}+1-(2^{4t+2}-2039)\cdot2^{4-2}+35\\
&=&2^{4t-1}+8192>0.
\end {eqnarray*}
(2.4): If $4t-2\leq k\leq 4t+1$, the discussion can be given in two
subcases according to different $\gamma$.

\quad  (2.4.1): When  $k=4(t-1)+\gamma$ ($\gamma=2,3,4$), we can
deduce
 \begin {eqnarray*}
y_{_{\delta_3,k}}&=&2^k\delta_3-(2^{k-3}+\frac{2^{k-2}-2^{\gamma-2}}{5}-1)n,\\
5(y_{_{\delta_3,k}}-\delta_3)&=&(2^{\gamma+1}+33)\cdot2^{4t-1}+2^{\gamma-2}+1-255\cdot2^{k-3}+35\\
&=&33\cdot2^{4t}+1-(127\cdot2^{4t+\gamma-7}-2^{\gamma-2})+35\\
&\geq&33\cdot2^{4t}+1-(127\cdot2^{4t+4-7}-2^{4-2})+35\\
&=&5(2^{4t-3}+8)>0,\\
5(n-y_{_{\delta_3,k}}-\delta_3\!\!\!)&=&\!\!\!255\cdot2^{k-3}-(2^{\gamma+1}+7)2^{4t-1}-2^{\gamma-2}+31\\
&=&\!\!\!127\cdot2^{4t+\gamma-7}-2^{\gamma-2}-7\cdot2^{4t-1}+31\\
 &\geq&127\cdot2^{4t+2-7}-2^{2-2}-7\cdot2^{4t-1}+31\\
 &=&15(2^{4t-5}+2)>0.
\end {eqnarray*}

 \quad  (2.4.2): When $k=4t+1$, we have
{  \begin {eqnarray*}
 y_{_{\delta_3,k}}&=&2^k\delta_3-(2^{k-3}+\frac{2^{k-2}-3}{5}-3)n,\\
5(y_{_{\delta_3,k}}-\delta_3)&=&137\cdot2^{4t-1}-255\cdot2^{k-3}+49= 19\cdot2^{4t-2}+49>0,\\
5(n-y_{_{\delta_3,k}}-\delta_3)&=&255\cdot2^{k-3}-111\cdot2^{4t-1}
+18= 33\cdot2^{4t-2}+18>0.
\end {eqnarray*}}

It is easy to know $\delta_1$ is a coset leader,  by  Steps 1  and 2
above, we have shown $\delta_{2}$ and $\delta_{3}$ are coset leaders
respectively. Similar to    Steps 1  and 2, we can also attain
$\delta_4$ and $\delta_5$ are both  coset leaders, the detailed
proofs are omitted.
\end{proof}
\subsection*{Appendix F: The proof of Theorem 4.7}

\begin{proof} Let $\mathcal{C}(n,2,\delta,1)$, $T_{\delta}$ and $S_{\delta}$ be given similar to the proof of Theorem 3.7, then
   $\mathcal{C}(n,2,\delta,1)$ has dimension $k=n-|T_{\delta}|=n-\sum\limits_{i\in S_{\delta}}|C_{i}|$.

(i): When $2^{2t+1}+3\leq \delta\leq 2^{2t+2}-5$,  from Theorem 4.1,
we check that $S_{\delta}=\{x|x~\hbox{is odd and}~x\in
[1,\delta-1]\setminus\{2^{2t+1}+1\}\},$ thus
$|S_{\delta}|=\frac{\delta-1}{2}-1$. By Lemma 4.6, all cyclotomic
cosets in $T_{\delta}$ have cardinality $2m$,  it   follows that $k=
n-2m\cdot (\frac{\delta-1}{2}-1)=n-m\delta+3m $ and $d\geq \delta$.

(ii): Similar to (i),  when $2^{2t+2}+5\leq \delta\leq
2^{2t+2}+2^{2t}-3$, we can derive from Theorem 4.1 and Lemma 4.6
that
$$S_{\delta}=\{x|x~\hbox{is odd and}~x\in
[1,\delta-1]\setminus\{2^{2t+1}+1,2^{2t+2}\pm1,2^{2t+2}\pm3\}\},$$
it follows that $d\geq \delta$  and $k=n-\sum\limits_{i\in
S_{\delta}}|C_{i}|=n-2m\cdot (\frac{\delta-1}{2}-5)=n-m\delta+11m.$

 (iii): Similar to (i), when $2^{2t+2}+2^{2t}+3\leq \delta \leq a-3$,  we can derive from Theorem 4.1 and Lemma 4.6
that $$S_{\delta}=\{x|x~\hbox{is odd and}~x\in
[1,\delta-1]\setminus\{2^{2t+1}+1,2^{2t+2}\pm1,2^{2t+2}\pm3,2^{2t+2}+2^{2t}\pm1\}\},$$
thus  we have $d\geq \delta$ and
   $k=n-\sum\limits_{i\in S_{\delta}}|C_{i}|=n-2m\cdot
(\frac{\delta-1}{2}-7)=n-m\delta+15m.$

(iv):  Similar to (i), when $a-1 \leq \delta\leq a+5$, we can infer
from Theorem 4.5  and Lemma 4.6 that$$S_{\delta}=\{x|x~\hbox{is odd
and}~x\in
[1,a-3]\setminus\{2^{2t+1}+1,2^{2t+2}\pm1,2^{2t+2}\pm3,2^{2t+2}+2^{2t}\pm1\}\},$$
 thus we get $d\geq
a+5$ and   $k=n-2m\cdot (2^{2t+1}+2^{2t}-8)=n-am+16m.$

(v): When $\delta_{i+1}+2\leq\delta\leq \delta_{i}(i=1,2,3,4)$, from
Theorem 4.5, we get that
 $T_{\delta}=\bigcup\limits_{i\in
S_{\delta}}C_{i}=\{1,2,\cdots,n-1\}\setminus\bigcup
\limits_{j=1}^{i}C_{\delta_j}$.

We have known  $|C_{\delta_{1}}|=4$ from the proof of Lemma 4.2,
according to Lemma 4.6, $C_{\delta_{i}}(i=2,3,4,5)$ has cardinality
$2m$, it then follows that
$$k=n-|T_{\delta}|=n-[n-1-2m(i-1)-4]=2m(i-1)+5.$$

On the other hand, there exist $\delta_{i}-1$ consecutive integers
in $T_{\delta}$, the minimum distance $d\geq \delta_{i} $.

(vi): When $\delta_{1}+2\leq\delta\leq n$, we check from Theorem 4.5
that $T_{\delta}=\bigcup\limits_{i\in
S_{\delta}}C_{i}=\{1,2,\cdots,n-1\}$, then
$k=n-|T_{\delta}|=n-(n-1)=1.$ It follows from  Singleton bound that
the minimum distance  $d=n$.

 (2) On the basis of  the proof of (1),   (2) can be
 easily obtained.
 \end{proof}





\begin{thebibliography}{00}
\bibitem {Hocquenghem} A. Hocquenghem, Codes correcteurs d'erreurs, Chiffres(Paris) 2 (1959)147-156.
\bibitem {Bose1} R. C. Bose, D. K. Ray-Chaudhuri, On a class of error
 correcting binary group codes, Inform. and
Control 3 (1960) 68-79.
 \bibitem {Bose2} R. C. Bose, D. K. Ray-Chaudhuri, Further results on
error correcting binary group codes, Inform. and Control 3 (1960)
279-290.

\bibitem{Char} P. Charpin. Open problems on cyclic codes, in: V. S.
Pless, W. C. Human, R. A. Brualdi (Eds.), Handbook of Coding Theory,
Part 1: Algebraic Coding, chap. 11, Elsevier, Amsterdam, The
Netherlands, 1998.

\bibitem{Mann} H. B. Mann, On the number of information symbols in Bose-Chaudhuri Codes,
 Inform. and control 5(2), (1962)153-162.

\bibitem{Yue} D. Yue, Z. Hu, On the
dimension and minimum distance of BCH codes over GF(q), Journal of
Electronics(China) 13(3)  (1996) 216-221.

\bibitem{Aly} S. A. Aly, A.
Klappenecker,   P. K. Sarvepalli,  On quantum and classical BCH
codes,  IEEE Trans. Inform. Theory  53(3) (2007)1183-1188.

\bibitem {Ding1} C. Ding, Codes from Difference Sets, World Scientific, Singapore, 2015.

\bibitem {Ding2} C. Ding, Parameters of several classes of BCH codes, IEEE
Trans. Inform. Theory  61(10)  (2015)5322-5330.

\bibitem {Ding3} C. Ding, X. Du, Z. Zhou, The Bose and minimum distance of a
class of BCH codes, IEEE Trans. Inform. Theory  61(5)
(2015)2351-2356.

\bibitem {Ding4} C. Ding, C. Fan, Z. Zhou, The dimension and
minimum distance of two classes of primitive BCH codes, Finite
Fields Appl.,  45 (2017)237-263. See also arXiv:1603.07007v1.

\bibitem {Ding5} C. Ding, BCH codes in the past 55 years,
 The 7th international Workshop on Finite Fields  Applications,  Tianjin, China, 2016.

\bibitem {Ding6} S. Li, C. Li, C. Ding, H. Liu. Two families of LCD
BCH codes, IEEE Trans. Inform. Theory. DOI:10.1109/TIT.2017.2723363.
See also arXiv:1608.02670v1.

\bibitem {Ding7} C. Li, C. Ding, S. Li, LCD cyclic codes over finite
fields,  IEEE Trans. Inform. Theory  63(7) (2017)4344-4356. See also
arXiv:1608.02170v1.

\bibitem {Ding8} H. Liu, C. Ding, C. Li,  Dimensions of three types of
BCH codes over $\mathbb{F}_q$, Discrete Math.,  340 (2017)1910-1927.
See also arXiv:1608.03027v1.

\bibitem {Carlet} C. Carlet, S. Guilley,  Complementary dual codes for
counter-measures to side-channel attacks,  In: E. R. Pinto et al.
(eds.), Coding Theory and Applications, CIM Series in Mathematical
Sciences,  Springer Verlag 3 (2014)97-105. Advances in matematics of
communications  10(1)  (2016)131-150.

\bibitem{mac} F. J. Macwilliams, N. J. A. Sloane, The Theory of Error-Correcting Codes,
 Amsterdam, the Netherlands, North-Holland, 1977.

\bibitem{huf} W. C. Huffman, V. Pless, Fundamentals of Error-Correcting
Codes, Cambridge University Press, Cambridge, 2003.

\bibitem{Rao} Y. Rao, R. Li, L. Lu, G. Chen, F. Zuo, On binary LCD cyclic
codes, Procedia Computer Science  107  (2017)778-783.

\bibitem {Shixin}B. Pang, S. Zhu, Z. Sun, On LCD negacyclic codes over finite fieles, J. Syst. Sci. Complex.
DOI: 10.1007/s11424-017-6301-7. See also arXiv:1610.08206v1.

\end{thebibliography}
\end{document}